

\documentclass%
[a4paper,USenglish,
cleveref,autoref,thm-restate]%
{lipics-v2021}


\pdfoutput=1 

\hideLIPIcs 



\usepackage{xspace}
\usepackage{tikz}
\usepackage{svg} 
\usepackage{upgreek} 

\usepackage{enumitem} 
\setlist[itemize]{%
  label=\textbullet, 
  parsep=1ex} 

%
%

\long\def\Cyril#1{}
\long\def\Arnaud#1{}
\long\def\NicoB#1{}
\long\def\NicoH#1{}


\sloppy
\parskip=.7ex
\graphicspath{{./figures/}} 
\pdfsuppresswarningpagegroup=1

\nolinenumbers


\pdfsuppresswarningpagegroup=1

\def\eta{\upeta} 

\def\lme{\Lambda(\eta)} 
\def\elli{\ell_{\infty}}
\def\gme{\gamma(\eta)}

\newtheorem{fact}{Fact}
\def\myparagraph#1{\medskip\noindent\textbf{#1}}

\newcommand{\FTP}{\textsc{Freeze-Tag Problem}\xspace}
\newcommand{\eps}{\varepsilon}
\newcommand{\sfP}{\mathsf{P}}
\newcommand{\sfNP}{\mathsf{NP}}

\newcommand{\bbR}{\mathbb{R}}
\newcommand{\bbN}{\mathbb{N}}
\newcommand{\set}[1]{\left\{{#1}\right\}}
\newcommand{\range}[2]{\set{{#1},\dots,{#2}}}
\newcommand{\pare}[1]{\left({#1}\right)}
\newcommand{\floor}[1]{\left\lfloor{#1}\right\rfloor}
\newcommand{\ceil}[1]{\left\lceil{#1}\right\rceil}

\def\heapstr{\textup{\texttt{Heap-Strategy}}\xspace}
\def\splitconestr{\textup{\texttt{Split-Cone-Strategy}}\xspace}
\def\linearstr{\textup{\texttt{Linear-Split-Strategy}}\xspace}
\def\greedystr{\textup{\texttt{Greedy-Strategy}}\xspace}

\def\arc{\mathsf{arc}}
\def\cone{\mathsf{cone}}

%

\let\oldthebibliography\thebibliography
\renewcommand\thebibliography[1]{%
  \oldthebibliography{#1}%
  \setlength{\labelsep}{2mm}%
  \setlength{\labelwidth}{8mm}%
}


\title{Freeze-Tag in $L_1$ has Wake-up Time Five}

\titlerunning{Freeze-Tag in \boldmath{$L_1$} has Wake-up Time Five} 

\author{Nicolas Bonichon}%
{LaBRI, University of Bordeaux, CNRS, Bordeaux INP, France}%
{bonichon@labri.fr}%
{}{}

\author{Arnaud Casteigts}%
{LaBRI, University of Bordeaux, CNRS, Bordeaux INP, France \and CS Department, University of Geneva, Switzerland}%
{acasteig@labri.fr}%
{}{}

\author{Cyril Gavoille}%
{LaBRI, University of Bordeaux, CNRS, Bordeaux INP, France}%
{gavoille@labri.fr}%
{https://orcid.org/0000-0003-3671-8607}{}

\author{Nicolas Hanusse}%
{LaBRI, University of Bordeaux, CNRS, Bordeaux INP, France}%
{hanusse@labri.fr}%
{}{}


\Copyright{Nicolas Bonichon, Arnaud Casteigts, Cyril Gavoille, and
  Nicolas Hanusse}

\authorrunning{N. Bonichon, A. Casteigts, C. Gavoille, N. Hanusse}

\ccsdesc{Theory of computation}
\ccsdesc{Design and analysis of algorithms}

\keywords{freeze-tag problem, metric, algorithm}


\begin{document}
\maketitle

\Cyril{$\ell_p$-norm = c'est le nom de la norme, par exemple
  l'occurrence $\ell_p$, alors que $L_p$ space = c'est le nom d'un
  espace muni de la norme $\ell_p$. Par exemple, "dans $L1$" veut dire
  "dans $(R^2,\ell_1)$". Par contre, "dans $\ell_1$" ne veut rien dire
  a priori.}

\Cyril{Je trouve que wake-up time est un poils plus clair que
  makespan. Ceci dit, j'aime bien makespan aussi. Dans la mesure où
  l'on parle de wake-up tree, cela devient lourding de dire par
  exemple: "There is a wake-up tree of wake-up time ... ". Bon, on
  parle de wake-up time dans titre, mais c'est ok, les deux sont en
  fait synonymes. C'est juste qu'il faut être cohérent dans le
  statements: toujours makespan ou toujours wake-up time, sauf le
  titre :)). Attention aussi quand on dit que le robot X réveille Y
  "in time $t$". C'est pas très clair: on pourrait confondre avec le
  temps $O(n)$ pour construire le wake-up tree ... On devrait toujours
  parler de makespan, de wake-up time ou plus simplement ici de
  longueur d'arête.}

\Cyril{On ne met pas de refs [...] sous la forme de \cite{} dans un
  abstract, car sinon l'abstract n'est plus self-contains ce qui est
  mauvais. On peut mettre cependant une forme développée: [title -
  Proceeding of ... - 1984]. Eviter aussi les macros dans l'abstract,
  source d'erreur lorsqu'on fait les copier/coller.}

\begin{abstract}
  The \textsc{Freeze-Tag Problem}, introduced in Arkin et
  al. (SODA'02) consists of waking up a swarm of $n$ robots, starting
  from a single active robot. In the basic geometric version, every
  robot is given coordinates in the plane. As soon as a robot is
  awakened, it can move towards inactive robots to wake them up. The
  goal is to minimize the wake-up time of the last robot, the
  \emph{makespan}.

  Despite significant progress on the computational complexity of this
  problem and on approximation algorithms, the characterization of
  exact bounds on the makespan remains one of the main open
  questions. In this paper, we settle this question for the
  $\ell_1$-norm, showing that a makespan of at most $5r$ can always be
  achieved, where $r$ is the maximum distance between the initial
  active robot and any sleeping robot.
  Moreover, a schedule achieving a makespan of at most $5r$ can be
  computed in optimal time $O(n)$. Both bounds, the time and the
  makespan are optimal. This implies a new upper bound of
  $5\sqrt{2}r \approx 7.07r$ on the makespan in the $\ell_2$-norm,
  improving the best known bound so far
  $(5+2\sqrt{2}+\sqrt{5})r \approx 10.06r$.

\end{abstract}


\section{Introduction}
\label{sec:intro}

The \FTP (FTP) is an optimization problem that consists of activating
as fast as possible a swarm of robots represented by points in some
metric space (in general, not necessarily euclidean). Active (or awake) robots can move towards any point of
the space at a constant speed, whereas inactive robots are asleep (or
frozen) and can be activated only by a robot moving to
their positions. Initially, there are $n$ sleeping robots and one awake
robot. The goal is to determine a schedule whose makespan is
minimized; that is, the time until all the robots have been activated
is minimized. FTP has application not only in \emph{robotics},
e.g. with group formation, searching, and recruitment, but also in
\emph{network design}, e.g. with broadcast and IP multicast
problems. See~\cite{ABGHM03,ABFMS06,KLS05} and references therein.


FTP is $\sfNP$-Hard in high dimension metrics like centroid
metrics~\cite{ABFMS06} (based on weighted star $n$-vertex graphs) or
unweighted graph metrics with a robot per node~\cite{ABGHM03}. Many
subsequent works have extended this hardness result to constant
dimensional metric spaces, including the Euclidian ones. A serie of
papers~\cite{AAY17,Johnson17,PdOS23} proves that FTP is actually
$\sfNP$-Hard in $(\bbR^3,\ell_p)$, for every $p\ge 1$, i.e., in 3D
with any $\ell_p$-norm. For 2D spaces, this remains $\sfNP$-Hard
for $(\bbR^2,\ell_2)$, leaving open the question for other
norms~\cite{AAY17}. It is believed,
see~\cite[Conjecture~28]{ABFMS06}, that FTP remains $\sfNP$-Hard
for $(\bbR^2,\ell_1)$.


Several approximation algorithms and heuristics have
been designed. In their seminal work, \cite{ABFMS06} developed a
$14$-approximation for centroid metrics, and a PTAS for
$(\bbR^d,\ell_p)$. \cite{ABGHM03} presented a $O(1)$-approximation
for unweighted graph metric with one robot per node, and a greedy
strategy analyzed in~\cite{SABM04} gives a
$O(\log^{1-1/d}{n})$-approximation in $(\bbR^d,\ell_p)$. For general
metrics, the best approximation ratio is
$O(\sqrt{\log{n}}\,)$~\cite{KLS05}.  For heuristics, several
experiments results can be found in
\cite{Bucatanschi04,BHHK07,Keshavarz16}.
See~\cite{AAS10,MB14,HNP06,BW20} for generalizations and variants of
the problem, including the important online version.


As observed by~\cite{ABFMS02}, the FTP can be rephrased as finding a
rooted spanning tree on a set of points with minimum weighted-depth,
where the root node (corresponding to the awake robot) has one child
and all the others nodes (corresponding to the $n$ sleeping robots) have
at most two children (see Figure~\ref{fig:exL2}). Each edge has a \emph{length}, a non-negative
real, representing the distance in the metric space between its
endpoints. Such a tree is called a \emph{wake-up tree}, and its
weighted-depth is called \emph{makespan}.

\begin{figure}[htbp!]
  \centering%
  \includesvg[width=0.8\textwidth]{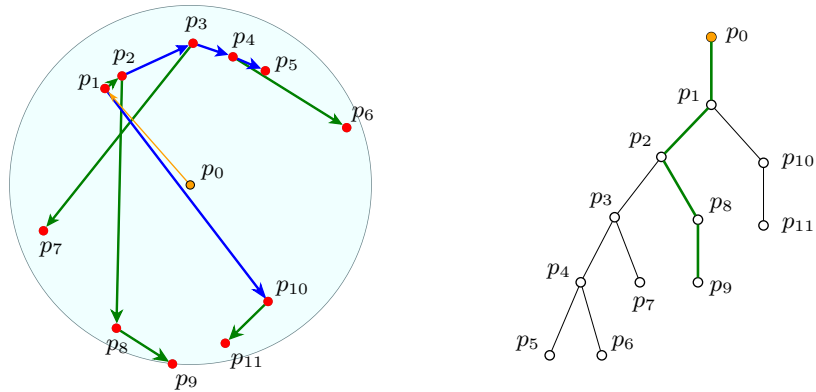}%
  \caption{Example of a (here, euclidean) instance of FTP (on the left). The robot at
    $p_0$ must wake up $n = 11$ sleeping robots at $p_1,\dots,p_n$. In
    this example, positions are normalized in the unit $\ell_2$-disk,
    $p_0$ being at the center. The optimal solution, depicted by
    arrows, can be represented as a binary weighted tree (right).
    The makespan is the length of the longest (weighted) branch in that tree, here $2.594$,
    corresponding to the path $(p_0,p_1,p_2,p_8,p_9)$.
    Observe that, even if the sleeping robots are in a convex
    configuration, the optimal solution may have multiple edge
    crossings. }
  \label{fig:exL2}
\end{figure}


Clearly, FTP is related to the \textsc{Traveling Salesperson Problem},
in its \emph{metric} version (hereafter, simply TSP).  Indeed, the
Path-TSP, a generalization of TSP in which one ask for finding a
minimum path length spanning a point set from given start and end
points, provides a valid wake-up tree and thus a solution for FTP. The
link with TSP is reenforced by the recent approximated reduction of
Path-TSP to classical TSP~\cite{TVZ20}: if there is an
$\alpha$-approximation for TSP, then, for every $\eps>0$, there is a
$(\alpha+\eps)$-approximation for Path-TSP. Recently~\cite{KKO21}
showed that there exists $\alpha < 3/2 - 10^{-36}$ for TSP, a
qualitative breakthrough since the 1976 Christofides-Serdyukov
algorithm.

This being said, there are significant differences between TSP and FTP, the
latter being considered as a cooperative TSP version where awaked
robots can help in visiting unvisited cities. First, from an
algorithmic point of view, the best lower bound on the approximation
factor are $5/3 - \eps$ for FTP~\cite{ABFMS06}, and only
$123/122 - \eps$ for TSP~\cite{Karpinski15} (assuming
$\sfP \neq \sfNP$). On the other side, the time complexity for PTAS in
$(\bbR^d,\ell_p)$ is $n(\log{n})^{(d/\eps)^{O(d)}}$ for
TSP~\cite{Arora98,RS98,Mitchell99} vs.
$O(n\log{n}) + 2^{(d/\eps)^{O(d)}}$ for FTP~\cite{ABFMS06}, subject to
$\eps \le \eps_d$, where $\eps_d$ depends on the
number of dimensions~$d$. Second, and perhaps more fundamentally, it is well
known that, even in the unit ball in
$(\bbR^d,\ell_p)$, the shortest spanning path (or tour) has unbounded
length in the worst-case (it depends on $n$), whereas the makespan 
for FTP is bounded by an absolute constant (that does not depend on $n$). For TSP, the
worst-case length is $\Theta_d(n^{1-1/d})$~\cite{Few55}, whereas for
FTP the worst-case optimal makespan is no more than some
constant $\rho_d$, so independent of $n$~\cite{ABFMS06}.

The constant $\rho_d$ plays an important role for PTAS and
approximation algorithms. For instance, it drives the condition
``$\eps \le \eps_d$'' in the grid refinement approach
of~\cite{ABFMS06}, where local solutions in radius-$(1/\eps)$ balls
have to be constructed. For $(\bbR^2,\ell_2)$, the constant $\rho_2$
coming from the approach of~\cite{ABFMS06} has been proved to be at
most~$57$ by~\cite{YBMK15}. The latter authors have also constructed
in time $O(n)$ a wake-up tree of makespan at most
$5 + 2\sqrt{2} + \sqrt{5} \approx 10.06$, which is the best known
upper bound for~$\rho_2$.

\subsection*{Our contributions}



In this paper, we concentrate our attention to the plane $\bbR^2$.
Given a norm $\eta$, the \emph{unit disk w.r.t. $\eta$}, or the unit
$\eta$-disk for short, is the normed linear subspace of
$(\bbR^2,\eta)$ induced by all the points at distance at most one from
the origin, where distances are measured according to $\eta$, the
distance between $u$ and $v$ being $\eta(v-u)$. The unit $\eta$-disk
can be an arbitrarily convex body that is symmetric about the origin.
Note that the unit $\ell_2$-disk\footnote{For convenience, and to
  avoid extra notation, we use the same ``unit $\eta$-disk''
  terminology to denote the normed subspace and, like here, its
  support, that is the set of all points of norm at most~$1$ (the
  disk).}
is an usual disk 
whereas the unit $\ell_1$-disk is a rotated $45$ degrees
square. 

Our main contribution is the following. 

\begin{restatable}{theorem}{TheoremMain}\label{th:main}
  A robot at the origin can wake up any set of $n$ sleeping robots
  in the unit $\ell_1$-disk with a makespan of at most~$5$. The
  wake-up tree can be constructed in $O(n)$ time.
\end{restatable}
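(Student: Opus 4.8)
The plan is to exploit the piecewise-linear structure of the $\ell_1$ norm. First I would pass to coordinates $u=x+y$, $v=x-y$, a linear isometry from $(\bbR^2,\ell_1)$ onto $(\bbR^2,\elli)$ that turns the unit disk into the axis-aligned square $[-1,1]^2$ and, restricted to a single quadrant of the original disk, writes the $\ell_1$-distance as $\max(|\Delta u|,|\Delta v|)$; thus each of the four quadrant-cones becomes a triangular $\elli$-cone with apex at the origin. Two elementary properties drive everything. Property (i): the unit $\ell_1$-circle (the diamond boundary) has $\ell_1$ arc-length exactly $8$, and for any two boundary points the straight $\ell_1$-chord is at most the boundary arc between them, since a segment minimizes length in a normed space. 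Property (ii): for a boundary point $b$ and any point $p$ on the same ray from the origin, $\|b-p\|_1=1-\|p\|_1$, and by the reverse triangle inequality no boundary point is closer to $p$ than its own radial projection.

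The conceptual heart, combining the \splitconestr with the \linearstr, is a boundary sweep that already realises the bound $5$ and matches the lower bound. The single robot walks to an arbitrary boundary robot $p_0$, paying $\|p_0\|_1\le 1$ and waking it; now two robots sit at $p_0$ and split, one sweeping the boundary clockwise, the other counter-clockwise, each visiting the robots it meets in arc-order. By property (i) the cost of a monotone sweep telescopes to at most the arc it spans, and since the full circle has length $8$ every boundary robot lies within arc $4$ of $p_0$ on one side or the other; hence it is woken by time $1+4=5$. This is tight: the instance consisting of the origin together with the four vertices $(\pm1,0),(0,\pm1)$, which are pairwise at $\ell_1$-distance $2$, forces any wake-up tree to contain a root-to-leaf path of length $1+2+2=5$.

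The remaining, and genuinely hard, part is the interior. I would assign each robot to the sweeper covering its angle and, when that sweeper passes the robot's ray at arc-time $1+a$ with $a\le 4$, spawn a child that dives radially inward; by property (ii) the dive costs $1-\|p\|_1$, so $p$ is woken by time $1+a+(1-\|p\|_1)$, which is at most $5$ precisely when $a\le 3+\|p\|_1$. This covers all but a thin lens of \emph{deep} robots lying near the ray antipodal to $p_0$, where $a$ is close to $4$ and $\|p\|_1$ is small; for these the naive radial dive arrives too late, and since the radial projection is the cheapest boundary access to $p$, no purely boundary-based schedule can reach them in time. This is the main obstacle. I would resolve it by not committing both early robots to the boundary: the \splitconestr reserves a branch dispatched toward the central region that recursively wakes the inner disk of radius $\frac{1}{2}$, whose optimal makespan scales to $\frac{5}{2}$, so that deep robots are handled by a recursion seeded near the origin rather than at the end of a long sweep. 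Closing this recursion requires showing that whenever the deep region is nonempty a robot can be routed there early enough, and that the budgets of the sweep branch and the recursive branch never overlap to exceed $5$ on any single path, which is the delicate case analysis the named strategies are designed to discharge.

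Finally, for the running time I would observe that the whole construction needs only, per cone, the robots ordered by angle together with, within a ray, their order by radius; the \heapstr produces the wake-up tree in this order while charging $O(1)$ work to each robot, and a counting/bucket step on the $O(n)$-bounded integer angle-indices avoids a comparison sort, giving total time $O(n)$. Optimality of the time bound is immediate since every robot must be output, and optimality of the makespan is the four-vertex instance above; together these give both halves of the theorem.
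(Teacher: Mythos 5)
Your lower bound, your properties (i)--(ii), and the telescoping bound for a sweep over \emph{boundary} robots are all correct, but the schedule you build does not establish the upper bound of~$5$, and the gap sits exactly where the theorem's difficulty lies. First, quantify your ``thin lens'': a robot at radius $r$ whose ray is at arc-distance $a$ from $p_0$ (take the nearer side, so $a\le 4$) is woken by your sweep-and-dive mechanism at time $1+a+(1-r)=2+a-r$, which exceeds $5$ exactly when $a>3+r$. Near the ray antipodal to $p_0$ this fails for \emph{every} radius $r<1$, so the uncovered region is a full angular sector containing robots of arbitrary radius, not a lens of deep ones. Your patch --- a recursion on the inner disk of radius $1/2$, costing $5/2$ --- covers only $r\le 1/2$; robots with $r\in(1/2,1)$ near the antipodal ray are reached by neither mechanism, and no branch of your tree gets to them by time~$5$. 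Second, the ``spawn a child that dives radially inward'' step is not a legal move: a sweeper acquires companions only by waking robots it actually visits. If the robots assigned to a sweeper are interior (in the extreme case no robot lies on the boundary at all, so even the seed $p_0$ does not exist as described), the sweeper must itself dive in and climb back out, paying $2(1-r)$ per robot on top of the arc, which destroys the telescoping budget. Deciding which early robots sweep, which dive, and which return to seed a central recursion is a resource-allocation problem whose answer depends on the input distribution --- and you explicitly defer it (``the delicate case analysis the named strategies are designed to discharge'').

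That deferred analysis \emph{is} the theorem, not a detail, and the paper's named cone strategies cannot discharge it: \splitconestr{} yields makespan $1+\varphi w$ for a cone of arc-length $w$, hence only $\gme < 3+4\varphi\approx 9.47$ overall; the paper uses it for the linear-time machinery and for large $n$, never to reach~$5$. The paper's proof of the bound~$5$ rests on a different decomposition: the disk is cut into four diameter-$1$ squares and eight triangles, and three lemmas are proved --- at most $5$ robots in a square can be woken from a corner in time $2$ (\cref{lem:S5}); exactly $6$ can be woken and brought back to the origin in time $3$ (\cref{lem:S6}); and \emph{any} number of robots in a triangle can be woken in time $2$ from a corner or from a doubled robot on a side (\cref{lem:T}, a $14$-case induction). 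The algorithm then branches on the number $n_0$ of robots in the densest square ($n_0=1$, $2\le n_0\le 5$, $6\le n_0\le 10$, $n_0\ge 11$), recruiting a team there before fanning out. Your proposal contains no analogue of these lemmas, so the case analysis needed to ``close the recursion'' has no supporting machinery. The $O(n)$ claim inherits the same problem; moreover, bucketing by angle is not worst-case linear, and the paper instead obtains linear time from a heap-based, sort-free construction (\cref{th:time}) combined with brute force below a constant threshold.
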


Obviously, if the awake robot is at distance at most $r$ from all the
sleeping robots, then by scaling the unit disk with their positions, and
by using \cref{th:main}, one can construct a wake-up tree of makespan
of at most $5r$. By a loose argument, this yields a $5$-approximation
$O(n)$ time algorithm for $(\bbR^2,\ell_1)$, since $r$ is a trivial
lower bound on the makespan. As we will see, a similar statement holds
for $(\bbR^2,\elli)$.

Both bounds in \cref{th:main} are optimal: the makespan of~$5$, and
obviously the linear time construction of the wake-up tree. The upper
bound of~$5$ is reached with $n = 4$ sleeping robots at positions
$(\pm 1,0)$ and $(0,\pm 1)$. Indeed, any wake-up tree spanning more
than four points must have (unweighted) height at least~$3$. Then, the
first hop has length~$1$ and the next two hops have
length~$2$, which overall gives a makespan of at least~$5$ for any
wake-up tree. Actually, we will see in \cref{th:gamma4} a
generalization of this argument for any norm $\eta$, leading to an
intriguing open question of matching this lower bound for other norms
(see \cref{conj}).

By a simple argument, \cref{th:main} immediately improves the best
known upper bound for norm $\ell_2$. Indeed (see also \cref{cor:ellp}), by
scaling the $\ell_2$-disk, we can use the construction of
\cref{th:main} to obtain a makespan of $5\sqrt{2} \approx 7.07$ for
the unit $\ell_2$-disk, improving upon the previous $10.06$ upper
bound of~\cite{YBMK15}.





Our second result concerns algorithmic aspects of the FTP.
\cref{th:time} states that there is always a linear time algorithm
that can match the best upper bound for wakening a unit disk. The
result is general enough to hold in any normed linear space
$(\bbR^2,\eta)$, akka Minkowski plane.

To make the statement of \cref{th:time} precise, let us define
$\gamma_n(\eta)$ as the worst-case optimal makespan of a wake-up
tree for any set of $n$ sleeping robots in the unit $\eta$-disk and
rooted at the origin. In other words, $\gamma_n(\eta)$ is the best
possible upper bound of the makespan to wake-up $n$ sleeping robots from
an awake robot placed at the origin, in the unit $\eta$-disk. Finally,
let us introduce the \emph{wake-up ratio} w.r.t. the $\eta$-norm
defined by
\[
  \gme ~=~ \max_{n\in\bbN} \gamma_n(\eta) ~.
\]
Note that the constant $\rho_2$ introduced above is nothing else than
$\gamma(\ell_2)$, the $\ell_2$ wake-up ratio.

\begin{restatable}{theorem}{TheoremLinearTime}\label{th:time}
  Let $\eta$ be any norm and let $\tau>3$ be any real such that
  $\tau\ge \gme$. Knowing $\tau$, one can construct in time $O(n)$ a wake-up
  tree of makespan at most $\gme$ for any set of $n$ points in
  the unit $\eta$-disk and rooted at the origin.
\end{restatable}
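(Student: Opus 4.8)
The plan is to reduce, by scaling, to the canonical situation of the statement: $n$ points in the unit $\eta$-disk with the awake robot at the origin, for which we must produce in linear time a wake-up tree of makespan at most $\gme$. More generally I would prove a scale-invariant recursive claim: for a disk of radius $\rho$ whose center holds one awake robot and that contains a set $S$ of sleeping robots, one can build a wake-up subtree rooted at that center of makespan at most $\gme\cdot\rho$. Specializing to $\rho=1$ yields the theorem, and the point of phrasing it at an arbitrary radius is that the recursion calls itself on disks of radius $\rho/2$.

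The recursion has a brute-force base case and a geometric splitting step. When $|S|$ is below a constant threshold $M_0=M_0(\tau)$, I would compute an optimal wake-up tree for $S$ by exhaustive search; this costs $O(1)$ time and, by the very definition of the wake-up ratio, has makespan at most $\gamma_{|S|}(\eta)\cdot\rho \le \gme\cdot\rho$. When $|S|$ exceeds $M_0$, the idea is to split the budget $\gme\cdot\rho$ into two equal halves: spend at most $\gme\cdot\rho/2$ on a \emph{setup phase} that wakes a constant number of new robots and reassigns $S$, and reserve the other $\gme\cdot\rho/2$ for the recursive calls. Concretely, fix once and for all a covering of the unit $\eta$-disk by a constant number $c=c(\eta)$ of $\eta$-disks of radius $1/2$; rescaling by $\rho$ covers the working disk by $c$ sub-disks of radius $\rho/2$. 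The setup must wake, for each nonempty sub-disk, one robot serving as the awake center of the corresponding recursive call, so that afterwards every point of $S$ lies within distance $\rho/2$ of some awake robot. Each recursive call then returns a subtree of makespan at most $\gme\cdot\rho/2$, so the branch through any point has length at most (setup length)$\,+\,\gme\cdot\rho/2 \le \gme\cdot\rho$.

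The main obstacle is the setup phase, and this is exactly where $\tau>3$ and $\tau\ge\gme$ enter. An awake robot can only be created at the location of a sleeping robot, so I cannot place centers at the geometric centers of the sub-disks; instead I would wake, inside each nonempty sub-disk, a representative sleeping robot and let it play that role, refining the covering (using sub-disks of a fixed smaller fraction of $\rho$, with possibly a two-hop relay) so that the robots it inherits still lie within $\rho/2$ of it. The total length of the walk visiting these $O(1)$ representatives is $O(\rho)$, with a constant governed by $c(\eta)$; the role of the general lower bound of \cref{th:gamma4} together with $\tau>3$ is to guarantee that this $O(\rho)$ setup cost is at most $\gme\cdot\rho/2$, so the budget split closes. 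It is essential that the makespan accounting uses $\gme$ and not $\tau$: the value $\gme$ is unknown to the algorithm and appears only in the analysis, whereas $\tau$ is used solely to fix $M_0$ and the covering parameters that drive the running time.

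Finally I would bound the running time. Since each recursive node partitions its set among its children, the recursion tree has total size $O(n)$; the danger is the usual $\log n$ factor from its depth. To obtain a strict $O(n)$ bound I would preprocess the points into a grid of cells of side $\Theta(1/\tau)$ (equivalently, a bucketing by dyadic distance scale), so that assigning each point to its sub-disk and locating representatives costs $O(1)$ amortized per point, and each point is charged to only a constant number of levels before reaching a base case. Summing $O(1)$ work per point over the $O(n)$-size recursion tree, together with the $O(1)$ brute-force cost at each of the $O(n)$ leaves, gives the claimed $O(n)$ total time. The two quantitative facts to verify with care are that $c(\eta)$ is finite for every norm (which follows from compactness of the unit $\eta$-disk) and that the inequality (setup cost) $\le \gme\cdot\rho/2$ indeed holds under the hypothesis $\tau>3$.
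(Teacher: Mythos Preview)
Your approach is fundamentally different from the paper's, and it has a genuine gap at the step you yourself flag as ``the main obstacle.''

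The paper does not recurse at all. It proves that for large $n$ one can achieve makespan $3+O(1/\sqrt{n})$ by splitting the disk into about $\sqrt{n}$ cones of equal arc-length, waking the densest cone with a hybrid \splitconestr/\heapstr (the \linearstr), returning to the origin, and then waking the remaining cones in parallel. The threshold $n_0$ is chosen from $\tau$ so that $3+c/\sqrt{n_0}\le\tau$; below the threshold one brute-forces. The crucial point is that for large $n$ the achieved makespan is \emph{strictly below} $\gme$ with room to spare, because $3+o(1)<\gme$.

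Your recursion, by contrast, tries to hit the \emph{tight} bound $\gme\cdot\rho$ at every scale, and this is exactly why the setup budget cannot close. You need the setup phase to wake one representative in each of $c=c(\eta)$ sub-disks while costing at most $\gme\rho/2$. But $\gme/2$ can be as small as $3/2$ (since $\lme\ge 2$ only gives $\gme\ge 3$), whereas the setup is itself a small Freeze-Tag instance in a disk of radius $\rho$: already two hops inside such a disk can cost $\rho+2\rho=3\rho$, and reaching $c$ representatives needs a tree of depth $\ge\lceil\log_2 c\rceil$ with edges of length up to $2\rho$. Invoking \cref{th:gamma4} gives you a \emph{lower} bound on $\gme$, but that lower bound is $3$, not something that dominates the covering-dependent setup constant. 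Shrinking the sub-disk radius to $\rho/k$ to enlarge the setup budget to $\gme\rho(1-1/k)$ does not help: the number of representatives grows like $k^2$, and waking them is again an FTP instance whose best generic bound is $\gme\rho$, leaving nothing for the recursion. There is no slack in your inductive hypothesis to absorb any setup cost at all.

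There is a second gap in the running-time argument. Your base case fires when $|S|<M_0$, but if all points lie in a single sub-disk the recursion removes only the $O(1)$ representatives per level, so the depth can be $\Theta(n)$ and the total work $\Theta(n^2)$. The grid/bucketing sketch does not fix this: a point that survives to level $d$ is touched at every level $\le d$, so ``charged to only a constant number of levels'' is not true as stated.
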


So, plugging $\eta = \ell_1$ and $\tau = 5$ in \cref{th:time}, it is
sufficient to prove that $\gamma(\ell_1) \le 5$ to automatically
obtain a linear time construction of a wake-up tree of makespan at
most~$5$ as claimed in \cref{th:main}. In other words, given
\cref{th:time}, our main \cref{th:main} could simply be restated as:
$\gamma(\ell_1) \le 5$. Furthermore, as already explained, the bound
of~$5$ is attained for $n = 4$ sleeping robots, so
$\gamma(\ell_1) \ge \gamma_4(\ell_1) = 5$, and the
wake-up ratio in $\ell_1$-norm is thus $5$.

To prove \cref{th:main} and \cref{th:time}, we need several
intermediate results, which we believe are of independent interest. For
instance, we show how to efficiently wake up robots contained
in a unit cone of given arc-length. This implies, for instance, that
$\gme < 3 + 4\varphi = 5 + 2\sqrt{5} \approx 9.47$, where $\varphi$ is
the golden ratio. 

Some of our arguments rely on an intermediate result that we prove for
all possible norms. Given a norm $\eta$, let us define $\lme$ as half
the perimeter of the largest inscribed parallelogram in the unit
$\eta$-disk (in $\ell_1$-norm, this perimeter is the disk itself). This is a classical parameter for normed spaces. It can
be formally defined by (see~\cite{Schaffer76,Gao01}),
\[
  \lme ~=~ \sup\limits_{\substack{u,v
      \in\bbR^2\\\eta(u),\eta(v)\le 1}} \set{~ \eta(u+v) + \eta(u-v)
    ~} ~.
\]
It is easy to check that $\lme \in [2,4]$. For general norms, the
quantity $\lme$ is difficult to calculate. However, it is known
(see~\cite[Proposition~1]{Gao01} for instance), that, for every
$p \in [1,\infty]$, $\Lambda(\ell_p) = 2^{1+\max(1/p,1-1/p)}$. In
particular, $\Lambda(\ell_1) = \Lambda(\elli) = 4$ and
$\Lambda(\ell_2) = 2\sqrt{2}$.

The wake-up ratios for $n\in\set{0,1,2,3}$ are easy to calculate.
We have $\gamma_0(\eta) = 0$,
$\gamma_1(\eta) = 1$, $\gamma_2(\eta) = \gamma_3(\eta) = 3$, and also
$\gamma_n(\eta) \ge 3$ for all\footnote{For $n\ge 2$, it is enough to
  place one sleeping robot at $(1,0)$ and the $n-1$ others at $(-1,0)$.}
$n \ge 3$. Our next result gives the exact value for $\gamma_4(\eta)$.

\begin{restatable}{theorem}{TheoremGammaFour}\label{th:gamma4}
  For any norm $\eta$, $\gamma_4(\eta) = 1 + \lme$.
\end{restatable}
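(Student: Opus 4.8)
The plan is to prove the two matching inequalities $\gamma_4(\eta)\ge 1+\lme$ and $\gamma_4(\eta)\le 1+\lme$ separately. Throughout I use that a wake-up tree on four sleeping robots joins the root (the origin) to a single robot $p_1$, from which the remaining three are woken by a binary subtree; since three nodes cannot fit in the first level alone, at least one robot sits at depth two below $p_1$. Hence every tree contains a branch of the form $0\to p_1\to a\to c$ together with one robot $b$ woken directly by $p_1$ (or the degenerate path shape, which is never shorter), and the makespan equals $\eta(p_1)+\max\set{\eta(b-p_1),\ \eta(a-p_1)+\eta(c-a)}$.

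For the lower bound I would exhibit one configuration forcing makespan $1+\lme$. By compactness of the unit $\eta$-disk, the supremum defining $\lme$ is attained by some $u,v$ with $\eta(u)=\eta(v)=1$, and I place the four robots at $u,-u,v,-v$, all on the unit circle. Whatever robot is chosen as $p_1$, the first hop costs exactly $1$, so it suffices to show that waking the other three always costs at least $\lme$. Writing $A=\eta(u+v)$ and $B=\eta(u-v)$, so that $A+B=\lme$, I would enumerate the finitely many subtree shapes: the diametral hops contribute $2$ while the mixed hops contribute $A$ or $B$, and a short case check shows every shape yields at least $\min(2+A,\,2+B,\,A+B)=A+B=\lme$, using $A,B\le 2$. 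This gives $\gamma_4(\eta)\ge 1+\lme$, matching the explicit $\ell_1$ example with $\lme=4$.

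The upper bound is the substantive direction: for \emph{any} four robots in the unit disk I must produce a tree of makespan at most $1+\lme$. Since $\eta(p_1)\le 1$ and $\eta(b-p_1)\le 2\le\lme$ always hold, the only quantity that can overshoot is the length $\eta(a-p_1)+\eta(c-a)$ of the depth-two branch. The key observation, read directly off the definition of $\lme$ by setting $u=\frac{c-p_1}{2}$ and $v=a-\frac{p_1+c}{2}$ (so that $u+v=a-p_1$ and $u-v=c-a$, with $\eta(u)\le 1$ automatic), is that
\[
  \eta(a-p_1)+\eta(c-a)\le\lme \quad\text{whenever}\quad \eta\pare{a-\frac{p_1+c}{2}}\le 1 .
\]
Thus the whole upper bound reduces to a purely geometric covering statement: \emph{among any four points of the unit $\eta$-disk, one of them lies within $\eta$-distance $1$ of the midpoint of two of the others}; that point becomes the internal node $a$, its two ``endpoints'' become $p_1$ and $c$, and the fourth point becomes $b$.

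To establish this covering statement I would invoke Radon's theorem to split the four points into two groups with intersecting convex hulls, giving two cases. In the $(3,1)$ case one point is a convex combination $\alpha p_1+\beta p_2+\gamma p_3$ of the other three; expanding $a-\frac{p_i+p_j}{2}$ in barycentric coordinates and bounding $\eta$ by $\sum|\text{coef}|$ (each vertex having norm at most $1$), a short analysis of the three possible endpoint pairs shows the minimum of the three bounds is always at most $1$. The delicate case is $(2,2)$, where the segments $p_1p_2$ and $p_3p_4$ cross; this is exactly where the extremal square configuration lives, so the crude triangle-inequality bound is no longer slack, and I expect this to be the main obstacle. Here I would parametrize the crossing point and exploit that all six pairwise distances are at most $2$ (the diameter of the disk), choosing the endpoint pair so that the two ``arms'' meeting at the internal node point in nearly opposite directions, which is precisely the situation the definition of $\lme$ controls. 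Combining the two cases proves the covering statement and hence $\gamma_4(\eta)\le 1+\lme$, completing the equality.
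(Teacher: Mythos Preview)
Your lower bound is essentially the paper's: both place the four robots at the vertices of an extremal inscribed parallelogram and observe that any depth-$3$ branch costs at least $1+\lme$.

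Your upper-bound route is genuinely different. The paper never reduces to a midpoint ``covering statement''; instead it uses that any quadrilateral contained in the unit $\eta$-disk has half-perimeter at most $\lme$ (from central symmetry plus monotonicity of perimeter under convex inclusion), picks four of the five points $\{O,A,B,C,D\}$ in convex position, and splits on whether $O$ is among them. When $O$ is not, a ``racquet'' tree---go to the quadrilateral, then walk half its boundary in each direction---gives $1+\lme$ directly; when $O$ is, a short geometric argument using the antipodes $-A,-B,-C$ finishes. Your reduction via $u=(c-p_1)/2$, $v=a-(p_1+c)/2$ is correct and elegant, and your $(3,1)$ analysis goes through; the trade-off is that your argument uses only the raw definition of $\lme$, while the paper's half-perimeter lemma is a reusable tool that makes the generic case nearly trivial.

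The gap is your $(2,2)$ case: what you wrote (``parametrize the crossing point \dots\ choose the endpoint pair so that the arms point in nearly opposite directions'') is a hope, not an argument, and you explicitly flag it as the main obstacle. It can in fact be closed in two lines. Let $q$ be the crossing point of $p_1p_2$ and $p_3p_4$, set $a_i=\eta(p_i-q)$, and relabel so that $a_1\le a_2$ and $a_3\le a_4$. Since $p_1,q,p_2$ are collinear, the midpoint $m_{12}$ satisfies $\eta(m_{12}-q)=(a_2-a_1)/2$, so by the triangle inequality
\[
\eta\!\Bigl(p_3-\tfrac{p_1+p_2}{2}\Bigr)\le a_3+\tfrac{a_2-a_1}{2},\qquad
\eta\!\Bigl(p_1-\tfrac{p_3+p_4}{2}\Bigr)\le a_1+\tfrac{a_4-a_3}{2}.
\]
If both right-hand sides exceeded $1$, summing would give $a_1+a_2+a_3+a_4>4$, contradicting $a_1+a_2=\eta(p_1-p_2)\le 2$ and $a_3+a_4=\eta(p_3-p_4)\le 2$. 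Hence your covering statement holds in the $(2,2)$ case as well, and with this inserted your proof is complete.
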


This implies a general lower bound of $\gme \ge 1+\lme$. Note that
since $\lme \ge 2$, \cref{th:time} simplifies and rewrites in:

\begin{corollary}\label{cor:time}
  For any norm $\eta$ with $\lme \neq 2$, one can construct in time
  $O(n)$ a wake-up tree of makespan at most $\gme$ for any set of $n$
  points in the unit $\eta$-disk and rooted at the origin.
\end{corollary}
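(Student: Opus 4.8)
The plan is to read off the corollary as a direct specialization of \cref{th:time}: the only thing to check is that the natural choice $\tau = \gme$ meets the hypotheses of that theorem, namely $\tau > 3$ and $\tau \ge \gme$, and verifying $\tau > 3$ is exactly where the assumption $\lme \neq 2$ enters. The heavy algorithmic work is entirely contained in \cref{th:time}; here we only need a quantitative gap argument.

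First I would recall that $\lme \in [2,4]$ for every norm, so that the assumption $\lme \neq 2$ is equivalent to $\lme > 2$. Combining this with the lower bound $\gme \ge 1 + \lme$ already derived from \cref{th:gamma4} (since $\gamma_4(\eta) = 1 + \lme$ and $\gme = \max_{n \in \bbN} \gamma_n(\eta) \ge \gamma_4(\eta)$) yields $\gme \ge 1 + \lme > 3$.

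Next I would set $\tau = \gme$ and verify the two hypotheses of \cref{th:time}. The strict inequality $\tau > 3$ is exactly the step above, and $\tau \ge \gme$ holds trivially with equality. For a fixed norm $\eta$ the value $\gme$ is a constant, independent of the input point set and of $n$, so the requirement of ``knowing $\tau$'' imposes nothing extra: it is simply this constant, determined by $\eta$. Invoking \cref{th:time} with this $\tau$ then produces, in time $O(n)$, a wake-up tree of makespan at most $\gme$ for any set of $n$ points in the unit $\eta$-disk rooted at the origin, which is precisely the claim.

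The single obstacle is the strict inequality $\tau > 3$ required by \cref{th:time}: the generic bounds $\gamma_n(\eta) \ge 3$ only give $\gme \ge 3$, which would be insufficient to legitimize $\tau = \gme$. It is exactly the hypothesis $\lme \neq 2$ that, through \cref{th:gamma4}, upgrades $\gme \ge 3$ to $\gme > 3$. Everything else is inherited directly from \cref{th:time}.
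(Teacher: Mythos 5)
Your proof is correct and matches the paper's own reasoning: the corollary is stated there as an immediate consequence of \cref{th:time}, using exactly the observation that $\lme \neq 2$ together with $\lme \in [2,4]$ gives $\lme > 2$, whence $\gme \ge \gamma_4(\eta) = 1 + \lme > 3$ by \cref{th:gamma4}, so $\tau = \gme$ satisfies both hypotheses. Your remark that $\gme$ is a constant depending only on $\eta$ (so ``knowing $\tau$'' costs nothing) is the same implicit convention the paper adopts.
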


Now, combining \cref{th:main}, \cref{th:gamma4} and standard inclusion
arguments of unit $\ell_p$-disk, we get the following bounds for the
$\ell_p$ wake-up ratio:

\begin{restatable}{corollary}{CorollaryEllp}\label{cor:ellp}
  For every $p \in [1,\infty]$,
  $1 + 2^{1+\max(1/p,1-1/p)} \le \gamma(\ell_p) \le 5 \cdot
  2^{\min(1/p,1-1/p)}$.
\end{restatable}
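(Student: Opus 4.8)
The plan is to prove \cref{cor:ellp} by combining the exact value of $\gamma_4$ from \cref{th:gamma4} (for the lower bound) with the $\ell_1$ bound of \cref{th:main} together with geometric inclusion relations between the various unit $\ell_p$-disks (for the upper bound). Let me think about each piece.

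Lower bound: \cref{th:gamma4} says $\gamma_4(\eta)=1+\Lambda(\eta)$, and since $\gamma(\eta)=\max_n\gamma_n(\eta)\ge\gamma_4(\eta)$, we immediately get $\gamma(\ell_p)\ge 1+\Lambda(\ell_p)$. The excerpt already told us $\Lambda(\ell_p)=2^{1+\max(1/p,1-1/p)}$, so the lower bound $1+2^{1+\max(1/p,1-1/p)}$ drops out. This is the easy half.

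Upper bound: I want $\gamma(\ell_p)\le 5\cdot 2^{\min(1/p,1-1/p)}$. The idea is a scaling/inclusion argument. Let me think about the geometry. The unit $\ell_p$-disk $B_p$ and the unit $\ell_1$-disk $B_1$ are related by inclusions. For $p\ge 1$, we have $\|x\|_p\le\|x\|_1$, so $B_1\subseteq B_p$; and there's a reverse inclusion $B_p\subseteq c\cdot B_1$ for a suitable scaling factor $c$ that I need to pin down. Concretely, by norm-equivalence in the plane, $\|x\|_1\le 2^{1-1/p}\|x\|_p$ for $p\in[1,2]$ (the worst case being $x=(1,1)$, where $\|x\|_1=2$ and $\|x\|_p=2^{1/p}$). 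So any point in $B_p$ lies in $2^{1-1/p}\cdot B_1$. Now the strategy: given $n$ sleeping robots in $B_p$ and an awake robot at the origin, I scale everything by the factor $\lambda=2^{1-1/p}$ (for $p\le 2$), placing all robots inside the unit $\ell_1$-disk; apply \cref{th:main} to wake them with $\ell_1$-makespan at most $5$; then observe that edge lengths are monotone under scaling and that the $\ell_1$-length of any edge upper-bounds its $\ell_p$-length (since $\|x\|_p\le\|x\|_1$). Tracking the scaling: in the original scale the $\ell_1$-makespan is at most $5\lambda=5\cdot 2^{1-1/p}$, and since $\ell_p$-lengths are at most $\ell_1$-lengths, the $\ell_p$-makespan is also at most $5\cdot 2^{1-1/p}$. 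For $p\ge 2$ the roles of the coordinates swap and the relevant factor becomes $2^{1/p}$, which is why the exponent is $\min(1/p,1-1/p)$; by the $\ell_p$--$\ell_q$ duality ($1/p+1/q=1$) the two cases are symmetric, so it suffices to handle $p\in[1,2]$ and invoke symmetry, or just verify the inclusion constant $2^{\min(1/p,1-1/p)}$ directly in both regimes.

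The step I expect to be the main obstacle — really the only subtle point — is making the reuse of the wake-up tree across two different norms airtight. I must check that the \emph{same} combinatorial tree produced by \cref{th:main} in the $\ell_1$-disk remains a valid wake-up tree in $\ell_p$ and that its $\ell_p$-makespan is genuinely bounded by the $\ell_1$-makespan, not merely by the scaled $\ell_1$-radius. Because $\|x\|_p\le\|x\|_1$ holds edgewise and makespan is a sum of edge lengths along the deepest branch, this monotonicity passes through root-to-leaf paths cleanly, so the weighted depth can only decrease when we re-measure in $\ell_p$. The other care point is matching the two inclusion constants to the single closed form $2^{\min(1/p,1-1/p)}$ and confirming the boundary values ($p=1$ gives $5$, recovering \cref{th:main}; $p=2$ gives $5\sqrt2$, recovering the claimed $\ell_2$ bound; $p=\infty$ gives $5$ again, consistent with $\Lambda(\ell_\infty)=4$). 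Once those checks are in place, combining the two halves yields the stated sandwich.
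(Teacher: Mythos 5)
Your lower bound is exactly the paper's argument and is fine, and your upper-bound route through $\ell_1$ is sound precisely on $p\in[1,2]$: there $\ell_1(u)\le 2^{1-1/p}\,\ell_p(u)$ gives the inclusion of the unit $\ell_p$-disk into $2^{1-1/p}$ times the unit $\ell_1$-disk, \cref{th:main} applies after rescaling, and the edgewise inequality $\ell_p(u)\le\ell_1(u)$ transfers the makespan, yielding $5\cdot 2^{1-1/p}=5\cdot 2^{\min(1/p,1-1/p)}$. The genuine gap is the regime $p\in(2,\infty]$. There your route through $\ell_1$ only yields $5\cdot 2^{1-1/p}$, which is strictly weaker than the claimed $5\cdot 2^{1/p}$ (at $p=\infty$ it gives $10$ instead of $5$), and this cannot be repaired by ``verifying the inclusion constant directly'': the smallest $\lambda$ with unit $\ell_p$-disk $\subseteq\lambda\cdot$(unit $\ell_1$-disk) really is $2^{1-1/p}$, tight at the point $(2^{-1/p},2^{-1/p})$. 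Your other fix, invoking $\ell_p$--$\ell_q$ duality, is unsound: for dual exponents with $p\notin\set{1,2,\infty}$ the spaces $(\bbR^2,\ell_p)$ and $(\bbR^2,\ell_q)$ are \emph{not} isometric (for instance, the $\ell_4$ unit circle has four points of zero curvature and none of infinite curvature, while the $\ell_{4/3}$ circle is the opposite, and linear maps preserve the order of contact with tangent lines), so nothing forces $\gamma(\ell_p)=\gamma(\ell_q)$; the symmetry of the target formula is not something you may feed back into the proof.

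What the paper does instead for large $p$ is to introduce a second comparison norm. It first observes that $\gamma(\elli)=\gamma(\ell_1)=5$, which is legitimate because the unit $\ell_1$- and $\elli$-disks \emph{are} the same shape up to a $45$-degree rotation and scaling --- an isometry special to the pair $\set{1,\infty}$, and exactly the fact your aphorism about ``roles of coordinates swapping'' would need to be made precise. It then routes through $\elli$ rather than $\ell_1$: the unit $\ell_p$-disk is contained in the unit $\elli$-disk with no scaling at all (since $\elli(u)\le\ell_p(u)$), and edgewise $\ell_p(u)\le 2^{1/p}\,\elli(u)$, so $\gamma(\ell_p)\le 2^{1/p}\,\gamma(\elli)=5\cdot 2^{1/p}$. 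Taking the better of the two routes gives $\gamma(\ell_p)\le\min\pare{5\cdot 2^{1-1/p},\,5\cdot 2^{1/p}}=5\cdot 2^{\min(1/p,1-1/p)}$ for every $p$. So your overall architecture (inclusion plus edgewise norm comparison, with the monotonicity-along-branches point you rightly flag) is the correct one; what is missing is the $\elli$ route together with the fact $\gamma(\elli)=5$, without which the exponent $\min(1/p,1-1/p)$ is unreachable on $(2,\infty]$.
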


In the light of the lower bound $\gme \ge 1+\lme$ implied by
\cref{th:gamma4}, we propose the following natural conjecture.

\begin{restatable}{conjecture}{ConjectureMain}\label{conj}
  For any norm $\eta$, $\gme = 1 + \lme$.
\end{restatable}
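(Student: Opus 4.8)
The lower bound is already in hand: since $\gme=\max_{n\in\bbN}\gamma_n(\eta)\ge\gamma_4(\eta)$, \cref{th:gamma4} gives $\gme\ge 1+\lme$ for every norm $\eta$. Thus the entire content of \cref{conj} is the matching upper bound $\gme\le 1+\lme$, i.e.\ exhibiting, for every norm $\eta$ and every finite set $S$ of points in the unit $\eta$-disk, a wake-up tree rooted at the origin of makespan at most $1+\lme$. The plan is to lift the $\ell_1$ construction behind \cref{th:main} to an arbitrary norm, letting the four vertices $u,v,-u,-v$ of a largest parallelogram $P$ inscribed in the unit $\eta$-disk play the role that the axis vertices $(\pm1,0),(0,\pm1)$ play for $\ell_1$; here $\eta(u+v)+\eta(u-v)$ realizes $\lme$.

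The skeleton I would follow spends the first hop, of length at most $1$, to wake a single well-chosen robot, and then wakes everyone else with additional makespan at most $\lme$. To control this second phase I would prove a \emph{region lemma} generalizing the unit-cone arc-length estimate already used to bound $\gme$: the disk is cut into regions delimited by the origin $o$ and the sides of $P$, and, starting from an awake robot placed at a vertex of $P$, one wakes all sleeping robots of an adjacent region along a recursive split (in the spirit of \splitconestr and \linearstr) whose per-level costs telescope so that along any root-to-leaf branch the total added length is at most one side length, $\eta(u+v)$ or $\eta(u-v)$. Arranging the two awake robots created by the first hop so that each subsequent branch crosses at most two consecutive sides of $P$, the added cost on every branch is at most $\eta(u+v)+\eta(u-v)=\lme$, giving makespan $1+\lme$; specializing $u=(1,0)$, $v=(0,1)$ recovers \cref{th:main} and the value $5$, and \cref{cor:ellp} becomes tight throughout $[1,\infty]$.

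Concretely I would proceed in three steps. First, a reduction: by a compactness argument on the configuration together with pushing robots radially outward, reduce the supremum over $n$ to worst-case configurations supported near the boundary of the disk, so that only finitely many clustered directions matter. Second, prove the region lemma itself, which is where the affine structure of $P$ is used: in the affine coordinates that turn $P$ into a square, the partition directions behave additively, exactly as the axes do for $\ell_1$, and the telescoping of the recursive split can be carried out as in \cref{th:main}. Third, glue the four regions, checking that a boundary shared by two consecutive triangles is charged only once and that the first hop can always be aimed at a robot whose removal leaves the remaining budget at exactly $\lme$.

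The hard part, and almost surely the reason \cref{conj} is only conjectured, lives between the parallelogram and the disk. Unlike the $\ell_1$-disk, which \emph{is} a parallelogram, a general unit $\eta$-disk strictly contains its largest inscribed $P$, so sleeping robots may sit in the four caps between a side of $P$ and the boundary; reaching and waking them incurs cost that is invisible in the side lengths $\eta(u\pm v)$. The whole difficulty is to show that this extra cost is nevertheless absorbed by the budget $\lme$, and the only leverage available is the maximality of $P$: wherever the boundary bulges past a side, extremality of $\eta(u+v)+\eta(u-v)$ must forbid that bulge from being too expensive. Turning this qualitative trade-off into a quantitative, fully norm-independent accounting—one that never appeals to any feature of $\eta$ beyond $\lme$—is the step I expect to resist a routine argument, and not being able to close it in full generality is precisely what keeps the statement a conjecture rather than a theorem.
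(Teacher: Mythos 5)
You have not proved this statement, and neither does the paper: \cref{conj} is stated as a \emph{conjecture}, and the paper's actual content around it is exactly the part you describe as "in hand" — the lower bound $\gme \ge 1+\lme$ from \cref{th:gamma4}, the verification for $\eta\in\set{\ell_1,\elli}$ via \cref{th:main} and \cref{cor:ellp}, and the partial verification for $\ell_2$ when $n\ge 528$ via \cref{prop:gamma_n}. Your own last paragraph concedes that the key step cannot be closed, so the proposal is a research plan with an acknowledged hole, not a proof; the only question is whether the hole is where you say it is, and it is — but it is worth being precise about why your intermediate "region lemma" is not a routine generalization of anything the paper proves.

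The concrete gap is step two. The engine of the $\ell_1$ proof is that monotonic paths are geodesics (the additivity identity behind \cref{lem:monotonic}), and this holds \emph{because} the unit $\ell_1$-disk coincides with its largest inscribed parallelogram. Your affine-coordinates argument does not transfer this: an affine map sending $P$ to a square transforms $\eta$ into another norm whose unit disk still strictly contains that square, so the caps survive the change of coordinates and monotone paths are not geodesics there — the telescoping breaks exactly in the regions you flag. Moreover, the only norm-independent telescoping the paper actually possesses is \cref{prop:split-cone}, and it is lossy: a cone of arc-length $w$ costs $1+\varphi w$, not $1+w$, which is why the best general bound obtained is $\gme < 3+\varphi\,\pi(\eta)$ (\cref{prop:gamma_n}) rather than anything close to $1+\lme$. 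So your region lemma, which demands per-branch cost exactly one side length with no multiplicative loss, is not a strengthening of known tools; it is essentially a restatement of the conjecture itself. Your step-one reduction is also unjustified as stated: pushing robots radially outward does not monotonically control pairwise distances, hence does not obviously preserve worst-case optimal makespan. None of this is a criticism of your diagnosis — identifying the caps as the obstruction, and the maximality of $P$ as the only available leverage, is exactly the right reading of why this remains open — but the proposal establishes nothing beyond what \cref{th:gamma4}, \cref{th:main}, and \cref{cor:ellp} already give.
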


According to \cref{th:gamma4}, which states that the bound $1 + \lme$
is reached by $n = 4$ robots, \cref{conj} can be captured in the
aphorism:

\begin{quote}\it
  It's always quicker to wake up $n$ robots than four.
\end{quote}

\cref{th:main} and \cref{cor:ellp} prove the conjecture for
$\eta \in\set{\ell_1,\elli}$. For $\eta = \ell_2$, if true, \cref{conj}
combined with \cref{th:time} imply that in time $O(n)$
one can construct a wake-up tree of makespan
$1 + 2\sqrt{2} \approx 3.82$. \cref{prop:gamma_n} implies also that
\cref{conj} is true for $\ell_2$ whenever $n \ge 528$.

\section{The wake-up ratio is at most 5 in $L_1$}
\label{sec:wake-up-time}

Our main result (\cref{th:main}) is to prove that the wake-up ratio
for the $\ell_1$-norm is at most~$5$. The proof is constructive and
provides a polynomial time algorithm. The complexity is subsequently
improved to a $O(n)$ in \cref{sec:linear}.

At a high level, the strategy consists of recruiting first a team of
robots in a dense subregion, then these robots can wake up the other
regions in parallel. The difficult part is to select these regions
appropriately, depending on the number of robots and their
distribution, and to prove that the bound holds in all the cases.

To make this more precise, let us partition the unit $\ell_1$-disk
into \emph{squares} and \emph{triangles} as follows.  A \emph{square}
(of diameter 1) is a square region whose diagonals and sides are both
of length $1$ ($\ell_1$ norm) and the diagonals are parallel to the
$x$-axis and $y$-axis (see~\cref{fig:partition}). Similarly, a
\emph{triangle} (of diameter~$1$) is an isosceles right triangle whose
hypothenuse and sides have length $1$ and the hypotenuse is parallel
to the $x$-axis or $y$-axis. Thus, each square region represents a
fourth of the unit $\ell_1$-disk, possibly subdivided further into two
equal triangles. Strictly speaking, the diameter may be smaller than
$1$, as our algorithm occasionally subdivides some regions further,
but the arguments are then normalized to $1$ systematically. Finally,
note that both squares and triangles can be seen as \emph{cones} (of
different angles) in the $\ell_1$-norm.

\Cyril{En Anglais c'est "isosceles right triangle" et pas "right
  isosceles triangle".}

\begin{figure}[htbp!]
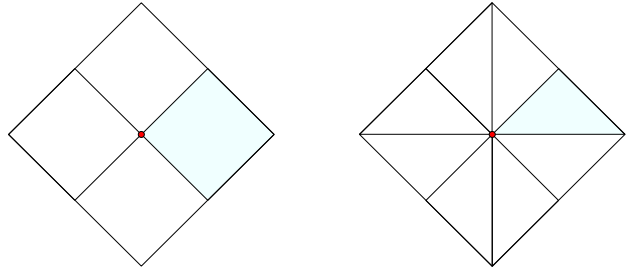

  \def\w{0.25}
  \centering%
  \includesvg[width=\w\textwidth]{partitionS.svg}%
  \hspace{1cm}
  \includesvg[width=\w\textwidth]{partitionT.svg}%
  \caption{The unit $\ell_1$-disk, divided into squares and triangles of
    diameter~$1$.}
  \label{fig:partition}
\end{figure}

The algorithm relies crucially on three
lemmas about these regions, namely:

\Cyril{J'aurai tendance à remplacer "time" par "makespan" dans les
  lemmes. En toute logique "time" se réfère au temps de construction.}

\begin{restatable}{lemma}{LemmaSfive}\label{lem:S5}
  A robot located at a corner of a square can wake up any number
  $n\le 5$ of robots in the square in two time units.
\end{restatable}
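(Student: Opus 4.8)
The plan is to first normalize the geometry. Placing the corner robot $c$ at the origin, the square becomes the $\ell_1$-ball of radius $1/2$ centered at $m=(1/2,0)$, i.e. $\{(x,y):|x-1/2|+|y|\le 1/2\}$; thus it has $\ell_1$-diameter exactly $1$, all three non-origin vertices and in fact every point of the square lie within distance $1$ of $c$, and $m$ lies within distance $1/2$ of every point while $|cm|_1=1/2$. (Equivalently, the $45^\circ$ change of variables $u=x+y,\ v=x-y$ turns $\ell_1$ on the square into the Chebyshev metric on the unit square $[0,1]^2$ with $c$ at a corner, which is convenient for the bookkeeping.) I would next record where the difficulty lies: if $c$ were allowed two children, a depth-$2$ binary tree placing all five robots at tree-depth at most $2$ would have makespan at most $2\cdot(\text{diameter})=2$ for free, so the \emph{content} of the lemma is the binding case in which $c$ is the root of the wake-up tree and therefore has a single child. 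Then spanning five robots forces a tree $c\to A\to\{B,C\}$ with exactly two robots $D,E$ at tree-depth $3$, and the only paths that can exceed the trivial depth-$\le 2$ bound are the two \emph{deep} branches $c\to A\to B\to D$ and $c\to A\to C\to E$. It remains to choose the first target $A$ and route these branches so that $|cA|_1+|AB|_1+|BD|_1\le 2$ and $|cA|_1+|AC|_1+|CE|_1\le 2$.

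The construction rests on two geometric facts. First, a \emph{central hub}: if some robot lies at (or near) the center $m$, take it as $A$; then $|cA|_1=1/2$ and every two-edge branch from $A$ costs at most $1/2+1=3/2$, so every root-to-leaf path is at most $2$ irrespective of where the other robots sit. Second, \emph{staircase telescoping} in $\ell_1$: if the intermediate robot $B$ lies in the axis-parallel bounding box of $A$ and $D$ then $|AB|_1+|BD|_1=|AD|_1\le 1$, so a monotone branch costs only $|cA|_1+|AD|_1\le 2$. Consequently, whenever each deep leaf can be paired with an intermediate robot lying monotonically between $A$ and it, the bound follows, and this is the pattern I would aim to realize.

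The remaining work is to show that for every placement of at most five robots one of these patterns is achievable. I would split the square along the horizontal diagonal into its two triangular halves $y\ge 0$ and $y\le 0$; since there are at most five robots, one half receives at most two, which limits how many deep leaves must be sent into the thin directions near $c$. I then pick $A$ centrally (the robot minimizing $|Am|_1$), pair each deep leaf with its nearest neighbour to invoke telescoping, and fall back to the hub bound when a branch is forced to turn. The hard part will be exactly the near-tight configurations in which no robot is central and a deep branch is forced to bend, so that the intermediate robot is not between $A$ and the leaf: telescoping fails and the naive triangle-inequality estimate only yields $2+2\delta$ with $\delta=|Am|_1$. These are precisely the instances that attain makespan $2$ (for example robots clustered near the three non-origin vertices $(1,0)$ and $(1/2,\pm 1/2)$, which are at mutual $\ell_1$-distance $1$). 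The crux is the quantitative trade-off that resolves them: forcing a long first edge $|cA|_1$ requires all robots to lie far from $c$, hence inside a small sub-region, which in turn shrinks the branch lengths, so one must balance $|cA|_1$ against the two-edge branch cost rather than bound them separately. This balance is what gives the tight value $2$, and since $5$ is the largest cardinality keeping the forced tree-depth at $3$ with only two deep leaves, it is also what pins the threshold $n\le 5$.
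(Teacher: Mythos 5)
Your setup is sound, and you have correctly isolated the two ingredients that matter: in $\ell_1$, a branch whose intermediate robot lies in the axis-parallel bounding box of its endpoints telescopes ($|AB|_1+|BD|_1=|AD|_1\le 1$), and only depth-$3$ branches can threaten the bound of $2$. These are exactly the tools the paper uses (its ``monotonic path'' machinery). But your proposal stops where the proof has to start: you declare the tight configurations (no robot near the centre, a deep branch forced to bend) to be ``the crux,'' describe a quantitative trade-off that would resolve them, and never carry it out. Worse, the trade-off you sketch is geometrically false in $\ell_1$: points far from the corner $c$ do \emph{not} lie in a small sub-region. The set of points of the square at distance at least $1-\eps$ from $c$ is a thin strip along the two far sides, which has diameter $1$; for instance $(1/2,1/2)$ and $(1/2,-1/2)$ are both at distance $1$ from $c$ and at distance $1$ from each other. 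So ``long first edge $\Rightarrow$ robots confined to a small region $\Rightarrow$ short branches'' does not hold, and no balancing of $|cA|_1$ against branch lengths along these lines will close the hard case. Note also that your hub bound requires $A$ to sit exactly at the centre; ``near'' already degrades to $2+2\delta$, as you concede, so the hub cannot absorb the hard case either.

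What actually closes the gap in the paper is a combinatorial fact absent from your proposal: by Erd\H{o}s--Szekeres, any $5$ points contain a $3$-point monotonic path (\cref{lem:aligned}). For $n=4$ this lets one take a makespan-$2$ tree on the other three robots and insert the middle robot of the monotonic path into a branch at zero extra cost. For $n=5$ one splits on whether $p_0$ itself starts a monotonic path: if it does, the root wakes two robots in one time unit and the three awake robots finish in one more; if it does not, one derives order constraints on the topmost, bottommost and leftmost sleeping robots that force the existence of a tree in which every depth-$3$ branch is a single edge of length at most $1$ followed by a monotonic (hence length at most $1$) two-edge path. That case analysis \emph{is} the proof; without it, or a genuine replacement for it, your argument only covers configurations that happen to contain a central robot or admit the monotone pairing, which is not all of them.
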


\Cyril{J'ai restreins \cref{lem:S6} à exactement $n = 6$ robots, car
  ce lemme n'est utilisé que dans ce cas dans \cref{th:main}. Bien sûr
  ce n'est pas faux de dire $n\le 6$, mais cela ne rajoute rien. En
  fait, si une ligne dans la preuve de \cref{lem:S6}.}

\begin{restatable}{lemma}{LemmaSsix}\label{lem:S6}
  A robot located at a corner of a square can wake up $6$ robots in
  the square and return to the origin with these robots in three time
  units.
\end{restatable}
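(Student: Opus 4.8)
The plan is to fix coordinates in which the square has its recruiting corner at the origin $O=(0,0)$ and far corner $B=(1,0)$, with the other two corners $A=(1/2,1/2)$ and $C=(1/2,-1/2)$, so that the horizontal diagonal $OB$ splits it into an upper and a lower triangle (cf.\ \cref{fig:partition}). The single geometric fact I would record first is that this square is \emph{exactly} the $\ell_1$-ball of radius $1/2$ about its center $P=(1/2,0)$; consequently every one of the six robots lies within $\ell_1$-distance $1$ of $O$, within $1/2$ of $P$, and any two robots lie within distance $1$ of each other. Two elementary building blocks then drive everything. The first is a \emph{sweep}: a robot moving along a segment wakes every robot it passes, so robots lying along a common segment are essentially free to wake. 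The second is a \emph{wake-two-and-return}: a worker leaving $O$ along $O\to r_1\to r_2\to O$ travels $\ell_1$-length $\|r_1\|_1+\|r_1-r_2\|_1+\|r_2\|_1\le 3$, and moreover $r_1$ (woken by time $1$) and $r_2$ (woken by time $2$) can each walk straight back to $O$, arriving by times $2$ and $3$ respectively. Thus a lone worker already wakes two robots and brings all three home within budget.

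First I would set up the general dispatch accounting. For each robot I track its wake-up time $t$ and the length of its remaining route; a robot woken at a point $q$ reaches $O$ by time $t+\|q\|_1\le t+1$, so any robot woken by time $2$ returns in time as a pure leaf, while a robot woken by time $1$ retains a full unit of slack and can still sweep up one further robot before returning. The target wake-up tree therefore has $R_0$ waking two robots, the first of those waking one more, and all remaining robots woken by time $2$ as returning leaves. The obstruction is that this hop-counting argument only reaches \emph{three} robots when every hop is forced to have length close to $1$, which is exactly the adversarial situation I must rule out.

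The crux is that the $\ell_1$-geometry forbids all these hops from being long at once, and I would organise the case analysis around the distance $d$ from $O$ to its nearest robot. If $d$ is bounded away from $1$, then $R_0$ reaches a first helper early, at time $d$ and at distance $d$ from $O$; this helper inherits enough slack to run its own wake-two-and-return, and a short doubling of the team (one helper into each of the two triangles, each running a sweeping route) wakes all six and returns them by time $3$ — here I would lean on the building blocks and, where convenient, on \cref{lem:S5} to wake a batch of robots within two units and return them in the third. If instead $d$ is close to $1$, then \emph{all} six robots are confined to the thin region near the far boundary $A$–$B$–$C$, which is coverable by two short sweeps (one along each far edge); two workers sweeping these edges wake everyone by time $2$, after which each robot, being within distance $1$ of $O$, returns by time $3$. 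In every branch I would exhibit the explicit routes and verify, using the center-$P$ radius bound and $\Lambda(\ell_1)=4$, that each route has length at most $3$ and each return deadline is met.

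The hard part, and where most of the work concentrates, is precisely the regime with no early helper — when $R_0$'s first productive hop already costs nearly a full unit, leaving room for essentially one more waking round before the return deadline. I expect to close it exactly as above: forcing the nearest robot to distance $\approx 1$ pins all robots into a thin band along the two far edges, where a constant number of sweeps covers them, so that sweeping (rather than one-robot-per-hop waking) restores the count from three up to six. Turning the qualitative statement ``the hops cannot all be long simultaneously'' into clean inequalities that close every branch of the case split — in particular matching the makespan bound of $3$ tightly in the configuration with robots near $A$, $B$, $C$ — is the main obstacle; the center-ball bound and the two building blocks are what keep each individual verification to a short computation.
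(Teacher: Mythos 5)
There is a genuine gap, and it sits exactly where you yourself locate the difficulty: the hard regime is not closed, and the strategy you propose for it makes an irrevocably wrong first move. Concretely, take your coordinates $O=(0,0)$, $A=(1/2,1/2)$, $B=(1,0)$, $C=(1/2,-1/2)$, put one sleeping robot at distance $\delta$ from $A$ towards $O$ (i.e., at $(1/2-\delta/2,\,1/2-\delta/2)$ for a small $\delta>0$), and spread the five other robots along the far edge $[BC]$ with pairwise $\ell_1$-distances at least $1/5$. Then $d=1-\delta$ is close to $1$, the unique nearest robot is the one near $A$, and your strategy sends $R_0$ there first. But every point of $[BC]$ is at $\ell_1$-distance exactly $1$ from $A$ (and at distance $1+\delta$ from the perturbed point), while every robot on $[BC]$ is at distance exactly $1$ from $O$, hence must be woken by time $2$ to return by time $3$. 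At time $1-\delta$ you have two workers near $A$, each needing $1+\delta$ time units to reach any point of $[BC]$; so by time $2$ they wake at most two of the five robots there, and the schedule is dead no matter how the subsequent sweeps are organized. The claim that the thin band is ``coverable by two short sweeps'' from the first woken robot is false whenever that robot sits at an end of the band; and waking a mid-chain robot first fails too, since then the robot near $A$ can no longer be reached by time $2$. The crux you are missing is that the first target must be chosen from the \emph{monotone structure}, not by nearness: in the paper's proof (its Case 3, when no monotone pair from $p_0$ exists), the points below and above $p_0$ each form a single monotone chain, the robot woken first is the \emph{rightmost} one, and from there one worker sweeps the lower chain (a SW-monotone path of length at most $1$) while the other reaches the upper chain at exactly time $2$.

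Your dispatch template (``$R_0$ wakes two robots, the first of those wakes one more, all remaining robots are leaves woken by time $2$'') is also provably too rigid, independently of the first-move issue. Consider the configuration of \cref{prop:13/6}: one robot at the top corner, one at distance $1/6$ from it towards the root, and four robots on the far lower edge pairwise at distance at least $1/3$. The five robots at distance exactly $1$ from the root must all be woken by time $2$, and a short case check shows no tree of your shape achieves this: one needs a single worker to sweep all four robots of the far edge as one monotone chain, and one needs the remaining robot (at distance $1-1/6$ from the root) to be woken \emph{after} time $2$, en route of a branch returning to the origin --- both features lie outside your template, and both are exactly how the paper's Case 2 closes this instance. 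In short, your accounting (a robot woken at $q$ by time $3-\ell_1(q)$ returns in time), your sweep primitive, and your identification of the hard regime are sound, but the two structural decisions the proof is built on --- nearest-first, and the rigid tree shape --- each fail on concrete instances, and repairing them essentially amounts to redoing the paper's case analysis based on existence and extremality of monotonic paths from $p_0$.
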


\begin{restatable}{lemma}{LemmaT}\label{lem:T}
  A robot located at any of the three corners of a triangle $T$, or
  two robots located at a same point on a side of $T$ (not the
  hypotenuse) can wake up all the robots in $T$ in two time units.
\end{restatable}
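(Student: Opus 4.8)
The plan is to reduce to diameter $1$ by scaling (a region of diameter $d$ will be handled with makespan at most $2d$), and then to pass to the rotated coordinates $u=x+y$, $v=x-y$, under which an $\ell_1$-displacement $(x,y)$ has $\ell_1$-length $|x|+|y|=\max(|u|,|v|)$, i.e.\ the $\elli$-length of $(u,v)$. In these coordinates the canonical triangle becomes the axis-aligned right triangle $T'=\set{(u,v):0\le v\le u\le 1}$: the hypotenuse is the diagonal $v=u$, the two legs are the axis-parallel edges $v=0$ and $u=1$, the three corners are $(0,0),(1,0),(1,1)$ (the last being the original apex), and a point on a leg of $T$ corresponds to a point on an axis-parallel edge of $T'$. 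The reason to work in $\elli$ is that a robot may change one coordinate for free while advancing the other, which is exactly what makes a cone of this shape cheap to sweep; this is the $\elli$ incarnation of the idea behind the \splitconestr.

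The proof is by induction on the diameter, carried out \emph{simultaneously} for the two interface configurations of the statement — one robot at a corner, and two robots sharing a point of a leg — both with the common bound $2d$. The engine is a single local step based on cutting $T'$ with the lines $u=1/2$ and $v=1/2$: this splits $T'$ into two triangles similar to $T'$, namely $\set{0\le v\le u\le 1/2}$ with base corner $(0,0)$ and $\set{1/2\le v\le u\le 1}$ with base corner $(1/2,1/2)$, together with one $\elli$-square $[1/2,1]\times[0,1/2]$, each of diameter $1/2$. I will show that, from either interface configuration, the robots can, using extra travel at most $d$, recruit helpers and deliver to each nonempty child the interface configuration its induction hypothesis requires; note that every child corner $(0,0)$, $(1/2,0)$, $(1/2,1/2)$ lies within $\elli$-distance $1$ of every corner of $T'$, and that both triangular children are entered at a base corner, matching the recursive interface. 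This yields the recurrence $C(d)\le d+C(d/2)$, hence $C(d)\le d\sum_{k\ge 0}2^{-k}=2d$, and in particular $C(1)\le 2$; the recursion bottoms out as soon as a region holds at most one robot, which the entering robot simply walks to. The bound is exactly $2$ because it is already tight at the apex start: the branch delivering a robot from $(1,0)$ to $(0,0)$ costs the full level-$0$ budget $1$, after which its child consumes another $2\cdot(1/2)=1$.

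The two starting configurations are proven together because the recursion and its companion square routine exchange regions in \emph{both} forms: a robot entering at a corner first walks to wake a partner, momentarily producing two co-located robots on a leg, and from that state the three children are seeded in parallel; conversely, the two-robots-on-a-leg configuration is precisely this post-recruitment state, so an instance that starts there merely skips the first hop. The square child is cleared by the analogous halving routine for $\elli$-squares (splitting into four sub-squares of half the diameter and seeding their corners within the same budget); for small robot counts this specializes to \cref{lem:S5} and \cref{lem:S6}, but the triangle recursion needs the general any-$n$ version. Since every robot spawns at most the two robots of its split, the resulting wake-up tree is automatically binary, and by the mirror symmetry of $T$ the three corners reduce to two cases — a base corner and the apex — and the two legs to a single case.

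The crux, and the bulk of the case analysis, is establishing the local step \emph{against adversarial placements}, since a robot can be woken only where a sleeping robot actually sits, so the schedule must interleave ``travel to the nearest useful robot'' with the geometric subdivision. The transparent case is a dense placement, where a robot is available wherever a delivery path passes and recruitment is essentially free; the delicate work is to check that sparse and clustered placements are never worse (a region with few robots being cleared by a short direct tour rather than a split) and that along every root-to-leaf branch the accumulated extra travel stays within the per-level budget, so the geometric series closes at exactly $2$ rather than some larger constant. Pinning the constant therefore demands care precisely where the bound is tight — the apex corner — and in the degenerate thin, small, and shared-boundary regions, as well as in forcing the square sub-routine to respect the same $2d$ budget for arbitrarily many robots.
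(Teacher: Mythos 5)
There is a genuine gap, and it is fatal to the stated recurrence. Your subdivision of the triangle (in rotated coordinates, cutting at $u=1/2$ and $v=1/2$) produces two half-size triangles \emph{and one half-size square}, and you propose to clear that square child by a ``general any-$n$ version'' of \cref{lem:S5}/\cref{lem:S6}: a robot at a corner of a square of diameter $d$ waking \emph{any} number of robots in time $2d$. No such lemma exists, and it is in fact false: the paper's own tightness result (\cref{prop:13/6}, \cref{sec:tightS5}) exhibits six robots in a unit-diameter square that force makespan at least $13/6 > 2$ from a corner. This is precisely why the paper's recursion is triangle-based and why squares appear in the paper only with bounded robot counts ($n\le 5$ in \cref{lem:S5}, $n=6$ with the weaker guarantee of \cref{lem:S6}); the paper's proof of \cref{lem:T} instead subdivides into \emph{four} triangles ($T_A,T_B,T_C,T_0$) so that every recursive call is again a triangle entered through one of the three admissible interfaces (Cases A, B, C). With a square child in your decomposition, the recurrence $C(d)\le d + C(d/2)$ has nothing to bottom out on.

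Separately, your ``local step'' --- that from either interface one can, with extra travel at most $d$, recruit helpers and deliver the required interface to each nonempty child --- is asserted, not proved, and it is exactly where all the difficulty of \cref{lem:T} lives (the paper spends fourteen subcases on it, driven by counting robots in $P_B$ and $T_B$ and by explicit monotonic and $2$-monotonic path constructions). The assertion is not routine: robots can only be cloned at positions of sleeping robots, so seeding up to three children from a single corner robot costs recruitment travel whose feasibility within budget $d$ depends delicately on the adversarial placement. A concrete failure mode of your \emph{fixed} halving split: start at the apex with all sleepers clustered at distance close to $1$ in the far base triangle. The lone robot spends essentially the whole level budget reaching the nearest sleeper, and then holds two co-located robots at an \emph{arbitrary interior point} of that child --- an interface your induction hypothesis (corner, or two robots on a leg) does not cover. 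The paper avoids this exact trap by making Case A's subdivision \emph{adaptive}: the nearest sleeper at distance $d'$ defines an empty square of diameter $d'$, and the recursion is into two triangles of diameter $1-d'$ entered at their apexes, giving $2d' + 2(1-d') = 2$. So both the decomposition and the induction interface need to change before your outline can be turned into a proof.
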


A significant part of the paper is devoted to proving these lemmas. In
particular, \cref{lem:T} is proved by an induction involving 14
subcases. Equipped with these lemmas, the algorithm can be described
in a compact way as follows.

\Cyril{Il y'a une petite confusion quand on parle de "square of
  diameter $r$". On veut dire deux choses en une: "un carré de
  diamètre $r$" c'est un carré tel que ... Et si $S$ est un carré de
  diamètre $r$ selon la définition précédente, alors la plus grande
  des distances, son diamètre, se trouve être $r$. Je suggère donc
  d'utiliser dans la définition de carré "diagonal", puis de dire:
  "Observe that a square of diagonal $r$ has diameter $r$." Idem pour
  triangle.}

\Cyril{Pour le cas $6 \le n_0 \le 10$, je n'utilise pas le
  \cref{lem:T}. Cela me paraît plus économe, le cas dur étant
  concentrer pour la fin ...} \Arnaud{J'ai remplacé par une preuve ad
  hoc, très simple, pour éviter la dépandance à une autre section}
  
  \begin{proof}[The algorithm]
    The strategy is split into four scenarios as follows, depending on
    the number $n_0$ of robots in the densest square.

    \begin{itemize}


    \item \textbf{\boldmath{$n_0 = 1$.}} In this case, there are at
      most four robots to be awakened. The initiator wakes up one of
      them in one time unit. We now have two awake robots. Each of
      them independently wakes up another sleeping robot, in at most
      two time units (largest possible distance within the unit
      $\ell_1$-disk).  Then, any of the awake robot wakes up the last
      robot in at most two time units, which gives a total makespan of
      at most $1+2+2 = 5$.

  \item \textbf{\boldmath{$2\le n_0 \le 5$.}} We recruit $n_0$ robots
    from the densest square $S$ in two time units (\cref{lem:S5}),
    then come back to the origin (by time $2+1=3$) with
    $n_0 + 1 \ge 3$ awake robots. Since $S$ is the densest square,
    then three of the awake robots at the origin can each wake up one
    of the remaining squares (\cref{lem:S5}) in two time units, which
    gives a total of at most $3 + 2 = 5$.

  \item \textbf{\boldmath{$6 \le n_0 \le 10$.}} We recruit $6$ robots
    (chosen arbitrarily) in the densest square $S$ and move them to
    the origin in $3$ time units (\cref{lem:S6}). Together with the
    initiator, this makes $7$ robots. One of them wakes up the
    remaining robots in $S$, which are at most $4$, in two time units
    (\cref{lem:S5}). The~$6$ others split into three teams of two
    robots, one team for each remaining square, and each robot wakes
    up half of the sleeping robots in its assigned square, again in
    two time units (\cref{lem:S5}), which gives a total of at most $3 + 2 = 5$.

  \item \textbf{\boldmath{$n_0 \ge 11$.}} The densest square $S$ must
    contain a triangle $T$ with at least $\ceil{n_0/2} \ge 6$ sleeping
    robots. We wake up all the robots of $T$ in 2 time units
    (\cref{lem:T}) and move them to the origin. This makes
    at least $7$ robots. Each of them wakes up a remaining triangle
    in 2 time units (\cref{lem:T} again), which gives a
    total of at most $2+1+2 = 5$.

  \end{itemize}
  And this completes the proof of \cref{th:main}.
\end{proof}

\subsection{Preamble}

\Cyril{J'aurai tendance à dire "... robots are represented by a
  sequence $p_0, p_1, \dots, p_n$ of points taken ...", l'idée étant
  qu'on peut avoir plusieurs robots à la même position, $p_i = p_j$
  par exemple. Cela ne gêne en rien la preuve et nos résultats.}
\Arnaud{J'ai mis ``multiset'' à la place.}

The positions of robots are given as a multiset of points
$\{p_0, p_1, \dots, p_n\}$ taken in the unit $\ell_1$-disk, where
$p_0$ is the position of the awake robot, and $p_1,\dots,p_n$ the
positions of the $n$ sleeping robots. We use the notation $|p_ip_j|$
to denote the $\ell_1$-distance between $p_i$ and $p_j$, i.e.,
$|p_ip_j| = \ell_1(p_j-p_i) = |x(p_j)-x(p_i)| + |y(p_j)-y(p_i)|$.

\Cyril{Ne faut-il pas le sens large pour la relation d'ordre?
  $p <_x q$? Car même si $p\neq q$, on pourrait avoir $x(p) = x(q)$.}

We define two orderings for points $p,q \in \bbR^2$. Namely,
$p \le_x q$ if the $x$-coordinate of $p$ is no more than the one of
$q$.  We also say the $p$ is on \emph{the right} of $q$ (or that $q$
is on \emph{the left} of $p$). Similarly, $p \le_y q$ if the
$y$-coordinate of $p$ is no more than the one of $q$, and we say that
$p$ is \emph{below} $q$ (or that $q$ is \emph{above} $p$).


\Cyril{J'ai utilisé la notation $A-B-C-...$ pour les chemins,
  notamment dans les preuves de \cref{th:gamma4}. Pour les formes
  géométriques, disons un triangle, j'utilise $ABC$ en collant les
  lettres, plutôt que $(A,B,C)$ ou autre variante $A,B,C$ qu'on peut
  confondre avec les chemins. Il faut essayer de s'y tenir. }
\Arnaud{(a,b,c) pour les chemins me semble plus standard, mais à voir...
je tente ça.}

\paragraph*{Monotonic paths}
A path $(p_0, p_1, \cdots, p_t)$, $t\ge 1$ is \emph{monotonic} if it
is compatible with both $\le_x$ and $\le_y$. More precisely,
for each $i \in\range{1}{t}$, we must have
($p_{i-1}\le_x p_i \Leftrightarrow p_0\le_x p_1$) and
($p_{i-1}\le_y p_i \Leftrightarrow p_0\le_y p_1$). In other words,
monotonic paths use points that are always going in the same direction
w.r.t. the four quadrants: North-East (NE), North-West (NW),
South-West (SW), South-East (SE).

\Cyril{Il va falloir se poser la question des robots à la même
  position. D'un coté on s'en fout: Qui peut le plus peut le
  moins. Réveiller deux robots d'un coup, pourquoi. En plus, il arrive
  que les robots se déplacent tous vers l'origine par exemple. Donc
  avoir plusieurs robots au même endroit fait sens. D'un autre coté,
  l'input devrait alors être une séquence et pas un ensemble. Cela
  pose aussi problème avec $p <_x q$ où clairement il faudrait que
  $p\neq q$. J'aurai tendance à dire que l'input $p_0,p_1,...,p_n$ est
  bien un ensemble deux à deux différents de positions et donc de
  robots. Que les robots se déplacent aux mêmes endroits et un autre
  problème. Il n'empêche que le wake-up tree doit couvrir le $p_i$,
  peut importe les chemins.}

A fundamental property of the $\ell_1$-norm is that all monotonic paths
are shortest paths. More formally, if $(p_0, p_1, \ldots, p_t)$ is a monotonic path, then
\[
  \sum_{1 \le i \le t} |p_{i-1} p_i| ~=~ |p_0 p_t| ~.
\]

A path is \emph{$k$-monotonic} if it can be subdivided into
$k$ monotonic subpaths. We thus have:

\begin{lemma}\label{lem:monotonic}
  In a region of diameter $D$, the length of a $k$-monotonic path is at
  most $k D$.
\end{lemma}

Our algorithm exploits monotonic paths on several occasions, in order to wake up intermediate robots at no additional costs. 

\begin{lemma}\label{lem:aligned}
  Any set of $5$ points (or more) contains a monotonic path of length $3$.
\end{lemma}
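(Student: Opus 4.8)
The statement is the balanced case of the Erd\H{o}s--Szekeres theorem: with $n = 5 = (3-1)(3-1)+1$ points we are guaranteed a monotone chain through three points, which is exactly a monotonic path of length $3$. The plan is to reformulate ``monotonic path'' as a chain in one of two orders and then run the classical pigeonhole argument. Concretely, up to reversing it, a monotonic path on three points $(p,q,r)$ consists of three points that are simultaneously ordered by $\le_x$ and whose $y$-coordinates are either all non-decreasing (a NE/SW path) or all non-increasing (a NW/SE path). So it suffices to exhibit three points forming a chain in the ``$\le_x$ and non-decreasing $y$'' order, or three points forming a chain in the ``$\le_x$ and non-increasing $y$'' order.

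The main step is to order the five points $q_1,\dots,q_5$ by $\le_x$, breaking ties by $\le_y$ so that the index order is compatible with $\le_x$, and to attach to each $q_i$ a label $(a_i,b_i)$, where $a_i$ is the largest number of points on a monotonic non-decreasing-$y$ path ending at $q_i$, and $b_i$ the analogue for a strictly-decreasing-$y$ path. The one-line observation driving everything is that for $i<j$ either $y(q_i)\le y(q_j)$, in which case appending $q_j$ yields $a_j > a_i$, or $y(q_i) > y(q_j)$, in which case $b_j > b_i$; hence distinct indices receive distinct labels. If no three points formed a monotonic path, then every $a_i$ and every $b_i$ would lie in $\{1,2\}$, leaving only four possible labels for five points --- impossible by pigeonhole. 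Therefore some $a_i = 3$ or some $b_i = 3$, and either value exhibits the desired length-$3$ path.

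The one delicate point --- and the only place the argument departs from the textbook version --- is that the robots form a \emph{multiset}, so two points may share a coordinate or even coincide, and the clean Erd\H{o}s--Szekeres hypothesis of distinct reals fails. I expect this to be the main (though minor) obstacle. The $\le_x$-then-$\le_y$ tie-break resolves it: a non-decreasing-$y$ subsequence is automatically $\le_x$-monotone, a strictly-decreasing-$y$ subsequence forces \emph{strictly} increasing $x$ (equal $x$ would force increasing $y$ through the tie-break, a contradiction), and the only residual degeneracy is a pair of coincident points, which merely contributes a zero-length segment and is thus harmless for monotonicity. With these equalities accounted for, the combinatorial core reduces to the standard pigeonhole above.
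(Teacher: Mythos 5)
Your proof is correct, and at its core it is the same reduction as the paper's: sort the five points by $\le_x$ and look for a monotone subsequence of length $3$ in the $y$-coordinates, with the threshold $5=(3-1)(3-1)+1$. The difference is in how that combinatorial fact is established. The paper simply cites the Erd\H{o}s--Szekeres theorem as a black box (stated for sequences of \emph{distinct} numbers) and stops there; you instead re-prove it inline via the classical pigeonhole labelling $(a_i,b_i)$, and --- more importantly --- you explicitly repair the degenerate cases: the input is a multiset, so $x$- or $y$-coordinates may coincide or points may even be equal, and the ``distinct numbers'' hypothesis of the cited theorem does not literally hold. Your $\le_x$-then-$\le_y$ tie-break, the asymmetric choice of ``non-decreasing $y$'' versus ``strictly decreasing $y$'' chains, and the observation that a strictly-decreasing-$y$ chain forces strictly increasing $x$ are exactly what is needed to make the sorted-then-subsequence argument produce a genuine monotonic path in the paper's sense (compatible with both $\le_x$ and $\le_y$), including through coincident points. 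So your version buys self-containedness and closes a small rigor gap that the paper's one-line citation leaves open, at the cost of a slightly longer argument.
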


\begin{proof}
  By the Erd\H{o}s-Szekeres theorem, given $\alpha$ and $\beta$, any
  sequence of distinct numbers of length at least
  $(\alpha-1)(\beta - 1) + 1$ contains a monotonically increasing
  subsequence of length $\alpha$ or a monotonically decreasing
  subsequence of length $\beta$. In two dimensions, one can first
  order the points according to $\le_x$, then consider the
  $y$-coordinates as the sequence of interest. The result follows by
  taking $\alpha = \beta = 3$.
\end{proof}

\Cyril{J'ai sucré l'algo naif pour plein de raison: 1. on ne s'en que
  sert pour $n=4$ de la preuve \cref{th:main}, et donc autant utiliser
  le théorème $\gamma_4(\eta) = 1 + \lme$ (et aussi dans
  \cref{lem:S5}). 2. Bender et al. avait déjà la notion d'algo
  "nonlazy" qui fait pareil ou presque. 3. La strategie \heapstr va
  faire mieux en temps $O(n)$. Il y a déjà beaucoup de lemme. Autant
  factoriser et réutiliser ce qu'on a déjà.}

%


\Cyril{Il faut reprendre les notations. On en défini certaines
  $p\to p'$, mais dans les preuves on en utilise d'autres, non
  définies, comme: $p\to A(2),A'|A'\to q$ où du genre. Pour moi ce
  n'est pas facile à lire. Je veux dire que c'est très précis, mais
  que détecter des erreurs dans ce genre d'expression n'est pas facile
  du tout. Etant donné qu'on passe son temps à décrire des branches de
  avec $1$, $2$ voir $3$ arête au maximum, on doit pouvoir faire des
  vraies phrases. Le problème aussi est que dans toutes les autres
  preuves, on utilise pas ces notations. Dans la preuve de
  $\gamma_4(\eta) = 1 + \lme$, j'écris par exemple qu'on a des
  branches $O - A - B - C$ et $O - A - D$. J'aurai du écrire
  $O\to A\to B\to C$ et $O\to A\to D$. Je vais voir si on peut pas
  faire un truc simple en ligne (macros) avec des flèches parallèles
  (ou doubles si deux robots) pour décrire les morceaux d'arbres, voir
  faire des micros figures. Après, pour l'histoire des chemins
  géométriques, je propose simplement de dire que $p \to q$ signifie
  que le robot en $p$ se déplace en $q$, qui peut être l'emplacement
  d'un robot qu'on réveille, ou un simplement point (géométrique) sans
  aucun robot endormis. Dire qu'il y a deux types de points est plus
  important que de parler d'arbres avec virages (géométriques) ou
  pas. C'est en fait l'analyse qui produit des chemins qui ne sont pas
  en ligne droite.}

In the following subsections, we show how to wake up between $1$ and $6$ robots in a
square~$S$ of diameter~$1$ starting at one of its corner. For technical
reasons, we must
distinguish the case $n\le 5$ with makespan~$2$ (\cref{lem:S5}) and
the case $n\le 6$ with makespan~$3$ including the return to origin
(\cref{lem:S6}). Then, we give part of the proof of \cref{lem:T},
the remaining part being deferred to \cref{sec:S-T}.

\subsection{Proof of \cref{lem:S5}}


\LemmaSfive*


\begin{proof}
  Without loss of generality, suppose that the awake robot is located
  at a point $p_0$ at the left corner of the square. Thus,
  all the sleeping robots are on its right. We deal with a few cases separately:
  \begin{itemize}
    
  \item $n\le 3$: Any wake-up tree of depth~$2$ works, since
    the diameter of $S$ is~$1$.
		
  \item $n = 4$: By \cref{lem:aligned}, $S$ contains three robots
    whose positions define a monotonic path $(p_i,p_j,p_k)$ of length
    three (including possibly $p_0$). If we omit $p_j$, then by the
    previous case, any wake-up tree has makespan $2$. Pick a tree
    where one of the branch goes from $p_i$ to $p_k$. Then, we can
    insert $p_j$ on the way from $p_i$ to $p_k$ without impacting the
    makespan.
		
  \item $n = 5$: Either $p_0$ belongs to a monotonic path, or it does
    not. If it does, the initial robot wakes up the two corresponding
    robots in a single time unit, which gives us three awake robots,
    each of which wakes up one of the remaining robots in one time
    unit. If it does not, then among the sleeping robots, there is a
    unique robot $p^+$ whose $y$-coordinate is maximum, and a unique
    robot $p^-$ whose $y$-coordinate is minimum.
    Call $p_1,p_2,$ and $p_3$ the positions of the three remaining
    robots. Wlog, $p_1$ is leftmost among these points, and
    $p_2 \le_y p_3$. Now, if $p_2 \le_y p_1 \le_y p_3$, then
    $(p_0,p_1,p_2)$ or $(p_0,p_1,p_3)$ is a monotonic path from $p_0$.
    Thus, $p_1$ must also be topmost or bottommost. Wlog
    again, suppose that it is topmost and recall that it is also
    leftmost. If $p_1 \le_y p_0$, then again $(p_0,p_1,p_2)$ is a
    monotonic path from $p_0$, so $p_1 \ge_y p_0$,
    and since $(p_0,p_1,p^+)$ cannot be a monotonic path, we also
    have that $p_1 \ge_x p^+$. This implies that both
    $(p^+,p_1,p_2)$ and $(p^+,p_1,p_3)$ are monotonic paths. Based on
    these facts, the wake-up tree consists of first waking up $p^+$,
    resulting in two robots. One of them wakes up $p^-$, the other
    wakes up $p_1$. Then, one robot at $p_1$ wakes up $p_2$ and the other wakes up $p_3$. By monotonicity, the sleeping robots at $p_2$ and $p_3$ are both woken up within one time unit from $p^+$ (through $p_1$).
    \vspace{-\parskip}\vspace{-\parskip}
  \end{itemize}
\end{proof}

Note that \cref{lem:S5} is best possible in the sense that if $S$
contains exactly $6$ sleeping robots, then a makespan of $2$ may not
be achievable, which motivates the distinction between~\cref{lem:S5}
and~\cref{lem:S6}. The reader interested in this fact can have a look
at \cref{prop:13/6} and \cref{fig:counter_quarter}
in~\cref{sec:tightS5} (which is independent from our other results).

\subsection{Proof of \cref{lem:S6}}
\LemmaSsix*

\begin{proof}
  Again, suppose that the awake robot is
  located at the left corner (point $p_0$).
  We distinguish three cases:

  \Cyril{On dirait qu'il y a une hypothèse sur l'orientation gauche
    droite, haut bas de $S$. Il faut aussi virer le $r$ qui ne sert à
    rien.}

  \begin{itemize}

  \item \textbf{Case 1.} There exists a monotonic path
    $(p_0,p_i,p_j,p_k)$. By \cref{lem:monotonic}, we can have four
    awake robots located at $p_k$ in one time unit. Three of them can wake
    up the remaining three robots (separately) in one
    time unit, then each robot can move to $p_0$ in one time unit.

  \item \textbf{Case 2.} Case~1 does not hold but one or several
    monotonic paths $(p_0, p_i, p_j)$ exist. Among these, choose one
    that maximizes the $x$-coordinate of $p_j$, and among these (if
    several), choose one that minimizes the distance between the
    $y$-coordinate of $p_i$ and the $y$-coordinate of $p_0$. Wlog,
    assume that this path is monotonic in the NW direction. Since we
    are not in Case 1, no other robot may have a position $p_k$ that
    would cause $(p_0,p_k,p_i)$ or $(p_i,p_k,p_j)$ or $(p_i,p_j,p_k)$
    to be monotonic, nor $(p_0,p_i,p_k)$ to be monotonic with
    $p_k \ge_x p_j$ because $p_j$ is the rightmost such node. The
    forbidden regions are depicted in gray in the figure below (left).
    Note that this separates the authorized regions into an upper and
    a lower region (in white).

    \begin{center}
    \begin{tikzpicture}[scale=3.5]
      \path (0,0) coordinate (A);
      \path (0.5,.5) coordinate (B);
      \path (1,0) coordinate (C);
      \path (0.5,-.5) coordinate (D);

      \path (0.35,0.1) coordinate (I);
      \path (0.35,0.23) coordinate (I2);
      \path (0.6,0.23) coordinate (J);
      \path (0.6,0.1) coordinate (J2);
      \path (0.6,0) coordinate (J3);
      \path (0.1,0.1) coordinate (X);
      \path (0.6,0.4) coordinate (Y);
      \path (0.9,0.1) coordinate (Z);
      \path (0,-0.55) coordinate (bidon);

      \draw (A)--(B)--(C)--(D)--(A);

      \draw[fill=lightgray!60] (A)--(X)--(I)--(I2)--(J)--(Y)--(Z)--(J2)--(J3)--cycle;

      \draw (A) node[inner sep=1.5pt,circle,fill=gray,label={[label distance=-2.5pt]180:$p_0$}] (A){};
      \draw (I) node[inner sep=1.5pt,circle,fill=gray,label={[label distance=-2.5pt]0:$p_i$}] (I){};
      \draw (J) node[inner sep=1.5pt,circle,fill=gray,label={[label distance=-2.5pt]0:$p_j$}] (J){};
    \end{tikzpicture}
    \hspace{.3cm}
    \begin{tikzpicture}[scale=3.5]
      \path (0,0) coordinate (A);
      \path (0.5,.5) coordinate (B);
      \path (1,0) coordinate (C);
      \path (0.5,-.5) coordinate (D);

      \path (0.35,0.1) coordinate (I);
      \path (0.35,0.23) coordinate (I2);
      \path (0.6,0.23) coordinate (J);
      \path (0.6,0.1) coordinate (J2);
      \path (0.6,0) coordinate (J3);
      \path (0.1,0.1) coordinate (X);
      \path (0.6,0.4) coordinate (Y);
      \path (0.9,0.1) coordinate (Z);
      \path (0,-0.55) coordinate (bidon);

      \draw (A)--(B)--(C)--(D)--(A);

      \draw[fill=lightgray!60] (A)--(X)--(I)--(I2)--(J)--(Y)--(Z)--(J2)--(J3)--cycle;

      \draw (A) node[inner sep=1.5pt,circle,fill=gray,label={[label distance=-2.5pt]180:$p_0$}] (A){};
      \draw (I) node[inner sep=1.5pt,circle,fill=gray,label={[label distance=-2.5pt]0:$p_i$}] (I){};
      \draw (J) node[inner sep=1.5pt,circle,fill=gray,label={[label distance=-2.5pt]0:$p_j$}] (J){};

      \draw[dashed] (J3)--(.6,-.4);

      \path[draw=none] (.46,.33) node {$\le 1$};
      \path[draw=none] (.42,-.2) node {$\le 1$};
      \path[draw=none] (.75,-.1) node {$\ge 2$};
    \end{tikzpicture}
    \hspace{.3cm}
    \begin{tikzpicture}[scale=3.5]
      \path (0,0) coordinate (A);
      \path (0.5,.5) coordinate (B);
      \path (1,0) coordinate (C);
      \path (0.5,-.5) coordinate (D);

      \path (0.35,0.1) coordinate (I);
      \path (0.6,0.23) coordinate (J);
      \path (0.82,-0.08) coordinate (K);
      \path (0.35,0.23) coordinate (I2);
      \path (0.6,0.1) coordinate (J2);
      \path (0.6,0) coordinate (J3);
      \path (0.6,-.1) coordinate (J4);
      \path (0.1,0.1) coordinate (X);
      \path (0.6,0.4) coordinate (Y);
      \path (0.9,0.1) coordinate (Z);
      \path (0.82,-0.18) coordinate (V);
      \path (0.08,-0.08) coordinate (W);
      \path (0.6,-0.4) coordinate (x);

      \draw (A)--(B)--(C)--(D)--(A);

      \draw[fill=lightgray!60] (A)--(X)--(I)--(I2)--(J)--(Y)--(Z)--(J2)--(J3)--cycle;
      \draw[fill=lightgray!60] (A)--(J3)--(J2)--(Z)--(C)--(V)--(K)--(W)--(A)--cycle;

      \draw (A) node[inner sep=1.5pt,circle,fill=gray,label={[label distance=-2.5pt]180:$p_0$}] (A){};
      \draw (I) node[inner sep=1.5pt,circle,fill=gray,label={[label distance=-2.5pt]0:$p_i$}] (I){};
      \draw (J) node[inner sep=1.5pt,circle,fill=gray,label={[label distance=-2.5pt]0:$p_j$}] (J){};
      \draw (K) node[inner sep=1.5pt,circle,fill=gray,label={[label distance=-2.5pt]$p_k$}] (K){};
      \draw (x) node[inner sep=1.5pt,circle,draw=gray,fill=white,label={[label distance=-1pt]0:$x$}] (x){};
      \draw (D) node[inner sep=1.5pt,circle,draw=gray,fill=white,label={[label distance=-1pt]0:$y$}] (y){};
      \draw[dashed] (J4)--(x);
    \end{tikzpicture}
  \end{center}
  
  Among the remaining points (whatever placed), either two points $p$
  and $p'$ exist such that $P_1 = (p_i,p,p',p_0)$ or
  $P_2 = (p_j,p,p',p_0)$ is a $2$-monotonic path, or no such pairs
  exist. If it exists, then the wake-up tree is as follows. The robot
  at $p_0$ wakes up $p_i$ and $p_j$ in one time unit. The robot at
  $p_i$ stays at $p_i$, so we have one robot at $p_i$ and two at
  $p_j$. Then, depending on whether $P_1$ or $P_2$ exists (if both
  exist, pick any), the corresponding robot wakes up $p$ and $p'$ and
  return with them at the origin. The other two wake up the last two
  robots (independently), and return with them at the origin. Overall,
  each robot has moved along a path that is at most $3$-monotonic.

  If neither $P_1$ nor $P_2$ exist, the strategy is different. First,
  observe that having two or more robots in the upper region would
  create a $2$-monotonic path from either $p_i$ or $p_j$ to $p_0$
  through these robots, thus the upper region contain at most one
  robot. By the same argument (from $p_j$ alone), the part of the
  lower region at the left of $p_j$ also contains at most one robot,
  so the situation is as depicted on the figure (middle). In
  particular, the lower right region is not empty. Let $p_k$ be the
  position of the rightmost robot in this region. Only one robot has
  such a $x$-coordinate, as otherwise, $p_j$ would have had a
  $2$-monotonic path through them towards $p_0$. For the same reason,
  none of the remaining robots are above $p_k$, and since $p_k$ is
  rightmost, no robots are on its right either, see the figure (right)
  for the remaining possible zones. Apart from $p_k$, the lower right
  region has between 1 and 3 robots; however, if they are at least
  $2$, then they must be aligned along a SW monotonic path from $p_k$,
  as otherwise (again), $p_j$ would have a $2$-monotonic path through
  them. Thus, a single monotonic path from $p_k$ can wake up all the
  robots in the lower right region. Finally, if there is a robot in
  the upper region, then it is possible to find a $2$-monotonic path
  from $p_k$ to $p_0$ going through $p_j$ and this robot. Now that all
  these facts are stated, the wake-up tree is as follows. Here, the
  robot at $p_0$ wakes up $p_k$ first. Then, using a $2$-monotonic
  path, one of the robot wakes up $p_j$ and the potential robot in the
  upper region, and return at $p_0$. The second robot at $p_j$ wakes
  up $p_i$ on its way to $p_0$. Meanwhile, the second robot at $p_k$
  wakes up all the robots in the lower right region using a monotonic
  path that finishes either at $x$ or $y$ (see the figure), depending
  on where the potential robot in the lower left region lies. This
  robot is woken up and all the robots return to $p_0$. All these
  movements are made along paths that are at most $3$-monotonic.

\item \textbf{Case 3.} No monotonic path of the form $(p_0,p_i,p_j)$
  exists. In this case, all the points below (resp. above) the
  $y$-coordinate of $p_0$ must form a $1$-monotonic path in the NE/SW
  direction (resp. NW/SE direction). In this case, we wake up the
  rightmost robot first. Then, one of the two robots wakes up the
  upper robots (if any) and the other wakes up the lower robots (if
  any). Finally, they all move to $p_0$. All these movements are made
  along paths that are at most $3$-monotonic.
  \vspace{-\parskip}\vspace{-\parskip}
  \end{itemize}
\end{proof}

\subsection{Proof of \cref{lem:T}}
\label{sec:lem-T}
\cref{lem:T} establishes that an arbitrary number of sleeping robots in
a triangle $T$ can be woken up within two time units. The approach is
inductive, namely, waking up a triangle often reduces to waking up
smaller nested triangles (containing strictly less robots), which explains
why the formulation of the lemma addresses several starting
configurations.

\LemmaT*

Without loss of generality, we assume that the triangle $T$ is
oriented as in \cref{fig:ABC-simple}, with vertices $ABC$ and hypotenuse
$[BC]$.
\begin{figure}[htbp!]
  \centering%
  \includesvg[width=0.4\textwidth]{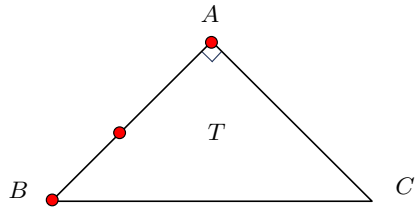}
  \caption{The triangle $T$ with vertices
    $B = (0,0)$, $C = (1,0)$ and $A = (1/2,1/2)$. In red, the possible starting points covered by the lemma.}
  \label{fig:ABC-simple}
\end{figure}
The goal is to show that all sleeping
robots in $T$ can be woken up in two time units, for each of the
possible starting configurations. Up to symmetry, these configurations are:

\begin{enumerate}
\item[]\textbf{Case A.} One awake robot is located in $A$.
  
\item[]\textbf{Case B.} One awake robot is located in $B$.
  
\item[]\textbf{Case C.} Two awake robots are located at
  a same point along segment $[AB]$.
\end{enumerate}

The strategy depend critically on how the robots are distributed
within the triangle, which gives rise to a number of subcases (14
overall). We agree that case-based proofs are not always satisfactory.
However, our proof is at least fully constructive (i.e., it yields an
actual algorithm). Furthermore, it is plausible that obtaining tight
bounds for this problem requires an unavoidable low-level scrutiny of
the instance. Indeed, many of the cases achieve the bound in a tight
way. We now proceed with the main three cases.

\subsubsection{Case A}
Here, the awake robot is located at the top of the triangle (point $A$, called $p_0$). This case does not rely on the same subdivisions as above. It uses a simpler recursion to smaller instances of Case A again, as shown in~\cref{fig:caseA}.

\begin{figure}[htbp!]
  \centering%
  \includesvg[width=0.5\textwidth]{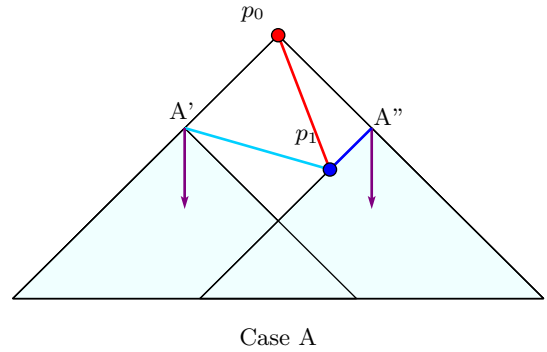}%
  \caption{In case A, the robot located at the top awakes the closest sleeping robot, then each of them applies recursively the algorithm in one of the two subtriangles. }
  \label{fig:caseA}
\end{figure}

Let $p_1$ be the closest sleeping robot from $p_0$. Let $d = |p_0p_1|$,
and let $S_A$ be the smallest square in $T$ that contains both $A$ and
$p_1$. Due to the $\ell_1$-norm, this square has diameter~$d$.
Furthermore, it is empty because $p_1$ is the closest point to $A$.
Let $A',A''$ be the points of $S_A$ intersecting $[AB]$ and $[AC]$,
respectively, and let $T'$ and $T''$ be the triangles defined
homothetically to $T$ with respect to $A'$ and $A''$. These
triangles have diameter $1-d$.

The wake-up strategy is as follows. The robot at $p_0$ wakes up the
robot at $p_1$. Then, one goes to $A'$ in order to wake up the robots
in $T'$, the other goes to $A''$ to wake up the robots in $T''$
(breaking ties arbitrarily if $T' \cap T''$ is not empty). Since $S_A$
has diameter $d$, any $2$ hops path has length at most $2d$, so the
two robots reach $A'$ and $A''$ before that time. Then, $T'$ and $T''$
are woken up in parallel (recursion of Case A), in at most $2(1-d)$ time unit, which gives a
total of $2d + 2 (1-d) = 2$ time units.

\subsubsection{Cases B}

The proof of Case B (and Case C) rely on a regular subdivision of $T$
into four smaller triangles of equal size. Call $D,E,F$ the middle
points of segments $[BC], [CA]$ and $[AB]$, respectively, and let
$T_A, T_B, T_C, T_0$ be the triangles $AFE$, $BDF$, $CED$, and $DEF$
(see \cref{fig:ABC}). Each of these triangles has diameter $1/2$.
Similarly, let $P_A$ and $P_B$ be the two parallelograms $AEDF$ and
$BDEF$. The diameter of $P_A$ is $1/2$, and the one of $P_B$ is $1$.

\begin{figure}[htbp!]
  \centering%
  \includesvg[width=0.7\textwidth]{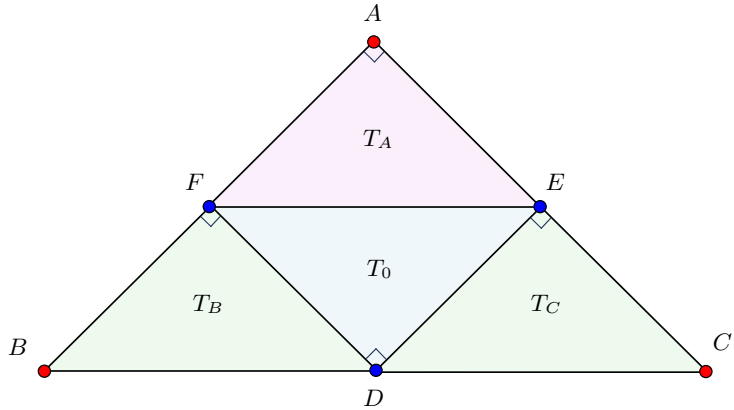}
  \caption{Canonical subdivision of the triangle $T$, with vertices
    $B = (0,0)$, $C = (1,0)$ and $A = (1/2,1/2)$.}
  \label{fig:ABC}
\end{figure}

Recall that in Case B, the awake robot start at point $B$, also referred to
as $p_0$. The case analysis depends on the distribution of nodes in the region
defined in the above subdivision, in particular the number of robots
in $P_B$ and $T_B$. A graphical summary of the subcases
is shown in \cref{fig:lemT-B}. The reader is encouraged to come back to
these pictures regularly.

\begin{figure}[htbp!]
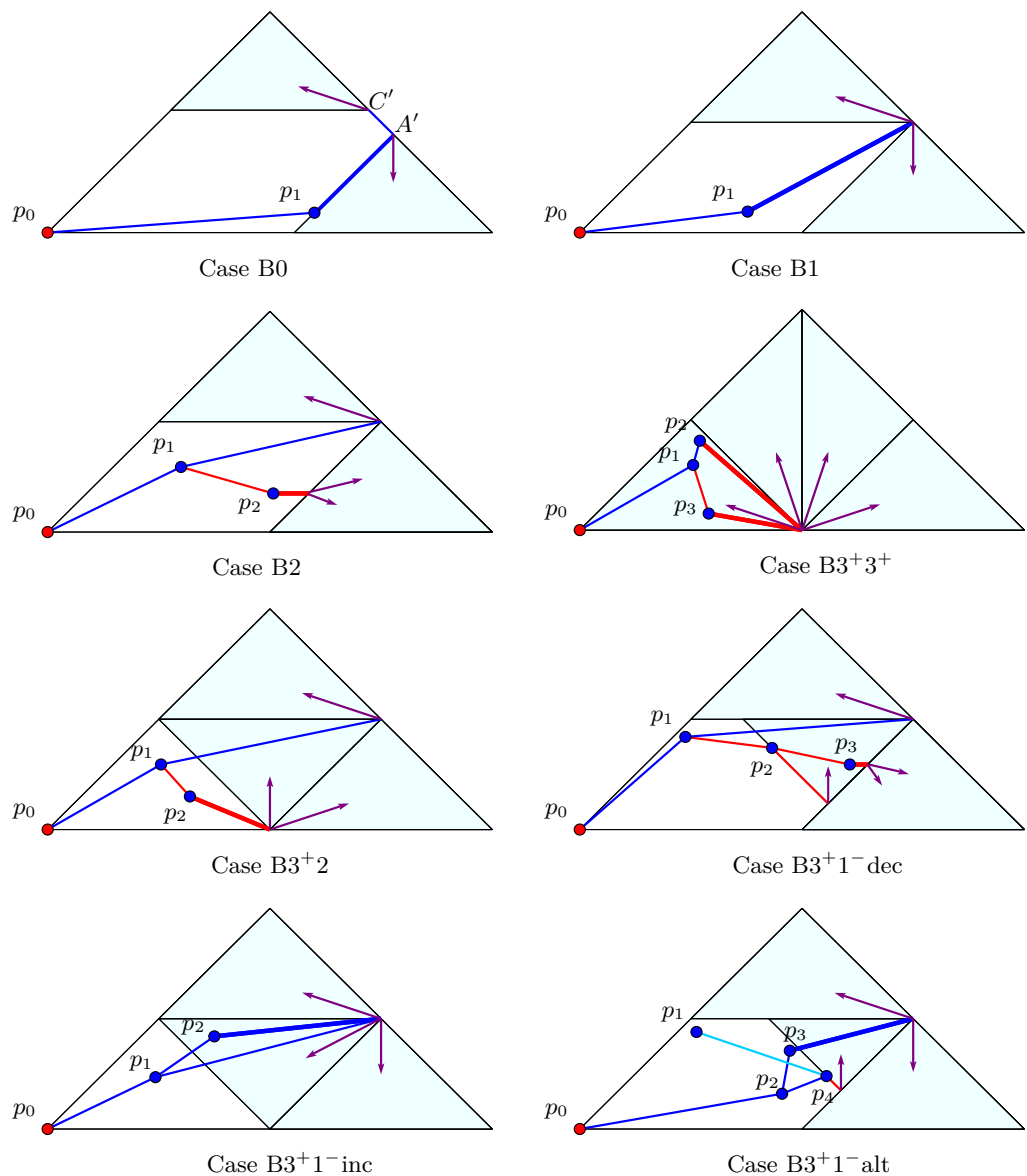

  \begin{center}
    \def\w{0.47}
    \def\j{2ex} 
    \begin{tabular}{@{}cc}
      \includesvg[width=\w\textwidth]{case_B0.svg}
      & \includesvg[width=\w\textwidth]{case_B1.svg}\\[\j]
      \includesvg[width=\w\textwidth]{case_B2.svg}
      & \includesvg[width=\w\textwidth]{case_B3+3+.svg}\\[\j]
      \includesvg[width=\w\textwidth]{case_B3+2.svg}
      & \includesvg[width=\w\textwidth]{case_B3+dec.svg}\\[\j]
      \includesvg[width=\w\textwidth]{case_B3+inc.svg}
      & \includesvg[width=\w\textwidth]{case_B3+alt.svg}\\[\j]      
    \end{tabular}
  \end{center}
  \caption{The 8 subcases of Case B (\cref{lem:T}). Regions in blue
    correspond to region where recursion occurs. Outgoing purple
    arrows indicates that the region will be woken up from the head
    location. A thick edge indicates that two awake robots follow the
    same path.}
	\label{fig:lemT-B}
\end{figure}


The first few cases depend on the number of sleeping robots in $P_B$. Namely, we apply $B_0$ if it is empty, $B_1$ if it contains one robot, and $B_2$ if it contains two robots. 

\begin{itemize}
	
\item \textbf{Case B0.} $P_B$ is empty. We increase the size of $P_B$
  homothetically, keeping one of its corners at $B$, until a point
  $p_1$ is found (see \cref{fig:lemT-B} - B0). The new parallelogram
  intersects with $[AC]$ in two points $C'$ and $A'$, where $C'$ is the
  highest (i.e. closest to $A$). This forms two smaller triangles
  which are homothetical to $ABC$. Because they result from
  intersecting a parallelogram, these triangles have the same size;
  namely, they have diameter $d=|AC'|=|A'C|$, which also implies that $|C'A'| = 1-2d$.

  The wake-up tree is as follows. The initial robot wakes up $p_1$.
  Depending on what side of the parallelogram $p_1$ lies on, both
  robots reach $C'$ or $A'$ using a path that is still monotonic from
  $p0$, so they arrive before one time unit. One of them then reaches
  the other point ($C'$ or $A'$) in time $1-2d$. Finally, each robot
  wakes up one of the two triangles (separately) in time $2d$,
  recursing into case A and B (respectively). Overall, the makespan is
  thus $1 + (1-2d)+2d=2$.

\item \textbf{Case B1.} $P_B$ contains one robot. The robot at $p_0$
  wakes up this robot, then both robots move to $E$ before one time
  unit, since the path $(p_0,p_1,E)$ is monotonic. Finally, one of
  them wakes up $T_A$ (recursing in Case B) and the other $T_C$
  (recursing in Case A). These triangles have half the size of $T$,
  thus the makespan is at most $1+2(1/2)=2$.

\item \textbf{Case B2.} $P_B$ contains two robots at $p_1$ and $p_2$.
  Wlog, assume $p_1 \le_x p_2$. If $(p_0,p_1,p_2)$ is monotonic, the
  strategy is the same as in Case $B_1$: the robots reach point $E$ in one
  time unit, then two of them wake up $T_A$ and $T_C$ independently.
  Otherwise, there exists a point $C^*$ of $[DE]$ such that $(p_1,p_2,C^*)$ is
  $1$-monotonic. In this case, the initial robot wakes up the robot in $p_1$, then
  moves to $E$ before one time unit and wakes up $T_A$ in
  $2(1/2)=1$ time unit. Meanwhile, the robot in $p_1$ wakes up the
  robot in $p_2$ and both move to $C^*$.

  Claim: The $2$-monotonic path $P=(p_0,p_1,p_2,C^*)$ has length at most one.

  Proof: Let $p_1=(x,y)$ and $C^*=(x',y')$. By 2-monotonicity, the
  length of the path is
  $|Bp_1| + |p_1C^*| = (x + y) + ((x'-x) + (y-y')) = 2y + x'-y'$. In
  terms of $y$-coordinate, the height of $T$ is $1/2$, thus the height
  of $P_B$ is $1/4$, and $y \le 1/4$. Moreover,
  because $C^*$ lies on $[DE]$, we have $y' = x' - 1/2$, so
  $2y + x'-y' \le 1/2 + x' - (x' - 1/2) = 1$.

  We thus have two robots located at $C^*$ before one time unit. These
  robot can wake up $T_C$ (of diameter $1/2$) in one
  time unit, by recursing in Case C.
\end{itemize}


The remaining cases address the configurations where $P_B$ contains at
least three robots. Here, we distinguish based on the number of robots
in the subtriangle $T_B$, namely whether $T_B$
contains three or more robots ($B3^+3^+$), two robots ($B3^+2$), or
only zero or one robot (B3$^{+}$1$^-$).

\begin{itemize}
\item \textbf{Case B3$^+$3$^+$.} $T_B$ contains at least three robots.
  We consider a slightly different subdivision of the part covered by
  triangles $T_A$ and $T_0$, dividing the corresponding area
  vertically into two equal triangles $ADF$ and $ADE$. Let $p_1, p_2$
  and $p_3$ be the first three points with respect to $\le_x$. The
  wake-up tree is as follows. The robot in $p_0$ wakes up the robot in
  $p_1$. Then, one of the two goes to $p_2$ and the other goes to
  $p_3$. Then, the four robots gather at $D$. Observe that all these
  paths from $p_0$ to $D$ are 2-monotonic, and $T_B$ has diameter $1/2$, thus the 
  robots arrive at $D$ before one time unit. Finally, each robot
  separately wakes up one of the triangles (of diameter $1/2$) in
  one time unit, by recursing in Case B.
	
\item \textbf{Case B3$^+$2.} $T_B$ contains two robots. Let $p_1$ and
  $p_2$ be the first two points with respect to $\le_x$. The wake-up
  tree is as follows. The robot at $p_0$ wakes up the robot at $p_1$.
  One of the robots goes directly to $E$, the other wakes up $p_2$ and
  the two resulting robots move to $D$. Observe that the path
  $(B,p_1,E)$ is $1$-monotonic, thus it has length $1$. The path
  $(B,p_1,p_2,D)$ is $2$-monotonic within a triangle of diameter
  $1/2$, so it has length $1$ as well. Finally, the robot in $E$ wakes
  up $T_A$ (recursing in Case B), one of the robots at $D$ wakes up
  $T_0$ (recursing in Case A), and the last robot wakes up $T_C$ (Case
  B again). All these triangles have diameter $1/2$, thus these
  recursive operations will take at most another time unit.
	
\item \textbf{Case B3$^+$1$^-$.} $T_B$ contains $0$ or $1$ robot. We have
  three subcases. For simplicity, we assume that $T_B$
  contains exactly one sleeping robot, thus $T_0$ contains
  two or more robots. The arguments are identical if $T_B$ is empty
  and all the robots of $P_B$ are in $T_0$.

  \Arnaud{Ces trois derniers cas sont à revoir. Beaucoup de non dit et
    l'ordre des hypothèses n'est pas encore très clair.}
  
  \begin{itemize}
		
  \item \textbf{Subcase B3$^+$1$^-$dec.} There exists a monotonic path
    $(p_1,p_2,p_3)$ in the SE direction that contains the point in
    $T_B$. Let $T_0'$ be the triangle resulting from shrinking $T_0$
    homothetically (keeping it anchored at $E$) until at least two of
    these points lie outside or along the side of $T_0'$
    (\cref{fig:lemT-B} - B3$^+$1$^-$dec). Call $D' \in [DE]$ the apex
    of $T_0'$. The wake-up tree is as follows. The robot at $p_0$
    wakes up the robot at $p_1$. One of them goes to $E$, the other
    wakes up $p_2$. Then, one of the robots in $p_2$ wakes up $p_3$
    and the other goes to $D'$. Finally, the two robots in $p_3$ move
    to any point $C^*X$ along $[DE]$ such that $(p_0,p_1,p_2,p_3,C^*)$
    is $2$-monotonic. From these locations, the algorithm recurses as
    follows: the robot in $E$ wakes up $T_A$ (Case B); the robot in
    $D'$ wakes up $T_0'$ (Case A); and the two robots in $C^*$ wake up
    $T_C$ (Case C). Since the path $(p_0,p_1,E)$ is $1$-monotonic, and
    $T_A$ has diameter $1/2$, $T_A$ will be woken up within another
    time unit, for a total of $2$ time units. Furthermore, both paths
    $(p_0,p_1,p_2,D')$ and $(p_0,p_1,p_2,p_3,C^*)$ are 2-monotonic
    within $P_B$. By the same argument as the claim in Case B2,
    both paths have length at most $1$. Thus, $T_C$ (of
    diameter $1/2$) and $T_0'$ (whose diameter is at most $1/2$) will
    also be woken up within two time units overall.
		
  \item \textbf{Subcase B3$^+$1$^-$inc.} There exists a monotonic path
    $P=(p_0, p_1, p_2)$ in the NE direction. In this case, $p_0$ wakes
    up $p_1$. One of them goes to $E$, the other wakes up $p_2$ and
    the two resulting robots go to $E$. From $E$, the three robots
    separately wake up $T_A$ (Case B), $T_0$ (Case B), and $T_B$ (Case A). By
    monotonicity, all of them arrive at $E$ before one time unit, and
    the three subtriangles have diameter $1/2$, thus the overall
    makespan is $2$.
		
  \item \textbf{Subcase B3$^+$1$^-$alt.} If we are
    neither in subcase B3$^+$1$^-$dec nor B3$^+$1$^-$inc, then the
    leftmost three points $p_1 \le_x p_2 \le_x p_3$ are such that
    $p_2 \le_y p_1$, $p_2 \le_y p_3$, and $p_3 \le_y p_1$. Thus
    $(p_0,p_2,p_3,E)$ is $1$-monotonic. The wake-up tree is as
    follows. The robot at $p_0$ wakes up $p_2$. Then, one of the two
    robots at $p_2$ wakes up $p_3$ and the two resulting robots move
    to $E$, where they wake up $T_A$ (recursing in Case B) and $T_C$
    (Case A). We are left with a robot at $p_2$. If $P_B$ contained
    exactly $3$ robots, then this robot wakes up $p_1$. Otherwise, let
    $T'_0$ be the triangle obtained by shrinking $T_0$ homothetically
    (keeping it anchored at $E$), until a new point $p_4$ lies on its
    side, and let $D'$ be the apex of $T_0'$. In this case, the path
    $(p_0,p_2,p_4,D')$ is 2-monotonic within parallelogram $P_B$, thus
    it has length at most $1$ (again, by the same claim as in Case
    B2). Thus, the robot at $p_2$ wakes up $p_4$. One of the resulting
    robot wakes up $p_1$, while the other move to $D'$ and wakes up
    $T'_0$ by recursing in Case A. Since $T'_0$ has diameter at most
    $1/2$, the overall makespan is again at most $2$.
  \end{itemize}
\end{itemize}

\subsection{Case C}
Due to space limitations, the proof of case C is deferred to
\cref{sec:S-T}. This proof is in the same spirit as the proof of
Case B and it also relies on the subvidision shown in \cref{fig:ABC}.


\section{Linear time algorithm}
\label{sec:linear}

In \cref{sec:wake-up-time}, we proved that the wake-up time of a unit $\ell_1$-disk can always be upper bounded by $5$ time units. The proof was constructive, but its time complexity is not linear. In this section, we prove that a linear time algorithm can asymptotically be achieved. More precisely, there exists a threshold $n_0$ such that if the number of sleeping robots $n$ is larger than $n_0$, then a wake-up tree of makespan less than $5$ can be computed in linear time in $n$ (\cref{th:time}). Thus, whenever $n < n_0$, one can use the constructive procedure from \cref{sec:wake-up-time}, then for larger values, one can use the linear time algorithm. Since $n_0$ is a constant, the computation time when $n< n_0$ is bounded by a constant, which implies that this combined strategy, overall, is a linear time algorithm. Due to space limitation, the content of this section is deferred to \cref{sec:linear-time}.

\section{Conclusion}
\label{sec:conclusion}

We have showed that in linear time one can produce a wake-up tree of
makespan at most five for robots in $L_1$. This wake-up ratio ``five''
is optimal: no strategy can guarantee less than five times the radius
under the $\ell_1$-norm. For $\ell_2$-norm, we have improved the best
known bound from $10.06$ to $7.07$. Some of our results are general
enough to apply to every norm.
We have also showed how to get in linear time a wake-up tree of
makespan no more than the wake-up ratio, for every norm.

Along the way, we have proposed a conjecture saying that, for every
norm $\eta$, the wake-up ratio is $1 + \Lambda$, where $\Lambda$ is
half the perimeter of the largest inscribed parallelogram of the unit
disk in $(\bbR^2,\eta)$. According to our results, the conjecture is
equivalent is saying that it is always quicker to wake up $n$ robots
than four. We have proved it for $\ell_1$ and $\elli$ norms.

As a first step towards this conjecture, it would be interesting to
determine the status of the $\ell_2$-norm whose wake-up ratio,
according to our conjecture, should be $1 + 2\sqrt{2} \approx
3.82$. Among $\ell_p$-norms, $\ell_2$ is the norm whose gap between
our upper and lower bounds on the wake-up ratio is the largest. In
spite of our efforts, we were unable to prove that, for instance, the
wort-case situation is whenever the points are all on a circle, and/or
equally distributed on the circle. One of the difficulty might be that
the longest branch, in a optimal (or near optional) wake-up tree, does
not necessarily form a convex set.

We have showed that the wake-up ratio for fixed $n$ asymptotically
decreases with $n$, i.e., $\gamma_n(\ell_2) < \gamma_4(\ell_2)$ for
large $n$ (more than $500$), but we were unable to show that this
inequality occurs for small $n$, say $n$ about $10$. Surprisingly,
experiments we have performed (see \cref{table:exp} in
\cref{sec:experiments}) show that one (at least) of the two following
likely statements must be wrong: (1) the wake-up ratio is reached for
points that are equally distributed on the unit circle; (2) for every
$n\ge 4$, $\gamma_{n+2}(\ell_2) < \gamma_n(\ell_2)$.

It might be difficult to find the exact bound of the wake-up ratio for
$\ell_2$, in the light of other constants in Computational
Geometry. This is notably the case for the \emph{stretch factor} of
the Delaunay triangulations, the maximum ratio between the distance
between any two points in the triangulation and their
$\ell_2$-distance. Despite a lot of efforts, current lower and upper
bounds for this stretch factor are $1.593$~\cite{BDLSV11} and
$1.998$~\cite{Xia13}. Gaps have been closed for $C$-Delaunay
triangulations (defined by some empty convex shape $C$), only for some
specific $C$, namely for
$C \in\set{ \mathrm{triangle,square,hexagon} }$,
see~\cite{Chew89,BGHP15,DPT21} respectively.

We summarize a list of further works:
\begin{itemize}

\item Calculate the wake-up ratio in $\ell_1$ or $\ell_2$ for a fixed
  number of $n>4$ of sleeping robots.

\item Prove or disprove that the wake-up ratio of the $\ell_2$-norm is
  $1+2\sqrt{2}$.

\item Prove or disprove that the wake-up ratio of the
  regular-hexagonal-norm\footnote{With this norm, it is easy to show
    that its unit disk (a regular hexagon) contains inscribed
    parallelograms of half-perimeter~$3$. This is clearly the largest
    possible length since~$3$ is also the half-perimeter of this disk
    (a hexagon).} is~$4$.

\item Prove or disprove \cref{conj} for $\ell_p$-norms.

\item Prove or disprove \cref{conj} for general norms.

\item Construct a linear time PTAS.

\item Extend the results to higher dimensions.

\end{itemize}



\bibliographystyle{my_alpha_doi}
\bibliography{a_tmp}


\newpage
\appendix


\section{End of the proof of \cref{lem:T}}
\label{sec:S-T}

\subsection{Case C}
We continue here with the third and last case, where two robots enter
the triangle $T$ through a point $C^*=p_0$ along its side $[AB]$. The
goal is to wake up $T$ in two time units, assuming that the diameter
of $T$ is normalized to $1$. As previously, the strategy depends on
the number of sleeping robots in certain subregions. It also depends
on whether the two robots are located along $[AF]$ or $[FB]$. Let
$P_B$ denote the parallelogram $BFED$ (i.e., the union of triangles
$T_B$ and $T_0$) and let $P_A$ denote the parallelogram $AEDF$ (union
of $T_A$ and $T_0$). Finally, let $P \in \set{P_A,P_B}$ be the
parallelogram that contains $p_0$. The main cases are as follows. If
$P$ contains no sleeping robots, we apply Case C0. Otherwise, the
strategy depends on the number of sleeping robots in the triangle
containing $p_0$. If it contains none, we apply Case C1; if it
contains exactly one, Case 2; and if it contains two or more,
Case C3. The cases are illustrated in \cref{fig:lemT-C}.




\begin{figure}[htbp!]
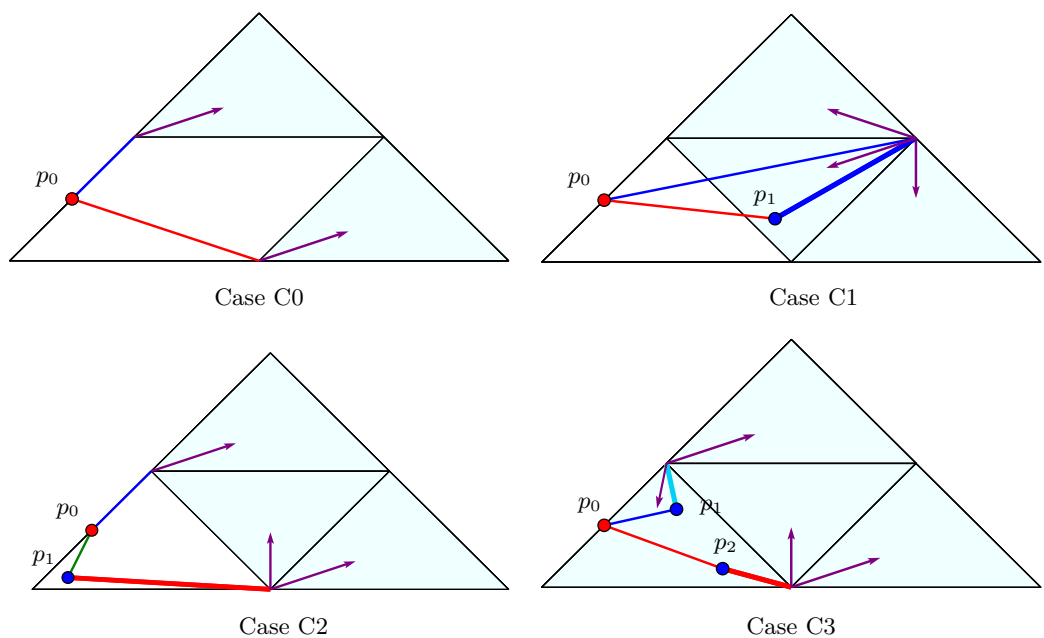

  \begin{center}
    \def\w{0.47}
    \def\j{2ex} 
    \begin{tabular}{@{}cc}
      \includesvg[width=\w\textwidth]{case_C0.svg}
      & \includesvg[width=\w\textwidth]{case_C1.svg}\\[\j]
      \includesvg[width=\w\textwidth]{case_C2.svg}
      & \includesvg[width=\w\textwidth]{case_C3.svg}
    \end{tabular}
  \end{center}
  \caption{The five subcases of Case C (\cref{lem:T}).}
	\label{fig:lemT-C}
\end{figure}


\begin{itemize}
	
\item \textbf{Case C0.} $P$ is empty. If $p_0 \in [FB]$, one robot
  goes to $D$ and wakes up $T_C$ (Case B), the other goes to $F$ and
  wakes up $T_A$ (Case B). If $p_0 \in [AF]$, one robot goes to $E$
  and wakes up $T_C$ (Case A), the other goes to $F$ and wakes up
  $T_B$ (Case A). It may happen that one of the subtriangles contains
  the same number of sleeping robots than $T$ itself, but since we
  recurse in Case A and Case B, the number of robots will
  inevitably decrease subsequently. The makespan is at most $2$.

\item \textbf{Cases C1.} $P$ is not empty and the triangle containing
  $p_0$ is empty. If $p_0 \in [FB]$, one of the robots goes directly to
  $E$, the other wakes up $p_1$. Then, the two resulting robots in
  $p_1$ go to $E$ as well. The path $(p_0,p_1,E)$ can always be
  realized through two monotonic parts, one in $T_B$ (of length at
  most $1/2$) and one in $T_0$ (same), thus it has length at most one.
  Then, the three robots wake up $T_A, T_0,$ and $T_C$ (each of
  diameter $1/2$) in one time unit by recursing in Case B, Case B, and
  Case A, respectively. If $p_0 \in [AF]$, one of the starting robots
  goes to $F$ (to wake up $T_B$ in Case A), the other wakes up $p_1$
  and move with it to $D$ before one time unit overall (by the same
  arguments). Finally, these two robots wake up $T_0$ (Case A) and
  $T_C$ (Case B) in another time unit.

\item \textbf{Cases C2.} $P$ is not empty and the triangle containing
  $p_0$ has exactly one sleeping robot, say at position $p_1$. If
  $p_0 \in [FB]$, one of the two robots goes directly to $F$. The
  other wakes up the robot at $p_1$ and the two resulting robots move
  to $D$. The path $(p_0,p_1,D)$ is at most $2$-monotonic within $T_B$
  of diameter $1/2$, thus these robots arrive at $D$ in at most one
  time unit. These two robots wake up $T_0$ (Case A) and $T_C$ (Case
  B) in another time unit. Similarly, the robot at $F$ wakes up $T_A$
  (Case B) in at most one time unit. If $p_0 \in [AF]$, one of the two
  robots goes directly to $E$, the other one wakes up $p_1$. One of
  them wakes up $T_C$ and the other goes to $F$ and wakes up $T_A$.
  The path $(p_0,p_1,F)$ is at most $2$-monotonic in a triangle of
  diameter $1/2$, thus all the robots are ready to wake up their
  assigned subtriangle before one time unit.

\item \textbf{Case C3.} $P$ is not empty and the triangle containing
  $p_0$ has at least two sleeping robots, say at positions $p_1$ and
  $p_2$. If $p_0 \in [FB]$, the two robots wake up (separately) $p_1$
  and $p_2$, which gives four awake robots. Two of them move to $F$,
  the two others move to $D$, arriving at these locations before one
  time unit (2-monotonic paths in a triangle of diameter $1/2$). From
  these locations, each of the four robots wakes up one of the four
  subtriangles. If $p_0 \in [AD]$, the strategy is the same, except
  that two of the four robots move to $F$ and the two others move to
  $E$, before waking up (separately) the four subtriangles.
\end{itemize}

\Cyril{J'ai viré toute la partie algo qui ne sert plus à rien, à mon
  humble avis. La preuve est constructive, cela suffit.}

\subsection{An illustrative scenario}

A more complex wake-up tree is shown on~\cref{fig:ex}, which involves
many different cases.\footnote{This construction was computed by an
  actual implementation of our algorithm.} From \cref{th:main}, we are
in the regime $n_0 \ge 11$, and thus we have to recruit to the densest
triangle first. Then, seven robots go back to the origin in order to
wake up the other seven triangles. In each triangle, \cref{lem:T}
applies. Along the induction, further subtriangles are considered. The
makespan may not be optimal, but it is lower than $5$
by~\cref{th:main}.

\begin{figure}[htb!]
  \centering%
  \includesvg[width=0.9\textwidth]{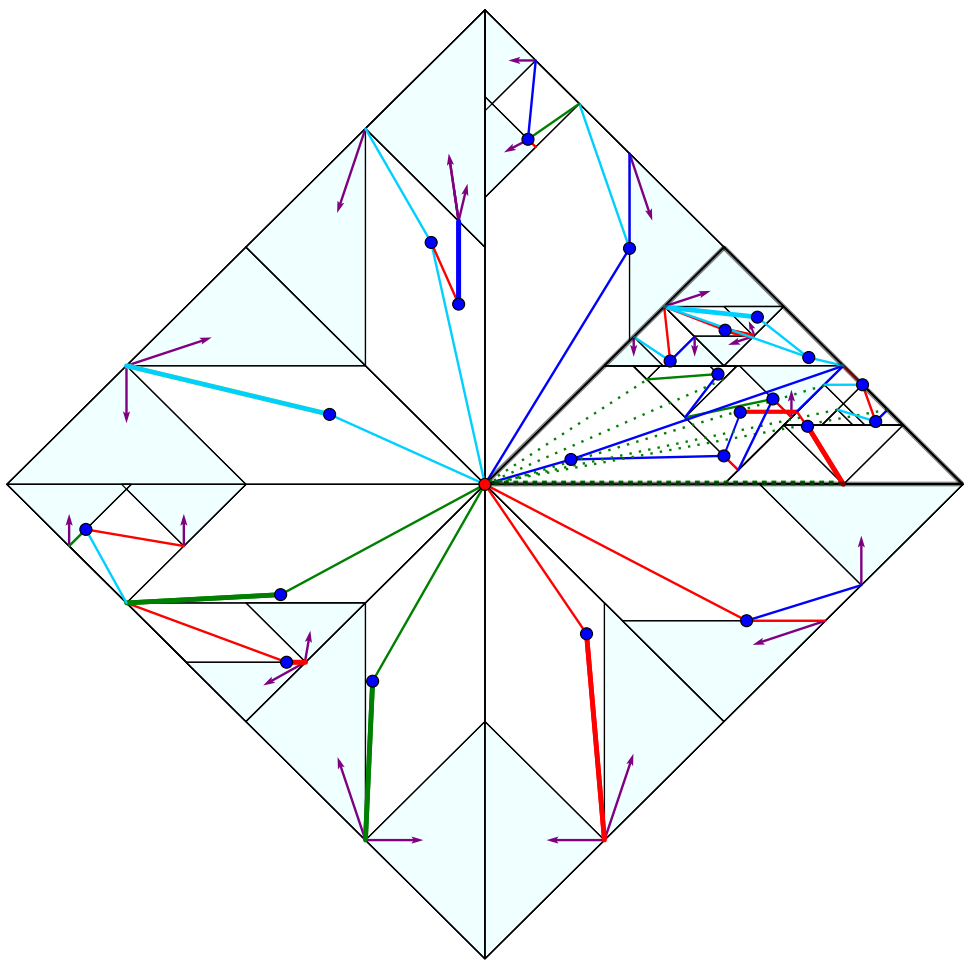}
  \caption{An illustration of our construction by applying the first
    steps of the inductive \cref{lem:T}. In these representation, blue
    triangles correspond to region where sleeping robots are not depicted. An
    arrow outgoing from a vertex $X$ in a triangle indicates that all
    the sleeping robots of this one will be woken up by the awake robots
    at position $X$. A thick edge indicates that two awake robots
    follow the same path. The bold triangle is the initial triangle
    where \cref{lem:T} is invoked. Dotted edges indicates the move
    of some robots to the origin. After the robots in the first triangle are
    awakened, seven awake robots come back to the origin to wake up
    the seven other triangles (using again \cref{lem:T}).}
  \label{fig:ex}
\end{figure}


\section{Detailed proof of \cref{th:time}}
\label{sec:linear-time}

The goal of this section is to prove \cref{th:time}. 

\TheoremLinearTime*

For this purpose, let us introduced two simple strategies:
\heapstr and \splitconestr. These strategies apply to
$(\bbR^2,\eta)$, for any norm $\eta$.

The positions of robots are represented by a point set
$P = \set{p_0, p_1, \dots, p_n}$ in the unit $\eta$-disk, where
$p_0 = (0,0)$ is the position of the awake robot, and $p_i$ is the
position of the $i$th sleeping robot, $i\in\range{1}{n}$.

\heapstr consists in building a minimum heap (binary) tree $H$ for
$\range{p_1}{p_n}$ where the key of $p_i$ is the distance from $p_0$
to $p_i$, i.e., $\eta(p_0-p_i)$. The wake-up tree rooted at $p_0$ is
then composed of $H$ itself, plus the edge connecting $p_0$ to the
root of $H$ (its top element), i.e., the closest point from
$p_0$. Using the well-known ``build-heap'' and ``heapify'' routines,
$H$ and thus the wake-up tree can be constructed in time $O(n)$.


\heapstr has the interesting property of constructing, in time
$O(n)$, a \emph{non-decreasing} wake-up tree for $P$: each robot is
always woken up by a robot which is closer to $p_0$ than it itself
is. In other words, for each edge $(p_i,p_j)$ of the tree, where $p_i$
is the parent of $p_j$, $\eta((p_0,p_i)) \le \eta((p_0,p_j))$. See
\cref{fig:heap}. As we will see, this strategy is efficient (it achieves
a low makespan) whenever $P$ is contained in a region of small width,
e.g., inside a parallelogram whose height is much smaller than its
length.

\begin{figure}[htbp!]
  \centering%
  \includesvg[width=0.6\textwidth]{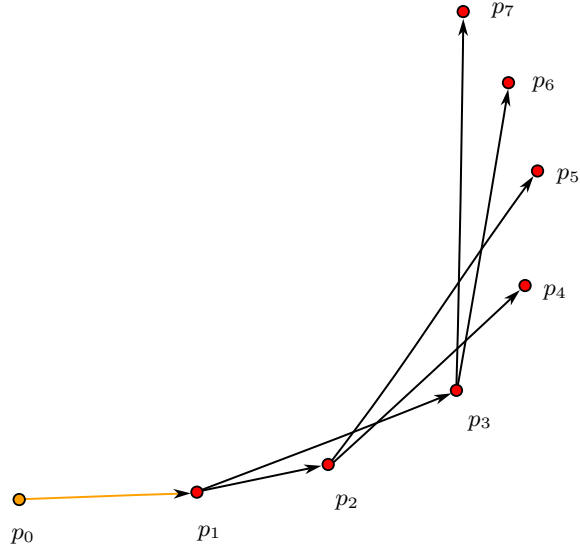}%
  \caption{The \heapstr applied to $p_0,p_1,\dots,p_7$, here in convex
    position. The resulting wake-up tree has the non-decreasing
    property (here w.r.t. $\ell_2$). Building the tree can be done in
    time $O(n)$, thus even faster than sorting or computing a convex hull.}
  \label{fig:heap}
\end{figure}

This property leads to a first application.

\begin{restatable}{proposition}{PropositionHeap}\label{prop:heap}
  If the points of $P$ are on a line, then an optimal wake-up tree for
  $P$ can be computed in $O(n)$.
\end{restatable}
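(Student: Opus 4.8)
The plan is to reduce the problem to one dimension and solve it exactly. Since $p_0 \in P$ and all points are collinear, I place $p_0$ at the origin and identify the line with $\bbR$; along a fixed line the $\eta$-distance between two points is proportional to the difference of their scalar coordinates, so the instance is purely one-dimensional up to a scaling factor. The origin splits the sleeping robots into a \emph{right ray} and a \emph{left ray}. Let $R$ and $S$ be the distances from $p_0$ to the farthest right and farthest left robot, and let $r_1$ and $s_1$ be the distances to the nearest right and nearest left robot (a missing ray contributing $0$, its treatment being then trivial). I claim the optimal makespan is exactly
\[
  M ~=~ \min\big(\max(R,\,2r_1+S),~\max(S,\,2s_1+R)\big),
\]
and that a wake-up tree attaining it is computable in $O(n)$.

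For the lower bound, first I would note that in any wake-up tree the root $p_0$ has a single child $q$, the first robot awakened, and every other robot descends from $q$; hence every robot's trajectory begins with the segment from the origin to $q$. Suppose $q$ lies on the right ray, so its coordinate is at least $r_1$. The farthest left robot is reached by some descendant whose trajectory runs $\text{origin}\to q\to\cdots\to -S$; since it first travels to $q$ and only then reaches $-S$, its length is at least $2\,|q|+S\ge 2r_1+S$. Meanwhile the farthest right robot is reached by a trajectory of total length at least $R$. Thus the makespan is at least $\max(R,\,2r_1+S)$. The symmetric argument handles the case where $q$ lies on the left ray, and the case of a single robot covering both extremes is subsumed by these bounds. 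In all cases the makespan is at least $M$.

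For the matching upper bound and the linear-time construction, I would use the \heapstr on each ray. The key observation is that on a single ray a \emph{non-decreasing} tree is monotone along every root-to-leaf path, so the edge lengths telescope and each robot is reached in time equal to its own distance from the entry point. Concretely, to realize the first term of $M$ (the other being symmetric), the root moves to the nearest right robot $r_1$ and there splits into two robots. I build a min-heap keyed by distance to $p_0$ on the right robots other than $r_1$ and hang its root under $r_1$; by monotonicity every right robot is then woken within time $R$. I build a second min-heap on all left robots, rooted at the nearest one $-s_1$, and hang it under $r_1$ as its second child; crossing the origin costs an offset of $2r_1$, so every left robot is woken within time $2r_1+S$. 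This respects the degree constraints, as $p_0$ keeps its single child $r_1$ and $r_1$ gets exactly the two children initiating the two rays. Taking the better of the two directions yields makespan $M$, and since the four extremal distances and the two \texttt{build-heap} calls are all linear, the whole tree is produced in $O(n)$ time.

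The hard part will be the lower bound: one must argue that, whatever the first awakened robot is, one ray is necessarily cleared only after a backtracking detour of cost $2(\text{near})+(\text{far})$, and that splitting at the \emph{nearest} robot of the chosen ray is optimal. Once this is granted, the upper bound is a direct consequence of the telescoping behaviour of the non-decreasing wake-up tree on each ray; this is precisely what lets \heapstr achieve optimality in linear time, rather than the $\Theta(n\log n)$ one might expect from having to order the points along the line.
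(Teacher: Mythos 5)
Your proof is correct and takes essentially the same approach as the paper: your closed form $M=\min\bigl(\max(R,2r_1+S),\,\max(S,2s_1+R)\bigr)$ coincides exactly with the paper's lower bound $\min_{u\in\{a,b\}}\{\eta(p_0-p_u)+M(u)\}$, and your construction (one \heapstr heap per ray, the far ray's heap hung under the nearest robot of the chosen ray, taking the better of the two sides) is precisely the paper's. The only difference is cosmetic: you obtain the lower bound by bounding directly over the position of the first awakened robot $q$, whereas the paper first argues that an optimal tree may be assumed to start with an edge to the nearest robot of one of the two rays; both arguments are sound.
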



\begin{proof}
  Let $L$ be the line of $(\bbR^2,\eta)$ containing $P$. By removing
  $p_0$ from $L$ split $(p_1,\dots,p_n)$ into two sets: $A$ and
  $B$. Let $p_a$ (resp. $p_b$) be the closest point of $A$ (resp. $B$)
  from $p_0$. And, let $p_{a'}$ (resp. $p_{b'}$) be the farthest point
  of $A$ (resp. $B$) from $p_0$.

  Observe that an optimal wake-up tree can always be transformed into a
  wake-up tree $T$ with same makespan whose first edge is
  $p_0 - p_u$ for some $u \in \set{a,b}$. This is because ``jumping''
  over $p_a$ or $p_b$ cannot improve the makespan. Then, from
  $p_u$, there must be a branch in $T$ that reach $p_{a'}$ and
  $p_{b'}$, leading to a branch (from $p_u$) of length at least
  $\max\set{\eta(p_u-p_{a'}), \eta(p_u-p_{b'})}$. In other words, the
  minimum makespan is at least
  \begin{equation}\label{eq:line}
    \min_{u\in\set{a,b}} \set{ \eta(p_0 - p_u) + M(u) },
    \qquad\mbox{where } M(u) = \max_{v\in \set{a',b'}} \set{\eta(p_u-p_v)}
  \end{equation}
  We construct a wake-up tree with such makespan using
  \heapstr as follows. Apply \heapstr to the subset $A$, with root
  $p_a$, giving a wake-up tree $T_A$. Since $T_A$ is
  non-decreasing, and $p_a$ is an endpoint of $A$, the makespan of $T_A$
  is precisely $\eta(p_a-p_{a'})$, $p_{a'}$ being the second endpoints
  of $A$. Similarly, applying \heapstr to $B$ gives a wake-up tree $T_B$
  with root $p_b$. This takes time $O(n)$.
\Arnaud{Vraiment pas fan de la notation $p_i - p_j$, qui se lit comme un moins.}
  From $T_A$ and $T_b$, we can construct a first wake-up tree $T'_A$
  for $P$ by connecting $T_A$ and $T_B$ with the edges $p_0 - p_a$ and
  $p_a - p_b$. This gives a valid wake-up tree with makespan
  $\eta(p_0-p_a) + \max\set{ \eta(p_a-p_{a'}), \eta(p_a-p_{b'}) }$,
  which is exactly $\eta(p_0-p_a) + M(a)$. Similarly, we can construct
  a second tree $T'_B$ for $P$ by connecting $T_A$ and $T_B$ with the
  edges $p_0- p_b$ and $p_b - p_a$. This gives a valid wake-up
  tree with makespan $\eta(p_0-p_a) + M(b)$. By taking the best of
  $T'_A$ and $T'_B$, we obtain
  in time $O(n)$ a wake-up tree of root $p_0$ of makespan
  $\min_{u\in\set{a,b}} \set{\eta(p_0-p_u) + M(u)}$, which is exactly
  the lower bound in Eq.\eqref{eq:line}.
\end{proof}

Note that \heapstr could replace efficiently the
\greedystr discussed in~\cite{SABM04,KLS05}: nearest sleeping robot
is awakened first\footnote{There are variants that depend on how
  conflicts between robot are resolved.}. The latter runs in
time $O(n^{2 - 2/(\ceil{d/2}+1) + \eps})$ for
$(\bbR^d,\ell_2)$, see~\cite{SABM04}, thus time
$O(n^{1+\eps})$ for $d\in \set{1,2}$. In fact, even for $d = 1$,
\greedystr requires
$\Omega(n\log{n})$ time for points on a line, by a simple reduction
from sorting $n$ numbers. It was proved that, for $d=1$, the
\greedystr leads to a $4$-approximation \cite[Th.~3]{SABM04} and that
the approximation ratio is at least $4-\eps$
(cf.~\cite[Th.~1]{SABM04}). \Arnaud{Thus, ?}

A second, and more important application is when the points of $P$
are in a \emph{cone}.
The \emph{unit circle}, w.r.t. the $\eta$-norm, is the boundary of the
unit disk. Let $\pi(\eta)$ be the \emph{half-circumference} of the
unit circle. So, the number $\pi \approx 3.14$ is nothing else than
$\pi(\ell_2)$. We known from  Go{\l}{\c{a}}b's Theorem that
$\pi(\eta) \in [3,4]$, both bounds being attained
for affinely regular hexagons and parallelogramms, respectively. (E.g.,
see~\cite[Th.4I-4K, pp.27]{Schaffer76}).
%
Given two points $A,B$ of the unit circle, denote by $\arc(A,B)$
the part of the circle that is traversed anti-clockwise
from $A$ to $B$ on the circle. The \emph{length} of $\arc(A,B)$ is
$|\arc(A,B)| \in [0,2\pi(\eta))$, measured in the $\eta$-norm.
Given a real $w \in [0,2\pi(\eta))$, and a point $A$ of the unit
circle, define $\cone(A,w)$ as the region of the plane composed of all
the points of the segments $[OX]$, where $X$ is the point of the unit
circle when going anti-clockwise from $A$ and such that
$|\arc(A,X)| = w$.
The value $w$ is called the \emph{arc-length} of $\cone(A,w)$. If
$\eta = \ell_2$, the arc-length of a cone corresponds to its angle.
See \cref{fig:cone-def}.

\begin{figure}[htbp!]
  \centering%
  \hspace{16ex}\includesvg[width=0.6\textwidth]{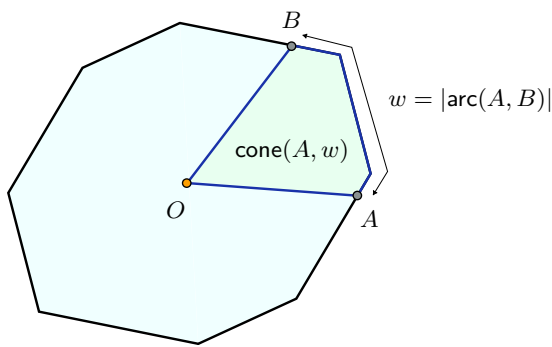}%
  \caption{The unit $\eta$-disk and $\eta$-circle for an arbitrary norm $\eta$,
    here given by a symmetric affinely octogon. The region in
    light-green is $\cone(A,w)$, with arc-length
    $w = |\arc(A,B)| \in [0,2\pi(\eta))$. Note that in general, two
    cones with same arc-length $w$, say $\cone(A,w)$ and
    $\cone(A',w)$, cannot be obtained from each other by a rotation
    around $O$.}
  \label{fig:cone-def}
\end{figure}

We have:

\begin{proposition}\label{prop:heap-cone}
  If $P$ is contained in a cone of arc-length $w$, then \heapstr
  constructs in time $O(n)$ a wake-up tree for $P$, rooted at the
  origin, with makespan at most $1 + w\floor{\log_2{n}}$.
\end{proposition}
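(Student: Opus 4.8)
The plan is to exploit two features of the tree produced by \heapstr: its combinatorial shape (a binary heap, hence of small height) and its non-decreasing property (the distance to the origin increases along every root-to-leaf branch). First I would recall that a binary heap on $n$ nodes is a complete binary tree, so its height is exactly $\floor{\log_2 n}$; consequently every branch of $H$ uses at most $\floor{\log_2 n}$ edges, to which we prepend the single edge $p_0 \to q_0$ joining the origin to the closest point $q_0$ (the heap's root). Since $q_0$ lies in the unit disk, this first edge has length at most~$1$.

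The heart of the argument is a per-edge geometric estimate. Consider an edge of $H$ from a parent $p$ to a child $q$, and write $r = \eta(p)$, $r' = \eta(q)$ for their distances to the origin $p_0=(0,0)$; by the non-decreasing property $r \le r'$, and $r' \le 1$ since $q$ is in the unit disk. Writing $p = r X_p$ and $q = r' X_q$ with $X_p, X_q$ on the unit circle (both inside the length-$w$ arc defining the cone), I would decompose
\[
  p - q ~=~ r(X_p - X_q) + (r - r') X_q,
\]
and bound, by homogeneity and the triangle inequality,
\[
  \eta(p - q) ~\le~ r\,\eta(X_p - X_q) + (r' - r)\,\eta(X_q) ~=~ r\,\eta(X_p - X_q) + (r' - r).
\]
Because a straight segment is a geodesic in any norm, the chord $\eta(X_p - X_q)$ is at most the length of the sub-arc joining $X_p$ to $X_q$, which itself lies inside an arc of length~$w$; hence $\eta(X_p - X_q) \le w$ and every edge satisfies $\eta(p - q) \le r w + (r' - r)$. (The degenerate case where $p$ is the origin, so $X_p$ is undefined, is harmless: then $r = 0$ and the bound reads $\eta(p-q) = r' = r'-r$.)

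Finally I would telescope along a single branch $q_0, q_1, \dots, q_k$ of $H$, with $k \le \floor{\log_2 n}$, setting $r_i = \eta(q_i)$. Including the first edge, the total length of the branch is at most
\[
  r_0 + \sum_{i=1}^{k}\bigl(r_{i-1} w + (r_i - r_{i-1})\bigr) ~=~ r_k + w\sum_{i=1}^{k} r_{i-1} ~\le~ 1 + w k ~\le~ 1 + w\floor{\log_2 n},
\]
where the radial increments telescope to $r_k - r_0 \le 1$ and each $r_{i-1}\le 1$ bounds the angular sum by~$k$. Taking the maximum over all branches yields the claimed makespan, and the $O(n)$ running time is inherited directly from the build-heap routine used by \heapstr. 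The main obstacle — the step I would be most careful to state — is precisely this telescoping: an individual edge can genuinely be long (close to~$1$) when a parent near the origin wakes up a child near the circle, so a naive ``$w$ per edge'' bound is false. What rescues the estimate is that such a long edge is paid for by a large radial jump $r' - r$, and these jumps telescope so that their total contribution over a whole branch is at most~$1$; only the purely angular part $r_{i-1} w$ accumulates, and it is controlled because every radius is at most~$1$ and the branch has at most $\floor{\log_2 n}$ edges.
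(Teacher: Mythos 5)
Your proof is correct and follows essentially the same route as the paper's: a per-edge decomposition into a radial part plus an arc (angular) part, where the non-decreasing property of the heap tree makes the radial contributions telescope to at most $1$ along any branch, each arc contribution is at most $w$, and the heap's depth bounds the number of edges by $\floor{\log_2 n}$. The only cosmetic difference is that you move along the chord at the parent's radius and then radially outward, whereas the paper projects the parent radially onto the child's arc first; both yield the same edge bound $\eta(p-q)\le (r'-r)+w$.
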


\begin{proof}
  Assume $P \subset \cone(X,w)$ for some $X$ on the unit circle, and
  let $T$ be the wake-up tree produced by \heapstr. Let $\Gamma$
  be the arc of length $w$ from $X$, i.e., the intersection of
  $\cone(X,w)$ with the unit circle. Denote by
  $\lambda \Gamma = \set{ \lambda u \in\bbR^2 : u \in \Gamma }$ the
  arc $\Gamma$ scaled down by a factor $\lambda \in [0,1]$. Let
  $d(p) = \eta(p_0-p)$. By definition of the norm, every point $p$ on the arc of $q$
  satisfies $d(p) = d(q)$.

  Now, consider any edge $a-b$ of $T$, where $a$ is the parent of $b$. The
  homothetic arc of $\Gamma$ containing $a$ is $d(a) \Gamma$, whereas
  the arc containing $b$ is $d(b) \Gamma$. Let
  $a' = [Oa] \cap d(b) \Gamma$, the projection of $a$ along the
  segment $[Oa]$ on the arc of $b$. Similarly, let
  $b' = [Ob] \cap d(a) \Gamma$. See \cref{fig:cone-proof}.

  \begin{figure}[htbp!]
    \centering%
    \includesvg[width=0.8\textwidth]{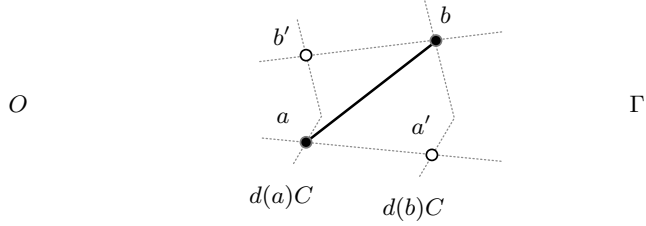}%
    \caption{Bounding the length of an edge $a-b$ of $T$ constructed
      by \heapstr for points in a cone (not represented) with apex
      $O$ and arc-length $|\Gamma|$.}
    \label{fig:cone-proof}
  \end{figure}
  
  W.l.o.g. assume $a \in d(a) \arc(X,b')$. The other case,
  $b' \in d(a) \arc(X,a)$ is similar. By the triangle inequality, we
  can bound the length of the edge $a-b$ by the length of the path
  $a-a'-b$. The latter is at most $|d(a') - d(a)| + |\arc(a',b)|$. We
  have $|\arc(a',b)| \le w$, and $d(a') = d(b)$ because $a'$ belongs
  to the arc of $b$. Moreover, since $T$ is non-decreasing,
  $d(a) \le d(b)$. Therefore, the length of $a-b$ is
  $|ab| \le d(b)-d(a) + w$.

  Consider any branch $(p_0,a_0,a_1, \cdots, a_h)$ of $T$. Its
  total length is bounded by:
  \[
    d(a_0) + \sum_{i=1}^h \pare{ d(a_i) - d(a_{i-1}) + w } ~=~ d(a_k) +
    wh ~\le~ 1 + w\floor{\log_2{n}} ~.
  \]
  Indeed, clearly, $d(a_k) \le 1$, and $h$ is the number of edges in
  the branch rooted at $a_0$. We conclude by the fact that $a_0$ and
  its descendants form a binary heap on $n$ elements, and thus of
  depth at most $\floor{\log_2{n}}$.
\end{proof}

In term of makespan, \heapstr is not optimal because it produces in
the wake-up tree branches that may zigzag in the cone, each turn
having possibly a cost of $w$ in the worst case. This can be corrected
using the \splitconestr. Roughly speaking, the strategy constructs
again a non-decreasing tree with the extra property that each subtree,
after the $i$ first steps, wakes up subcones whose arc-length becomes
exponentially smaller. This involves the golden ratio
$\varphi = (1+\sqrt{5})/2 \approx 1.61$.

\begin{proposition}\label{prop:split-cone}
  If $P$ is contained in a cone of arc-length $w$, then \splitconestr
  constructs in time $O(n\log{n})$ a wake-up tree for $P$, rooted at
  the origin, of makespan at most $1 + \varphi w$.
\end{proposition}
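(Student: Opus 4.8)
The plan is to refine \heapstr so that, along every root-to-leaf branch, the arc-lengths of the cones charged to successive edges shrink geometrically by a factor $1/\varphi$, so that their contributions telescope to $\varphi w$ instead of to the $w\floor{\log_2 n}$ of \cref{prop:heap-cone}. Throughout I keep the two features of \heapstr that drive its analysis: the tree is \emph{non-decreasing} (a robot only ever moves radially outward), so the radial contributions along any branch telescope to the radial distance $d(\cdot)$ of its leaf to the origin, which is at most~$1$; and, by the triangle-inequality argument of \cref{prop:heap-cone}, the length of an edge $a-b$ lying inside a cone of arc-length $u$ is at most $\big(d(b)-d(a)\big)+u$. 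The whole point is to shrink the $u$ appearing in this edge bound as the branch descends.

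First I would fix the combinatorial skeleton. Sort the points of $P$ by their arc-position along the bounding arc of the cone, in time $O(n\log n)$; this is the only superlinear step, since all later work is a partition of this sorted list. The recursion then operates on a sub-cone of arc-length $u$ together with the contiguous block of points it contains, assuming one robot enters the sub-cone along its leading ray. The robot recruits a helper, then splits the sub-cone by a single ray into a \emph{near} part of arc-length $u/\varphi$ (adjacent to the entry ray) and a \emph{far} part of arc-length $u/\varphi^2$; this is a legitimate partition precisely because $1/\varphi+1/\varphi^2=1$, equivalently $\varphi^2=\varphi+1$. One robot recurses on the near part with no sideways displacement, while the helper traverses an arc of length at most $u/\varphi$ to reach the leading ray of the far part and recurses there. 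Sub-cones holding at most one point are woken directly.

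Next I would set up and solve the makespan recurrence. Writing $M(u)$ for the worst-case arc-length charged along a branch of a sub-cone of arc-length $u$, the split is arranged so that every edge created at the current level lies inside the near sub-cone, hence is charged an arc at most $u/\varphi$, after which the binding branch descends into a sub-cone of arc-length at most $u/\varphi$ (the far branch, costing $u/\varphi+M(u/\varphi^2)=2u/\varphi$, is dominated). This yields $M(u)\le u/\varphi+M(u/\varphi)$, whose solution is the geometric series $M(u)=\tfrac{u}{\varphi}\sum_{i\ge 0}\varphi^{-i}=\tfrac{u}{\varphi}\cdot\varphi^2=\varphi u$, using $\sum_{i\ge 0}\varphi^{-i}=1/(2-\varphi)=\varphi^2$. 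Adding the radial contribution, bounded by $1$ as above, gives a total makespan of at most $1+\varphi w$.

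The hard part will be making the per-level arc charge genuinely $u/\varphi$ rather than the full $u$: one must guarantee that the recruited helper and every edge produced before the split lie in the near sub-cone of arc-length $u/\varphi$, and that the displacement needed to launch the far recursion is itself at most $u/\varphi$. This is exactly where the golden-ratio proportion of the split is forced, and where the interplay between recruitment, sideways motion, and the two recursive calls has to be accounted for with care, since a naive ``wake the nearest point first'' could place the recruitment edge at the far angular end and cost up to $u$. A secondary obstacle is the exhaustive treatment of degenerate configurations (empty sub-cones, points straddling the splitting ray, very few points) so that the recurrence remains valid down to the base cases, together with verifying that the recursion depth and the per-level partition keep the total construction time at $O(n\log n)$.
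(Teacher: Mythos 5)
Your golden-ratio arithmetic is correct, but your plan has a genuine gap, and it is exactly the one you flag as ``the hard part'': with your \emph{static} split (near sub-cone of arc-length $u/\varphi$ anchored at the entry ray), you cannot simultaneously guarantee (i) that the recruitment edge is charged only $u/\varphi$ of arc, and (ii) that the tree stays non-decreasing. If the helper is the radially nearest remaining point of the whole sub-cone (which is what non-decreasingness requires), it may sit at the far angular end, so the recruitment edge---which is a common prefix of \emph{both} branches---can cost arc $u$; the near branch then has no budget left, since already $u + M(u/\varphi) = 2u > \varphi u$, before even counting the traversal back into the near part. If instead you recruit the nearest point of the near part only, then the far part may contain points radially closer to the origin than the helper, the far branch starts by moving radially inward, and the radial contributions no longer telescope to $1$---the other pillar of your analysis, which you assert but which your design does not maintain. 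So the recurrence $M(u)\le u/\varphi+M(u/\varphi)$ is not established by the algorithm you describe, and the slack you observe in the far branch ($2u/\varphi<\varphi u$) is an artifact of assuming the hard part away.

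The paper escapes this tension with a structurally different rule, which is the missing idea: recruit the \emph{globally} nearest remaining point $a$ (keeping the tree non-decreasing), and then place the $1/\varphi$-fraction sub-cone $C$ \emph{adaptively so that it contains $a$}; the complement $C'$ has arc-length $u/\varphi^2$. The edge from $a$ to the nearest point of $C$ is charged arc at most $u/\varphi$ (both endpoints lie in $C$), while the edge from $a$ to the nearest point of $C'$ is allowed to cost the \emph{full} arc $u$---affordable precisely because $C'$ is the smaller piece. The resulting recurrence is $f(u)\le\max\set{u/\varphi+f(u/\varphi),\ u+f(u/\varphi^2)}$, and \emph{both} branches balance at $f(u)=\varphi u$; the golden ratio is the unique split ratio making them equal, whereas in your accounting only the near branch is tight. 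Consequently the paper also sorts the points by $\eta$-distance to the origin rather than by angular position: the greedy ``nearest remaining point in the current sub-cone'' rule drives the construction, and the angular structure of the splits is determined adaptively by where those nearest points fall.
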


\begin{proof}
  Assume $P \subset \cone(X,w)$ for some $X$ on the unit circle, and
  let $\Gamma$ be the arc of length $w$ from $X$. Let
  $c = \varphi - 1 = 1/\varphi \approx 0.61$, where
  $\varphi = (1+\sqrt{5})/2$ is the golden ratio.

  The wake-up tree for $P$ is constructed as follows (\cref{fig:cone-split}).
  As for \heapstr, the first edge connect $p_0$
  to its closest (w.r.t. $\eta$-norm) sleeping robot at a
  position, say $a \in d(a) C$, where $d(a) = \eta(p_0-a)$. We then
  split the current $\cone(X,w)$ into two subcones $C$ and $C'$
  defined as follows: $C$ contains $a$ and its arc-length is $cw$,
  whereas $C'$ of arc-length $(1-c)w$ is the complementary cone of $C$
  in $\cone(X,w)$. Then, from $a$, the wake-up tree continues in
  parallel in $C$ and in $C'$, and connects $a$ with the closest point
  $b \in C$ and the closest point $b' \in C'$. The process continues
  recursively from $b$ within the subcone $C$, and from $b'$ within
  the subcone $C'$, according to the same rule of splitting: the
  current subcone $C$ or $C'$ being subdivided according to the ratio
  $c$ or $1-c$, the fraction $c$ containing the current point. We
  repeat the process until all the points have been spanned.

  \begin{figure}[htbp!]
    \centering%
    \includesvg[width=0.8\textwidth]{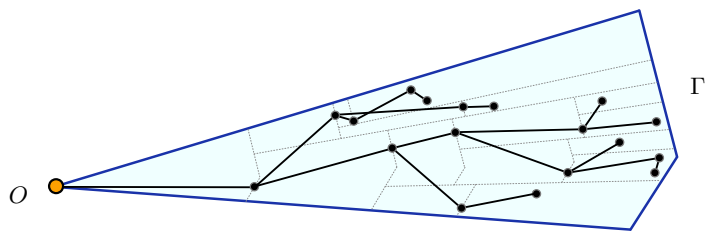}%
    \caption{Illustration of the \splitconestr, procuding
      non-decreasing wake-up trees. Each arc homothetic to $\Gamma$
      going thru a point $p_i$ is split into two sub-arcs, and defines
      two subcones: one of arc-length
      $(1/\varphi)|\Gamma| \approx 0.61 |\Gamma|$ (containing $p_i$)
      and its complementary of arc-length $\approx 0.39 |\Gamma|$.}
    \label{fig:cone-split}
  \end{figure}

  It is easy to check that the corresponding tree $T$ is
  non-decreasing and can be computed in time $O(n\log{n})$, as it
  requires to sort all the points according to the $\eta$-distance
  from $p_0$.

  It remains to analyze the makespan of $T$. Consider an edge $(a,b)$ of
  $T$, $a$ the parent of $b$. Using the triangle inequality (as we did
  in the proof of \cref{prop:heap-cone}), we can bound the length of
  $(a,b)$ by a contribution on the segment $[OX]$ and a contribution on
  the arc-lengths. Due to a telescopic sum ($T$ is non-decreasing),
  the total contribution on the segment $[OX]$ sums up to at
  most~$1$. For the contribution in the arc-length, we can proceed by
  induction. Assume that $a$ is in a subcone of arc-length $x\le w$,
  and denote by $f(x)$ the maximum arc-length contribution for a
  branch starting from $a$ to any leaf of $T$.

  Let us show that the arc-length contribution $f(x)$ fulfills the
  equation:
  \begin{equation}\label{eq:f(w)}
    f(x) ~\le~ \max\set{ cx + f(cx), x + f((1-c)x) } ~.
  \end{equation}
  Indeed, there are two cases.

  \begin{itemize}

  \item If $b \in C$, i.e., $b$ belongs to the same subcone of $a$,
    then the arc-length contribution for $(a,b)$ is at most $cx$ plus a
    contribution for any branch starting from $b$ to a leaf of $T$,
    which is by induction $f(cx)$ since all the descendents of $b$ in
    $T$ will be in $C$, a subcone of arc-length $cx$.
    
  \item If $b\in C'$, i.e., $b$ belongs to the complement subcone of
    $a$ with arc-length $(1-c)x$, then the arc-length for $(a,b)$ is at
    most $x$ plus the contribution for any branch starting from $b$
    which is $f((1-c)x)$ since $b$ belongs to $C'$.

  \end{itemize}

  By induction and by plugging $c = 1/\varphi$, it is not difficult to
  check that $f(x) < \varphi x$. Indeed, the first term of
  Eq.\eqref{eq:f(w)} gives,
  $cx + f(cx) < x(1/\varphi + 1) = \varphi x$. And, the second term
  gives, $x + f((1-c)x) < x(1 + (1-1/\varphi)) = \varphi x$. So, both
  terms are satisfied.

  The makespan of the \splitconestr is therefore at most
  $1 + f(w) < 1 + \varphi w$.
\end{proof}

The \splitconestr has an interesting corollary:

\begin{proposition}\label{prop:gamma_n}
  For any norm $\eta$, and every $n\in\bbN$,
  \[
    \gamma_n(\eta) ~<~ 3 +
    \frac{4\varphi\pi(\eta)}{\ceil{1+\sqrt{1+n}\,}} \quad
    \mbox{and}\quad \gme ~<~ 3 + \varphi\pi(\eta) ~\le~ 3 + 4\varphi
    ~.
  \]
\end{proposition}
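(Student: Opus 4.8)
The plan is to derive both inequalities from \cref{prop:split-cone}, which already clears a single cone, by covering the whole disk with several cones and recruiting a team that processes them in parallel. Throughout I use that a cone of arc-length $w$ is woken up with makespan \emph{strictly} below $1 + \varphi w$, and that the unit disk decomposes into cones whose arc-lengths sum to $2\pi(\eta)$.

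For the bound $\gme < 3 + \varphi\pi(\eta)$ I would split the disk into two opposite cones of arc-length $\pi(\eta)$. Starting from the single awake robot at the origin, I first wake the nearest sleeping robot and bring it back to the origin; since every point lies in the unit disk this costs at most $2$ time units and yields two robots at the apex, one per half-cone. Each half is then cleared by \splitconestr in time strictly below $1 + \varphi\pi(\eta)$, for a total strictly below $3 + \varphi\pi(\eta)$. Since this applies to every configuration of every $n$, we get $\gamma_n(\eta) < 3 + \varphi\pi(\eta)$ for all $n$; as the per-$n$ inequality forces $\gamma_n(\eta)\to 3$, the maximum defining $\gme$ is attained at a bounded $n$ where the estimate is strict, whence $\gme < 3 + \varphi\pi(\eta)$. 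The final estimate $\le 3 + 4\varphi$ then follows from $\pi(\eta)\le 4$.

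For the per-$n$ bound I would instead partition the disk into $k$ congruent cones of arc-length $w = 2\pi(\eta)/k$ and work in two stages. By pigeonhole the densest cone $D$ contains at least $\lceil n/k\rceil$ robots; I clear $D$ with \splitconestr (cost $< 1 + \varphi w$) and march the needed robots back to the origin (cost $\le 1$), assembling a team of size $1+|D|$ at the apex in time $< 2 + \varphi w$. Provided $|D|$ is large enough to furnish one robot per remaining cone, these robots clear the other $k-1$ cones in parallel, again in time $< 1 + \varphi w$. The two cone-clearings contribute $2\varphi w = 4\varphi\pi(\eta)/k$ and the radial moves contribute the additive $3$, giving a makespan $< 3 + 4\varphi\pi(\eta)/k$.

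It remains to take $k$ as large as possible subject to recruitment being feasible. The constraint is an integer inequality of the form $\lceil n/k\rceil \ge k - O(1)$, i.e. roughly $n \gtrsim k^2$, and solving it for the largest admissible $k$ yields the stated denominator $\ceil{1 + \sqrt{1+n}\,}$. I expect the main obstacle to be exactly this balancing: finer cones lower the clearing cost but demand a larger team, which the pigeonhole can only guarantee once $n$ is large enough relative to $k$, so the sweet spot is $k = \Theta(\sqrt{n})$; pinning down the precise ceiling and discharging the off-by-one effects so that the single clean expression falls out is the delicate part. A secondary point is keeping every estimate strict, which is inherited from the strict inequality $f(w) < \varphi w$ established in \cref{prop:split-cone}.
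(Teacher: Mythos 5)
Your proposal follows the paper's own proof essentially line for line: the same two-phase recruitment over $k$ equal cones cleared by \splitconestr for the per-$n$ bound, the same ``wake one robot, return to the origin, then clear the two half-disks in parallel'' argument for $\gme < 3+\varphi\pi(\eta)$, and the bound $\pi(\eta)\le 4$ for the last inequality. The half-disk part, the accounting $3+2\varphi w$, and your care about keeping the inequalities strict are all fine (the paper is in fact more cavalier than you on the strictness point).

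However, the one step you explicitly defer --- matching the recruitment constraint to the denominator $\ceil{1+\sqrt{1+n}}$ --- is a genuine gap, and it cannot be discharged in the form you state. Made explicit, your constraint is $\ceil{n/k}\ge k-2$ (one team member per remaining cone, counting $p_0$), which is equivalent to $n > k(k-3)$. The paper chooses $k$ to be the \emph{least} integer with $k(k-2)\ge n$, i.e. $k=\ceil{1+\sqrt{1+n}}$, and minimality only yields $n>(k-1)(k-3)$, which is strictly weaker than $n>k(k-3)$ once $k\ge 4$. A concrete failure: for $n=4$ one gets $k=4$, and if the four robots sit one per cone, phase one produces a team of only $2$ robots facing $3$ non-empty remaining cones (similarly $n=9$, $k=5$, distribution $(2,2,2,2,1)$); covering two cones with one robot costs an extra $\varphi w$ and overshoots the target $3+2\varphi w$. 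You should know that the paper's own write-up conceals exactly this issue by asserting that ``because $k(k-2)\ge n$, the densest cone contains at least $k-2$ sleeping robots'', which is the pigeonhole applied in the wrong direction: an upper bound on $n$ cannot lower-bound the occupancy of the densest cone. What both your sketch and the paper's argument actually prove is the statement with $k$ taken to be the \emph{largest} integer satisfying $k(k-2)\le n$, i.e. with denominator $\floor{1+\sqrt{1+n}}$ in place of $\ceil{1+\sqrt{1+n}}$; this still gives the $3+O(1/\sqrt{n})$ behaviour that matters for the paper's later applications, but the displayed proposition with the ceiling does not follow from this approach --- neither from your sketch nor from the paper's own proof.
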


\begin{proof}
  Let $k$ be the least integer such that $k(k-2) \ge n$. Since
  $k(k-2) = (k-1)^2 - 1$, this integer is
  $k = \ceil{1 + \sqrt{n+1}\,}$.

  In order to construct a low makespan wake-up tree, we shall use
  twice the \splitconestr as follows. We split the unit disk into $k$
  equal cones, so each with arc-length $w = 2\pi(\eta)/k$. At a first
  phase, we construct a wake-up tree in the densest cone by applying
  the \splitconestr.

  From \cref{prop:split-cone}, we obtain a wake-up tree of makespan at
  most $1 + \varphi w$. Because $k(k-2) \ge n$, the densest cone
  (among $k$) contains at least $k-2$ sleeping robots. When all of them
  are awakened, with have, with $p_0$, a total of $k-1$ awake robots
  (at least). For the second phase, we construct in parallel wake-up
  trees for the $k-1$ remaining cones thanks again to the
  \splitconestr.  Combining trees is possible, the number of awake
  robots contained in the wake-up tree\footnote{This number is
    precisely twice the number of leaves plus the number of vertices
    with one child.} during the first phase being at least the number
  of trees in the second phase. So, we can connect them.

  This leads to a wake-up tree of makespan less than
  $(1 + \varphi w) + 1 + (1+\varphi w) = 3 + 2\varphi w$. Plugging the
  values of $w$ and $k$, we get the claimed upper bound for
  $\gamma_n(\eta)$.

  To prove the second inequality, we construct a wake-up tree thanks
  to the following strategy: (1) wake-up any robot, and come back to
  the origin with two awake robots; and (2) wake-up in parallel each
  of the half-disk, that is a cone of arc-length $\pi(\eta)$, using
  the \splitconestr. The resulting makespan is less than
  $3 + \varphi\pi(\eta)$.

  The last inequality comes from the fact that $\pi(\eta) \le 4$, for
  every norm $\eta$.
\end{proof}

One can check for $\ell_2$-norm, by plugging $n = 528$ and
$\pi(\ell_2) = \pi$ in the equation of \cref{prop:gamma_n}, that
$\gamma_n(\ell_2) < 1+2\sqrt{2}$. Therefore, \cref{conj} --
\textit{It's always quicker to wake up $n$ robots than four} -- is
confirmed for $n \ge 528$.

The drawback of \splitconestr is that its construction does not take a
linear time. However, we can combined both strategies to get (almost)
the best of the both strategies. The resulting strategy, described in
the proof of \cref{prop:linear}, is called \linearstr.

\begin{proposition}\label{prop:linear}
  If $P$ is contained in a cone of arc-length $w$, the \linearstr
  constructs in time $O(n)$ a wake-up tree for $P$, rooted at the
  origin, with makespan at most $1 + \varphi w + o(w/n^{2/3})$.
\end{proposition}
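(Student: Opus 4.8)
The plan is to interpolate between the two strategies already analyzed: \heapstr builds in $O(n)$ time but pays an angular overhead of $w\floor{\log_2 n}$ (\cref{prop:heap-cone}), whereas \splitconestr pays only $\varphi w$ in arc-length but needs a full distance-sort, hence $O(n\log n)$ (\cref{prop:split-cone}). The idea of \linearstr is to run split-cone only on a coarse skeleton and to finish each piece with a heap, using a crossover parameter $k = k(n)$ to balance the two costs. Concretely, I would first partition $\cone(X,w)$ into $k$ subcones $C_1,\dots,C_k$ of equal arc-length $w/k$, bucketing each point into its containing subcone in $O(1)$ time, hence $O(n)$ overall; for each nonempty subcone $C_i$, let $r_i$ be its point closest to the origin, so that computing all the $r_i$ also costs $O(n)$.

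Next I would run \splitconestr on the skeleton $\{r_i\}$ (at most $k$ points, all lying in the arc-length-$w$ cone). By \cref{prop:split-cone} this produces, in $O(k\log k)$ time, a non-decreasing tree in which every branch has radial contribution at most $1$ and angular contribution at most $\varphi w$. Then, from each $r_i$, I would grow a local \heapstr tree spanning the remaining points of $C_i$; since build-heap is linear, all these local trees are built in $O(n)$ total time, and each is non-decreasing with $r_i$ as its closest (hence root) point. Splicing the local heaps onto the skeleton at the $r_i$ yields the wake-up tree, and the whole construction runs in $O(n) + O(k\log k)$ time, which stays $O(n)$ as long as $k\log k = O(n)$, for instance with $k = \Theta(n/\log n)$.

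For the makespan, I would track the radial and angular contributions separately along a full branch, which consists of a skeleton part followed by one local-heap part. Both parts are non-decreasing in distance and meet at an $r_i$ that is the closest point of $C_i$, so the radial contributions telescope to at most $1$ across the junction. The skeleton contributes at most $\varphi w$ in arc-length by \cref{prop:split-cone}, while a local heap, living in a subcone of arc-length $w/k$, contributes at most $(w/k)\floor{\log_2 n}$ by the same edge-by-edge bound as in \cref{prop:heap-cone}. Hence the makespan is at most $1 + \varphi w + (w/k)\floor{\log_2 n}$, and with $k = \Theta(n/\log n)$ the last term is $O(w(\log n)^2/n) = o(w/n^{2/3})$, as claimed. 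The main obstacle is not these estimates but the bookkeeping at the junctions: a skeleton node $r_i$ with two children in the split-cone tree has already committed both of its awake robots, so rooting the local heap of $C_i$ there requires either providing an extra robot or re-routing so that one of the two robots present at $r_i$ services $C_i$ while the skeleton is continued by the other. Making this splice respect the binary wake-up-tree structure and the count of available robots at every node, while keeping the construction linear, is the delicate part; once it is in place, the separate radial/angular accounting above yields the bound.
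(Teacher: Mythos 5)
Your high-level plan (a coarse split-cone skeleton plus local heaps, with a crossover parameter $k=\Theta(n/\log n)$) gives the right arithmetic for both the running time and the makespan, but the step you flag at the very end is not a bookkeeping detail: it is the actual content of the proposition, and in your design it fails. An internal node $r_i$ of the skeleton tree produced by \splitconestr already has out-degree two (it wakes the closest skeleton point in each of the two subcones it splits), so both robots available at $r_i$ are committed to the skeleton; there is no third robot to start the heap of $C_i$, and a wake-up tree cannot have three children at a node. The natural repairs all break your estimates. If you insert the heap root $h_i$ of $C_i$ between $r_i$ and one of its skeleton children $r_j$, nothing forces $h_i$ to be closer to the origin than $r_j$, so the resulting edge can go radially inward; the radial contributions then no longer telescope to $1$, and each such insertion can add a constant to a branch, with possibly many insertions per branch. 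If instead $C_i$ is served later by a robot that has finished its skeleton subtree, that robot must travel back across an arc that can be as wide as $w$, destroying the $\varphi w$ term. And if you reserve capacity by giving each skeleton node only one skeleton child, the skeleton degenerates to a path and the split-cone analysis collapses entirely.

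The paper resolves exactly this difficulty by decomposing by \emph{distance} rather than by \emph{angle}. It takes $A$ to be the $K=\ceil{n/\log_2 n}$ points closest to the origin, runs \splitconestr on all of $A$ (still $O(K\log K)=O(n)$ time), and buckets only the far points $B=P\setminus A$ into subcones of arc-length $s$, with $s$ tuned (of order $w/K^{\log_2\varphi}$) so that every subcone containing a point of $B$ also contains a \emph{leaf} of $T_A$. Each far heap $T_{B_i}$ is then attached to such a leaf. This makes the splice legal for two reasons at once: a leaf has no children, hence spare capacity; and every point of $B$ is farther from the origin than every point of $A$, so the attachment keeps the whole tree non-decreasing, preserving both the radial telescoping to $1$ and the arc bound $\varphi w + s\floor{\log_2 n} = \varphi w + o(w/n^{2/3})$. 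In short, the two constructions differ in where split-cone is applied (the $K$ nearest points versus one representative per angular bucket), and that difference is precisely the missing idea; to salvage your version you would need an analogous device guaranteeing spare capacity and monotonicity at every junction, so as written the proof is incomplete at its crucial step.
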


\begin{proof}
  Assume $P \subset \cone(X,w)$ for some $X$ on the unit circle. Let
  $K = \ceil{n/\log_2{n}}$, and define $A \subset P$ composed of the
  $K$ closest points from $p_0$, breaking tie arbitrarily, and let
  $B = P\setminus A$. Using heap-sort, one can construct $A$ (and $B$)
  in time $O(n + K\log{n}) = O(n)$.

  We apply, the \splitconestr on $A$ that, from
  \cref{prop:split-cone}, produces in time $O(K\log{K}) = O(n)$ a
  wake-up tree $T_A$. We also subdivide $\cone(X,w)$ into $\ceil{w/s}$
  consecutive subcones, each of arc-length $s$, a number that will be
  fixed later. Let $C_i$ be the $i$th such subcones,
  $i \in \range{1}{\ceil{w/s}}$. We compute the sets
  $B_i \subset B \cap C_i$, the set of points of $B$ that fall into
  the subcone $C_i$ (breaking tie arbitrarily). Note that some $B_i$
  may be empty. We apply the \heapstr, independently for each
  $\set{p_0}\cup B_i$, set that is contained in $C_i$ (if not
  empty). This produces at most $\ceil{w/s}$ wake-up trees, one tree
  $T_{B_i}$ for each $\set{p_0}\cup B_i$. From \cref{prop:heap-cone},
  all the trees $T_{B_i}$ can be constructed in time
  $\sum_{i=1}^{\ceil{w/s}} O(|B_i|) = O(w/s) + O(\sum_i |B_i|) = O(w/s
  + n)$. We will require that $w/s = O(n)$.

  It remains to combine $T_A$ and $T_{B_i}$ trees. For that, we update
  each tree $T_{B_i}$ by removing its root $p_0$. Now, the arc-length
  $s$ is chosen large enough such that, if $B_i \neq\emptyset$, then
  at least one leaf of $T_A$ falls into $C_i$. Then, for each tree
  $T_{B_i}$, we connect its new root (the closest point of $B_i$ from
  $p_0$, since $p_0$ is not anymore in $T_{B_i}$) to any leaf of $T_A$
  that belongs to subcone $C_i$. This leads to the willing wake-up
  tree $T$ for $P$.

  From the analysis of the \splitconestr in the proof of
  \cref{prop:split-cone}, an edge of $T_A$ of depth $k$ leads to
  subcones of arc-length at most
  $t = \max_{i+j=k} (1/\varphi)^i (1-1/\varphi)^j w$. This is because
  at each edge, either the arc-length of the current cones is
  multiplied by a factor $(1/\varphi)$ or $(1-1/\varphi)$. Because
  $1/\varphi > 1-1/\varphi$, it follows that $t = w/\varphi^k$.  The
  depth of $T_A$, that spans $K$ points, is at least
  $\floor{\log_2{K}}$. So, the subcones of maximal depth and
  containing any leaf of $T_A$ are of arc-length most
  $w/\varphi^{\floor{\log_2{K}}} \le 2w/K^{\log_2{\varphi}}$. By
  choosing $s = 4w/K^{\log_2{\varphi}}$ (so twice larger), we ensure
  that the final subcones of maximal depth and containing any leaf of
  $T_A$ is contained is some subcones $C_i$ of arc-length $s$. We
  check also that $w/s < K^{\log_2{\varphi}} < K = O(n)$ as required.

  It remains to bound the makespan of $T$. Note that $T$ is
  non-decreasing. Therefore, the radius contribution of any branch is
  at most~$1$. For the arc-contribution, this is at most $\varphi w$
  for $T_A$, and then at most $s\floor{\log_2{n}}$ for $T_{B_i}$ (by
  \cref{prop:heap-cone} and \cref{prop:split-cone}). In total, the
  makespan is at most
  \begin{eqnarray}\label{eq:linear-mk}
    1 + s\floor{\log_2{n}} &\le& 1 + w\varphi + \frac{4w}{K^{\log_2{\varphi}}}
                                 \floor{\log_2{n}} ~\le~  1 + w\varphi
                                 + 4w \frac{\floor{\log_2{n}}}{\ceil{n/\log_2{n}}^{\log_2{\varphi}}}\\                       
                           &\le& 1 + w\varphi + o\pare{\frac{w}{n^{2/3}}}
  \end{eqnarray}
  noting that $\log_2{\varphi} \approx 0.69 > 2/3$.
\end{proof}

We are now ready to proof \cref{th:time}. Let us recall its
statement.

\TheoremLinearTime*

\begin{proof}
  Similarly to \cref{prop:gamma_n}, we can construct in time $O(n)$ a
  wake-up tree for $P$ with makespan at most $3 + c/\sqrt{n}$ for some
  constant $c$ large enough. Indeed, one can split the unit disk into
  $\sqrt{n}$ cones, each of arc-length $w = 2\pi(\eta)/\sqrt{n}$, and
  wake up the densest one. Then, using the \linearstr
  (\cref{prop:linear}) in this cone, containing $m \ge \sqrt{n}$
  points, we can wake up $m$ robots with a makespan (remember that
  $\pi(\eta) \le 4$):
  \begin{eqnarray*}
    1 + \varphi w + o(w/m^{2/3}) &=& 1 + \varphi w + o\pare{
                                     \frac{2\pi(\eta)}{\sqrt{n}}\bigg/ m^{2/3}} \\
                                 &=& 1 + \varphi w + o(n^{-5/6}) ~.
  \end{eqnarray*}
  Coming back to the origin, and repeating in parallel the \linearstr
  for all cones with some sleeping robots (at most $\sqrt{n}$ cones), we
  can complete the waking up. The time to build all these trees is
  $\sum_{i=1}^{\sqrt{n}} O(n_i) = O(\sqrt{n}\,) + O(\sum_i n_i) =
  O(n)$, where $n_i$ is the number of sleeping robots in the $i$th
  cone. The makespan of the construction is
  \begin{equation}\label{eq:n0}
    3 + 2\varphi w + o(n^{-5/6}) ~<~ 3 + \frac{26}{\sqrt{n}} + o(n^{-5/6})
    ~\le~ 3 + \frac{c}{\sqrt{n}}
  \end{equation}
  for a constant $c>26$ large enough (using the facts that
  $\pi(\eta) \le 4$ and that $8\varphi < 13$). Actually, $c$ can be
  precisely determined from Eq.\eqref{eq:linear-mk} in the proof of
  ~\cref{prop:linear}. The lowest order term in
  Eq.\eqref{eq:linear-mk} is $\le 4w$, for a single application of
  \linearstr. So, after two applications of the strategy, and plugging
  $w = 2\pi(\eta)/\sqrt{n} \le 8/\sqrt{n}$, we get a makespan of
  $3 + 26/\sqrt{n} + 8w \le 3 + (26+64)/\sqrt{n}$. Thus, $c \le 90$ is
  enough.

  Now, assume that $\tau > 3$ and $\tau \ge \gme$. Compute the least
  integer $n_0 \ge (c/(\tau-3))^2$. Note that $n_0$ is a fixed
  constant, independent of $n$. 

  \begin{itemize}

  \item If $|P| = n \ge n_0$, then we can apply the previous strategy
    providing a makespan that is less than
    $3 + c/\sqrt{n} \le 3 + c/\sqrt{n_0} \le \tau$ by Eq.\eqref{eq:n0}
    and by the choice of $n_0$.

  \item If $|P| = n < n_0$, then we can brute force for finding an
    optimal wake-up tree whose makespan is at most $\gme$ by
    definition of $\gme$. This is also at most $\tau$ by the choice of
    $\tau$. The number of wake-up trees we have to consider in a brute
    force algorithm is at most $n! \le n_0! = O(1)$, and checking the
    makespan of each of these trees costs $O(n) = O(1)$.

  \end{itemize}

  In both cases, we have constructed a wake-up tree in time $O(n)$ and
  with makespan $\le \tau$ as required. This completes the proof.
\end{proof}

We note that Eq.\eqref{eq:n0} in the proof of \cref{th:time} implies
an $(3+o(1))$-approximation running in time $O(n)$. A similar result
was already proved in~\cite[Th.~1]{ABGHM03}. However, our
construction, based on cones, gives a better second order term, namely
$O(1/\sqrt{n})$, whereas the $o(1)$ term given in the proof
of~\cite[Th.~1]{ABGHM03} is $\Omega(\log{n}/n^{1/4})$.


\section{Experiments}
\label{sec:experiments}

\def\unif{\mathrm{unif}}

We have done some experiments, and we have computed numerically, by a
brute force algorithm\footnote{Code available on demand to the
  authors. \Cyril{Astuce, on ne peut pas se citer soit même car c'est
    une soumission anonyme. On verra si accepté.}} the minimum
makespan for points that are equally distributed on the unit
circle. \cref{table:exp} shows the results for $\ell_2$-norm, but
results for other norms are available. We observe that, for this
distribution, the optimal makespan denoted by $\unif(n)$ are essentially
decreasing with $n$, for a given parity and $n\ge 4$, with some
exceptional cases.

\Cyril{Je voulais faire idem pour $\ell_1$, mais les points uniformes
  en $L1$ ne le sont pas en $L2$! (qui tombent en dehors de la boule
  $L1$.) Les résultats donnaient $>3$ dans le cas $n=3$ en $L1$,
  absurde. J'ai enlevé les résultats pour $n=0,1,2,3$. La raison est
  que pour $n=3$, on ne trouve pas $3$, mais $1+\sqrt{3} = 2.732$ ce
  qui est normal, mais plutôt troublant.}

\Cyril{La table ne tient pas horizontalement, j'ai essayé :((}

\begin{table}[!htbp]
  \def\f#1{\fbox{#1}}
  \def\b#1{\textbf{#1}}
  \begin{center}
    \begin{tabular}{ccc|ccc}
      & $n$ &&& $\unif(n)$ &\\
      \hline
      \hline
      &  4 &&& \b{3.828} &\\
      \hline
      &  5 &&&    3.351  &\\
      &  6 &&&    3.732  &\\
      &  7 &&&    3.431  &\\
      &  8 &&&    3.613  &\\
      &  9 &&&    3.416  &\\
      & 10 &&&    3.520  &\\
      & 11 &&&    3.383  &\\
      & 12 &&&    3.449  &\\
      & 13 &&&    3.349  &\\
      & 14 &&& \f{3.454} &\\
      & 15 &&&    3.318  &\\
      & 16 &&&    3.443  &\\
      & 17 &&& \f{3.331} &\\
      \hline
      \hline
    \end{tabular}
    \caption{Optimal makespan $\unif(n)$ (numerical approximation) for
      $n$ points equally distributed on the unit disk, with
      $\ell_2$-norm. Boxed values are exceptional cases such that
      $\unif(n) > \unif(n-2)$ and $n\ge 4$.}
    \label{table:exp}
  \end{center}
\end{table}

The exceptional cases imply that one the two following quite
reasonable statements is wrong: (1) the wake-up ratio is attained for
points that are equally distributed on the unit circle; (2) for every
$n\ge 4$, $\gamma_{n}(\ell_2) > \gamma_{n+2}(\ell_2)$.


\section{The Exact Value of
  \texorpdfstring{\boldmath{$\gamma_4(\eta)$}}{g4(eta)}}
\label{sec:gamma4}


\TheoremGammaFour*

\begin{proof}
  Consider a set of four points, $X = \set{A,B,C,D}$ (the sleeping
  robots), taken in the unit $\eta$-disk, and let $O = (0,0)$ be the
  origin, where the awake robot is placed.

  \myparagraph{Lower bound.} To show that
  $\gamma_4(\eta) \ge 1 + \lme$, assume that $X$ forms the largest
  parallelogram inscribed in the unit $\eta$-disk. Note that points
  are on the boundary of the unit disk. Any wake-up tree rooted at $O$
  and spanning $X\cup \set{O}$ must have a branch with at least three
  edges, say $e_1,e_2,e_3$. The first edge $e_1$ has
  length~$\eta(e_1) = 1$ since all points of $X$ are on the boundary
  of the unit disk. The next two edges $e_2,e_3$ must be taken among
  the $\binom{|X|}{2} = 6$ segments of $X$ (defined by any pair of
  points in $X$), namely
  $e_1,e_2 \in \set{s_1,s_2,s_3,\bar{s_1},\bar{s_2},\bar{s_3}}$, where
  $(s_1,s_2,\bar{s_1},\bar{s_2})$ correspond to the four consecutive
  sides of the boundary of $X$, and $s_3,\bar{s_3}$ correspond to the
  two diagonals of $X$. Because $e_2$ and $e_3$ must be consecutive
  segments of $X$ (say $e_2 = (A,B)$ and $e_3 = (B,C)$ for instance),
  we have $e_2 \in \set{s_i,\bar{s_i}}$ and
  $e_3\in \set{s_j,\bar{s_j}}$ for some $i \neq j \in
  \set{1,2,3}$. Clearly, $\eta(s_i) = \eta(\bar{s_i})$ and
  $\eta(s_i) \le \eta(s_3) = 2$. Because we want to lower bound
  $\eta(e_2)+\eta(e_3)$, we can assume that $i,j < 3$, i.e., $i=1$ and
  $j=2$ or the reverse. We conclude with the fact that
  $\eta(s_1) + \eta(s_2) = \lme$, and thus
  $\eta(e_1) + \eta(e_2) + \eta(e_3) \ge 1 + \lme$.

  \myparagraph{Upper bound.} It remains to prove
  $\gamma_4(\eta) \le 1 + \lme$.

  We will use the following facts.

  \begin{fact}\label{fact:1}
    If $C_1 \subset C_2$ are two convexes, then the perimeter of the
    boundary of $C_1$ is less than the perimeter of the boundary of
    $C_2$ (see \cite[Th.~4C p.~25]{Schaffer76} for instance).
  \end{fact}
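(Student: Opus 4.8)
The plan is to prove the monotonicity of the $\eta$-perimeter under convex inclusion from first principles, relying only on the triangle inequality of $\eta$. Write $L(C)$ for the $\eta$-perimeter of the boundary of a convex body $C$, understood as the $\eta$-length of the boundary curve, equivalently the supremum of $\sum_i \eta(v_{i+1}-v_i)$ over all convex polygons with vertices $v_1,\dots,v_k$ (and $v_{k+1}=v_1$) inscribed in $\partial C$. The single elementary ingredient I would use is that in any normed plane a straight segment is a shortest path: any polygonal chain joining $x$ to $y$ has $\eta$-length at least $\eta(y-x)$, which is just the triangle inequality applied repeatedly.

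First I would treat the special case where the inner set is a polygon. Let $P$ be a convex polygon with $P \subseteq C$, let $\ell_1,\dots,\ell_k$ be the lines carrying the edges of $P$, and let $H_1,\dots,H_k$ be the closed half-planes containing $P$, so that $P = \bigcap_{i} H_i$. I would cut $C$ successively by these lines, each time keeping the side containing $P$: set $C^{(0)} = C$ and $C^{(i)} = C^{(i-1)} \cap H_i$. Each $C^{(i)}$ is again convex, and passing from $C^{(i-1)}$ to $C^{(i)}$ replaces one boundary arc of $C^{(i-1)}$ (the part cut off by $\ell_i$, a curve joining the two points where $\ell_i$ meets $\partial C^{(i-1)}$) by the straight chord of $\ell_i$ between those same two endpoints. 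By the shortest-path property above, this replacement cannot increase the $\eta$-perimeter, hence $L(C^{(i)}) \le L(C^{(i-1)})$. After all $k$ cuts, $C^{(k)} = C \cap \bigcap_i H_i = P$, so $L(P) \le L(C)$.

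I would then obtain the general statement by a limiting argument. For $C_1 \subseteq C_2$, every convex polygon $P$ inscribed in $\partial C_1$ satisfies $P \subseteq C_1 \subseteq C_2$, so the polygon case gives $L(P) \le L(C_2)$. Taking the supremum over all such inscribed polygons and using the defining characterization of $L(C_1)$ as exactly this supremum yields $L(C_1) \le L(C_2)$.

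The main point to get right is the single-cut inequality, and in particular the claim that $\ell_i$ meets the boundary of the current convex body in precisely two points, so that the cut-off arc and its replacing chord are well defined; this is where convexity is genuinely used, since it guarantees that the intersection with a half-plane is again a convex body whose boundary splits cleanly into one arc and one chord. The other subtlety is the strict inequality asserted in the Fact: the cutting argument gives directly the non-strict form $L(C_1)\le L(C_2)$, which is all the subsequent use requires, whereas strictness is not automatic for every norm (under $\ell_1$, for instance, a strict inclusion may leave the perimeter unchanged, since many chords meet the triangle inequality with equality). I would therefore either invoke strict convexity of $\eta$ — under which a strict inclusion forces at least one chord to be genuinely shorter than its arc — or simply record the result as $L(C_1)\le L(C_2)$ and note that equality occurs only in such degenerate configurations, in line with the classical statement cited in \cite{Schaffer76}.
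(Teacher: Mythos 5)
Your proposal is correct, but it is worth noting that the paper does not actually prove this Fact: it is imported wholesale from Schaffer's monograph (\cite[Th.~4C p.~25]{Schaffer76}), so your argument is a self-contained substitute for that citation rather than a variant of an in-paper proof. Your route --- first the polygonal case, cutting the outer body successively by the half-planes supporting the inner polygon, each cut replacing a boundary arc by its chord, which cannot increase length because a straight segment is a shortest path under any norm; then the general case via the characterization of the perimeter of $C_1$ as the supremum of perimeters of inscribed polygons --- is the standard proof of perimeter monotonicity in Minkowski planes, and it is sound: it uses nothing beyond the triangle inequality, so it holds for every norm $\eta$, which is exactly the generality the paper needs (the Fact is invoked inside the proof of \cref{th:gamma4}, which is stated for arbitrary norms). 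Your closing caveat is also well taken and is a point where you are more careful than the paper's own statement: the strict inequality asserted in the Fact is false in general --- in $\ell_1$, the triangle with vertices $(0,0)$, $(1,0)$, $(0,1)$ and the square $[0,1]^2$ both have perimeter $4$ despite strict inclusion --- but only the non-strict form $L(C_1)\le L(C_2)$ is ever used downstream (in \cref{fact:2}, and in Case~2b of the proof of \cref{th:gamma4}, where it is explicitly invoked as ``no more than''), so recording the weak inequality is both what your argument yields and all that the paper requires.
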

  
  \begin{fact}\label{fact:2}
    Any quadrilateral contained in the unit $\eta$-disk has
    half-perimeter at most $\lme$.
  \end{fact}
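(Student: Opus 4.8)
The plan is to reduce the statement to its extremal instance and then read the bound off from the definition of $\lme$. Write $K$ for the unit $\eta$-disk. Since $\eta$-perimeter is monotone under inclusion of convex bodies (\cref{fact:1}), and the extremal configuration is attained in convex position, I may assume the quadrilateral $Q=ABCD$ is convex. A standard extremality argument then puts all four vertices on $\partial K$: with the other three vertices fixed, the perimeter is a convex function of $A$ (a sum of norms of affine maps), so it is maximized at an extreme point of $K$. Recalling that $\lme$ is precisely the largest half-perimeter of a parallelogram inscribed in $K$, it suffices to show that a perimeter-maximizing inscribed quadrilateral may be taken with (at least) one diagonal through the origin.

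The observation that makes the endgame trivial is the following. Fix the diagonal $AC$ and set $g(z)=\eta(z-A)+\eta(z-C)$; then $B$ and $D$ lie on opposite sides of the line $AC$ and each maximizes the \emph{same} convex function $g$ over its side, while $g$ is symmetric about $m=\tfrac12(A+C)$. Suppose now the reduction is granted, i.e. $C=-A$. Then $g(z)=\eta(z-A)+\eta(z+A)$, and since $\eta(z)\le 1$ and $\eta(A)\le 1$ the definition of $\lme$ gives $g(z)\le\lme$ for \emph{every} $z\in K$. Consequently the two incident-side sums $\eta(B-A)+\eta(B-C)=g(B)$ and $\eta(D-A)+\eta(D-C)=g(D)$ are each at most $\lme$, so the half-perimeter $\tfrac12\pare{g(B)+g(D)}$ is at most $\lme$, as claimed. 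Note that this is where the constraint $\eta(z)\le 1$ is used crucially: for a diagonal \emph{not} through the origin the best one gets from the triangle inequality is $g(z)\le 2\lme$, which is off by the decisive factor of two.

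The main obstacle is thus the centering step, which is delicate precisely because $\eta$ need be neither smooth nor strictly convex, so a clean first-order argument is unavailable. I would establish it by symmetrization exploiting $K=-K$: among inscribed quadrilaterals of maximal perimeter, argue that the origin can be moved onto a diagonal, either by a continuity/exchange argument (sliding the chord carrying one diagonal toward $O$ while re-optimizing the opposite pair of vertices, and checking the perimeter does not drop) or by comparing $Q$ against the centered parallelogram $A,B,-A,-B$, whose half-perimeter $\eta(A-B)+\eta(A+B)$ is $\le\lme$ by definition. It is this ``centering loses no perimeter'' claim, rather than the final estimate (a one-line consequence of the definition of $\lme$), that I expect to be the crux. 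Finally, for the application in \cref{th:gamma4} one does not even need the sharp statement: it is enough that \emph{some} two sides meeting at a common vertex have total length at most $\lme$, and this follows by averaging, since summing the two incident sides over the four vertices counts each side twice, giving $2\cdot(\text{perimeter})\le 4\lme$ and hence a vertex with incident-side sum at most $\lme$.
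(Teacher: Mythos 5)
Your final step is fine, and it is the easy part: once one diagonal passes through the origin, say $C=-A$ with $\eta(A)\le 1$, the definition of $\lme$ gives $\eta(z-A)+\eta(z+A)\le\lme$ for \emph{every} $z$ in the unit disk, so both pairs of incident sides are bounded by $\lme$ and the half-perimeter bound follows. The genuine gap is that the reduction to this situation --- ``a perimeter-maximizing inscribed quadrilateral may be taken with a diagonal through the origin'' --- is the entire mathematical content of the fact, and you do not prove it; you explicitly defer it as ``the crux'' and offer only sketches. Neither sketch works as stated. Comparing $Q=ABCD$ with the centered parallelogram $A,B,-A,-B$ requires showing that this parallelogram has perimeter at least that of $Q$, which is (a form of) the statement being proved, and no inequality in that direction is given; the natural symmetrizations one writes down (e.g.\ replacing the vertices by half-differences) go the \emph{wrong} way, they decrease perimeter. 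The ``sliding'' argument is equally delicate: for a chord $AC$ not through the origin one can have $\max_{z\in K}\bigl(\eta(z-A)+\eta(z-C)\bigr)>\lme$ (in $\ell_2$, take $A=(1,0)$, $C=(0,1)$, $z=-(1,1)/\sqrt{2}$, giving about $3.70>2\sqrt{2}$), so there is no per-side or per-vertex quantity that can be monitored monotonically while sliding; the constraint that \emph{both} $B$ and $D$ lie in $K$ on opposite sides of the chord must be used globally, and your proposal never does so. In short, what you have proved is only: \emph{if} the extremal quadrilateral can be centered, \emph{then} the fact holds.

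Two further problems. First, the closing ``averaging'' remark is circular: the inequality $2\cdot(\text{perimeter})\le 4\lme$ that you average over the four vertices \emph{is} the statement of \cref{fact:2}; you cannot obtain the weaker per-vertex statement this way without already having the fact. (A direct per-vertex bound is false: in $\ell_2$, take $A,C$ near $(0,1)$, $B$ near $(0,-1)$, and $D$ in the small cap above the chord $AC$; the two sides incident to $B$ sum to nearly $4>2\sqrt{2}=\lme$, even though the quadrilateral is convex and the fact holds for it.) Second, it is not clear the weak statement would even suffice for the application: Case~1 of \cref{th:gamma4} uses the ``racquet'' strategy, which needs the half-perimeter bound, not merely one good vertex. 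Note finally that the paper's own justification goes through \cref{fact:1} (monotonicity of perimeter under inclusion of convex bodies) combined with the central symmetry of the disk, not through an extremal/variational argument, so your route is different in kind --- but as submitted it leaves the decisive step unproved.
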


  This is latter fact is a consequence of \cref{fact:1} and of the
  central symmetry of unit disk.


  It is well-known that in any set of five points contains four points
  in convex position (see~\cite{MS00a}). Note that in our setting,
  four points do not determine necessarily a quadrilateral since
  points are not necessarily in general position (and so some side may
  contain more than two points). Since we are concerned with the
  perimeter, for convenience, we will still call it a quadrilateral
  whereas we should speak about the four points on its convex hull.

  Let $Q$ be a subset of $X$ forming a quadrilateral, that is a convex
  having four points of its convex hull. There are two cases.
  

  \myparagraph{Case 1.} $O\notin Q$. In that case, we use the
  ``racquet'' strategy: $O$ goes to any point of $Q$ (at cost at most
  $1$); then in parallel, one robot turns clockwise and the other one
  anti-clockwise around the convex hull of $Q$ with an extra cost of
  half the perimeter of quadrilateral~$Q$. Overall the cost is at most
  $1 + \lme$ from \cref{fact:2}.

  \def\bA{\bar{A}}%
  \def\bB{\bar{B}}%
  \def\bC{\bar{C}}%


  \myparagraph{Case 2.} $O\in Q$. W.l.o.g. assume that $Q$ is $OABC$
  in this order. We have $D \notin Q$. Denote by $\bA,\bB,\bC$ be the
  opposite points of $A,B,C$ respectively, the symmetric points around
  $O$. There are two subcases.

  \myparagraph{Case 2a.} $C$ belongs to the convex hull of
  $A,B,C,\bA$. In that case the robot in $O$ goes to $A$ (at cost at
  most~$1$); then in parallel one robot goes to $D$ (with extra cost
  of~$2$), while the other goes to $B$ and then to $C$. The branch $(O,A,D)$ has
  length at most~$3 \le 1+\lme$ since $\lme \ge 2$. One can upper
  bound the length of the path $(A,B,C)$ by $\lme$. Indeed, we observe
  that, by translating the triangle $\bA\bB\bC$ by a vector $A-\bC$,
  one can form a parallelogram $ABC\bB$ that is contained in the unit
  disk (because it is included in the hexagon $ABC\bA\bB\bC$). It follows
  that the length of the path $(A,B,C)$, i.e., $\eta((A,B)) + \eta((B,C))$,
  is at most $\lme$. It follows that the length of the branch
  $(O,A,B,C)$ is at most $1 + \lme$.

  \myparagraph{Case 2b.} $C$ does not belong to the convex hull of
  $A,B,C,\bA$. It follows that $C$ is inside the triangle $AB\bA$. In
  that case, the robot in $O$ goes to $C$; then in parallel one robot
  goes to $D$ (with extra cost of $2$), while the other goes to $B$
  and then to $A$. The branch $(O,C,D)$ has length at
  most~$3 \le 1+\lme$ since $\lme \ge 2$. Remains to bound the length
  of the branch $(O,C,B,A)$. We first observe that $C$ is inside the
  subtriangle $OB\bA$. Indeed, $C$ cannot be inside the subtriangle
  $OAB$ since $Q$ is $OABC$ that is convex under Case~2
  hypothesis. The branch $(O,C,B,A)$ has a length that is bounded by the
  length of $(O,\bA,B,A)$. Indeed, by \cref{fact:1}, the triangle $OCB$
  has perimeter no more than the perimeter of the triangle
  $O\bA{}B$. We conclude with the fact that $AB\bA\bB$ is a
  parallelogram contained in the unit disk. Therefore, $(\bA,B,A)$ has
  length at most $\lme$. It follows that the length of the branch
  $(O,C,B,A)$ is at most the length of $(O,\bA,B,A)$ that is at most
  $\eta((\bA,O))$ plus the length of $(\bA,B,A)$, that is at most $1 + \lme$.
\end{proof}


\section{Proofs of~\cref{cor:ellp}}
\label{sec:proofs-cor:ellp}

\CorollaryEllp*

\begin{proof}
  The lower bound is a simple consequence of \cref{th:gamma4} and of
  the fact that $\Lambda(\ell_p) = 2^{1+\max(1/p,1-1/p)}$.

  For the upper bound, we use the inclusion of unit $\ell_p$-disk into
  $\elli$-disk, showing the well-known inequality
  $\ell_p(u) \le \elli(u) \cdot 2^{1/p}$, for every $u\in \bbR^2$.
  Moreover, by scaling and the inclusion of unit $\ell_p$-disk into
  $\ell_1$-disk, we have that
  $\ell_p(u) \le \ell_1(u) \cdot 2^{1-1/p}$. It follows that

  \begin{equation}\label{eq:ellp}
    \forall p \in[1,\infty],\quad \gamma(\ell_p) ~\le~ \min\pare{
      \gamma(\ell_1) \cdot 2^{1-1/p}, \gamma(\elli) \cdot 2^{1/p}} ~.
  \end{equation}

  We conclude with the fact that unit $\ell_1$-disk and unit
  $\elli$-disk have the same shape under rotation. So, we must have
  $\gamma(\ell_1) = \gamma(\elli)$, which is~$5$ by \cref{th:main}. The
  final, upper bound follows from Eq.\eqref{eq:ellp}.
\end{proof}

%
%


\section{Tightness of~\cref{lem:S5}}
\label{sec:tightS5}

\begin{proposition}\label{prop:13/6}
  There are six sleeping robots in a square of diameter~$1$ that
  requires a wake-up tree of makespan of at least $13/6$, if rooted at
  a corner.
\end{proposition}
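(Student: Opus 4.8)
My plan is to prove the lower bound by exhibiting the explicit six-robot instance of \cref{fig:counter_quarter} and showing that \emph{every} wake-up tree rooted at the corner~$p_0$ contains a branch of weighted length at least~$13/6$ (I would also check, for sanity, that some tree meets this value with equality, so that $13/6$ is the right target). The engine of the argument is a counting observation valid for any instance: since $p_0$ has a single child and every other node has at most two children, at most $1+2=3$ robots can lie at depth at most~$2$. With six robots, at least three of them are therefore at depth~$3$ or more, so every wake-up tree has a branch with at least three edges, namely an edge $p_0 x_1$ followed by two further edges. The whole difficulty is to show that, for this particular instance, the length of the longest such branch cannot be pushed below~$13/6$.

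I would then fix an arbitrary wake-up tree, call $x_1$ the robot woken first (the unique child of $p_0$), and analyse how the five remaining robots are distributed among the two subtrees hanging from $x_1$. Because $x_1$ has only two children, covering the five remaining robots at depth~$3$ forces one subtree to be a ``claw'' (a node $c$ with its two children $z_1,z_2$) and the other to be a two-edge path; every other shape either pushes a robot to depth~$4$ or beyond, or fails to span all robots. A short case analysis — reduced by the symmetry of the configuration — over the choice of $x_1$ and over the assignment of the remaining robots to the claw and to the path then lets me lower-bound the relevant branches, in particular the claw branch $\lvert p_0 x_1\rvert + \lvert x_1 c\rvert + \max_{j\in\set{1,2}}\lvert c\,z_j\rvert$, using the explicit $\ell_1$-distances of the instance. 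The intended reading is that the $+1/6$ over the value~$2$ attainable for five robots (\cref{lem:S5}) is exactly the price of the sixth robot: it cannot be inserted into the depth-$3$ tree realising makespan~$2$ without lengthening one branch.

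The main obstacle is ruling out the efficient ``two-for-one'' move that keeps most instances at makespan~$2$: in the $\ell_1$-norm a robot can leave a relay point and wake two far-apart robots while paying, on its longest branch, only the \emph{maximum} of the two distances rather than their sum, as long as some robot sits at a common relay lying on both monotonic routes. The configuration must be engineered so that, for the two extreme robots of the instance, no robot occupies such a relay; quantifying this — showing that every admissible claw or path is forced to cost at least $1/6$ more than the naive estimate — is the crux of the computation. I expect the delicate parts to be the bookkeeping over the finitely many admissible tree shapes (including unbalanced ones that place a robot deeper than depth~$3$) and the careful use of monotonicity together with the fact that crossing the far corner of the square is governed by a maximum rather than a sum of distances; by contrast the counting step and the reduction to ``claw-plus-path'' should be routine.
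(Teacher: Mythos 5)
Your proposal has a genuine gap: it never constructs the witness configuration, and for an existential statement that \emph{is} the proof. You appeal to ``the instance of \cref{fig:counter_quarter}'' and later concede that ``the configuration must be engineered so that \dots no robot occupies such a relay,'' and that quantifying this ``is the crux of the computation'' --- but that engineering and that quantification are exactly what is missing; everything you do supply (the counting of depths, the claw-plus-path shapes) is scaffolding around an empty core. For comparison, the paper's configuration is: $p_2 = B$ (the corner opposite the root $A$), $p_1$ \emph{on the side} $[AB]$ at distance $\eps$ from $B$, and four robots spread on the far side $[CD]$ pairwise at $\ell_1$-distance at least $2\eps$; since $|CD| = 1$, this packing forces $\eps \le 1/6$, which is precisely where $13/6 = 2 + 1/6$ comes from --- the numerology your proposal treats as a given target. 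The decisive geometric fact is that \emph{every} point of $[AB]$ is at $\ell_1$-distance exactly $1$ from \emph{every} point of $[CD]$ (two parallel segments of slope $1$), which is what blocks the ``two-for-one relay'' moves you correctly identify as the obstacle. The placement is delicate: if $p_1$ were instead at distance $\eps$ from $B$ but off $[AB]$ (say vertically below $B$), then $\mathrm{dist}(p_1,[CD]) = 1-\eps$ and the tree $A \to p_1$; $p_1 \to \set{B, c_3}$; $B \to \set{c_1,c_2}$; $c_3 \to c_4$ achieves makespan exactly $2$, so the statement would be false for that variant. Your proposal contains no criterion that would steer you to the correct placement.

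Separately, your intended case analysis is both heavier than the paper's and incomplete as described. The paper enumerates no tree shapes at all: it splits only on whether the first edge of the tree heads towards $\set{p_1,p_2}$ or towards $[CD]$. In the first case, at most three robots --- all positioned on $[AB]$ --- are awake before anyone heads to $[CD]$, so by pigeonhole some branch pays $\ge 1-\eps$ (first edge), then $\ge 1$ (crossing from $[AB]$ to $[CD]$), then $\ge 2\eps$ (visiting a second robot of $[CD]$), i.e.\ $\ge 2+\eps$. In the second case, waking both $p_1$ and $p_2$ before time $2+\eps$ forces both to be children of the first robot reached on $[CD]$, after which some other robot of $[CD]$ cannot be woken before time $3 > 2+\eps$. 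Both cases bound trees of \emph{arbitrary} depth and shape. Your claw-plus-path reduction, by contrast, covers only trees of depth exactly $3$, and deferring deeper trees to ``bookkeeping'' is not routine: in the $\ell_1$-norm a branch with many edges can be as short as a single edge (monotone paths cost nothing extra), so depth-$4$ trees cannot be excluded by edge counting --- they require distance arguments of exactly the kind the paper's two-case split provides uniformly and your plan does not.
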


\begin{proof}
  Let $ABCD$ be the vertices of the square of diameter~$1$, where
  $A = (0,0)$, $B = (1/2,1/2)$, $C = (1,0)$ and $D = (1/2,-1/2)$. The
  awake robot, the root, is placed at $p_0 = A$. Then, $p_2 = B$ and
  $p_1$ is at distance $\eps$ from $B$ for some $\eps \in [0,1/6]$.
  The four points $p_3,\dots,p_6$ are located on $[CD]$ such that they
  are pairwise at distance at least $2\eps$. This is possible if
  $\eps \le 1/6$. See \cref{fig:counter_quarter}. Observe that the
  distance between $[AB]$ and $[CD]$ is~$1$.

  \begin{figure}[htbp!]
    \centering%
    \includesvg{counter_quarter.svg}
    \caption{One optimal wake-up tree of makespan $13/6$ for $n = 6$
      sleeping robots in a square of diameter~$1$. \Cyril{Pour le
        dessin, j'ai viré les chemins $\ell_1$ qui me semblent
        rajouter de la confusion. En effet, pour le dessins
        \cref{fig:ex} par exemple, on n'utilise pas cette
        convention. Donc changer de convention à chaque dessin n'est
        pas cool. Et puis dire qu'on est dans le disque unité et de
        prendre des chemins qui n'y sont pas rajoute de la
        confusion.}}
    \label{fig:counter_quarter}
  \end{figure}
  Consider any wake-up tree $T$ for $p_1,\dots,p_6$ rooted at
  $p_0$. There are two cases:

  \begin{itemize}

  \item The first edge of $T$ starts in waking up some robots in
    $\set{p_1,p_2}$ before going to $\set{p_3,\dots,p_6}$. Then, after
    a time at least $1-\eps$, at most three robots are awake before
    going to $[CD]$ which contains four sleeping robots. Therefore, one
    of these sleeping robot will be wake up after an extra time of
    $1+2\eps$ since two robots of $[CD]$ are at distance at least
    $2\eps$. This gives a makespan for $T$ of at least
    $(1-\eps) + (1+2\eps) = 2 + \eps$.
    
  \item The first edge of $T$ starts in waking up some robots in
    $\set{p_3,\dots,p_6}$ before going to $\set{p_1,p_2}$. It follows
    that either $p_1$ or $p_2$ is wake up after time $2 +
    \eps$. Indeed, if $p_1$ and $p_2$ are woken up after a time
    $< 2 + \eps$, then $T$ must have the branches $(p_0,p_i,p_1)$
    and $(p_0,p_i,p_2)$. And, then one robot $p_k \in [CD]$,
    $k\neq i$, cannot be woken up in time better than~$3 > 2 + \eps$.

  \end{itemize}

  Overall, the makespan is at least $2+\eps$ that is $13/6$ if
  $\eps = 1/6$.
  %
  %
  %
  %
  %
\end{proof}

\end{document}